\documentclass[11pt,a4paper]{amsart}
\usepackage[a4paper,margin=25mm]{geometry}

\usepackage{amssymb}
\usepackage{graphicx}
\usepackage{xcolor}
\usepackage{booktabs}
\usepackage{capt-of}
\usepackage[numbers,square]{natbib}
\usepackage[colorlinks=true,linkcolor=blue!55!black,
  citecolor=blue!55!black,urlcolor=blue!55!black]{hyperref}

\makeatletter
\def\paragraph{\@startsection{paragraph}{4}%
  \z@\z@{-\fontdimen2\font}%
  {\normalfont\bfseries}}
\makeatother

\newcommand{\extversion}{short}
\usepackage{todonotes}
\usepackage{mathtools}
\usepackage{etoolbox}

\usepackage{csquotes}

\usepackage{mathpartir}
\usepackage{mdframed}

\usepackage{longtable}
\usepackage{array}

\usepackage{enumitem}
\usepackage{footmisc}

\newcolumntype{P}[1]{>{\raggedright\arraybackslash}p{#1}}

\usepackage{tikz-cd}
\usepackage{wrapfig}
\usepackage{fontawesome5}

\newcommand{\agdabaseurl}{https://nicolaikraus.github.io/extensiontypes-agda/html/index.html}
\newcommand{\formalized}{{\color{blue!55!black}{\raisebox{-0.5pt}{\scalebox{0.8}{\faCog}}}}}
\newcommand{\flinkurl}[1]{\href{#1}{\formalized}}
\newcommand{\flink}[1]{\flinkurl{\agdabaseurl\##1}}

\usepackage{comment}
\providecommand{\extversion}{short}
\newif\ifShorten
\ifdefstring{\extversion}{long}{\Shortenfalse}{\Shortentrue}
\ifShorten
  \excludecomment{longversion}
  \includecomment{shortversion}
\else
  \includecomment{longversion}
  \excludecomment{shortversion}
\fi

\usepackage{microtype}
\DisableLigatures[-]{family=tt*}

\usepackage[capitalize,noabbrev]{cleveref}

\newcommand{\Cofib}{\mathsf{Cofib}}
\newcommand{\Prop}{\mathsf{Prop}}
\newcommand{\UU}{\mathcal{U}}
\newcommand{\UUs}{\mathcal{U}^\textup{s}}

\newcommand{\colonequiv}{\mathrel{\vcentcolon\mspace{-1mu}\equiv}}
\newcommand{\defeq}{\colonequiv}

\newcommand{\cofib}{\hookrightarrow}

\newcommand{\zero}{\mathbf{0}}
\newcommand{\szero}{\mathbf{0}^\textup{s}}
\newcommand{\one}{\mathbf{1}}

\DeclareMathOperator{\id}{id}
\newcommand{\seq}{\mathrel{=^\textup{s}}}
\newcommand{\siso}{\mathrel{\simeq^\textup{s}}}

\newcommand{\poprod}{\mathbin{\widehat{\times}}}

\newcommand{\hfib}[1]{{#1}^{-1, \textup{h}}}
\newcommand{\sfib}[1]{{#1}^{-1, \textup{s}}}

\newcommand{\Ext}[3]{\left\langle #2 \middle|^{#1}_{#3} \right\rangle}

\DeclareMathAlphabet{\mathbbx}{U}{bboldx}{m}{n}

\newcommand{\bbMLTT}{\mathbbx{MLTT}}
\newcommand{\bbHoTT}{\mathbbx{HoTT}}
\newcommand{\bbCub}{\mathbbx{Cub}}

\newcommand{\extpapertitle}{Extension Types for Free}
\newcommand{\extpapersubtitle}{Cofibrations, Gluing, and Univalence in Two-Level Type Theory}

\newcommand{\Glue}{\mathsf{Glue}}
\newcommand{\glue}{\mathsf{glue}}
\newcommand{\unglue}{\mathsf{unglue}}
\newcommand{\rulename}[1]{\textup{\textsc{#1}}}

\theoremstyle{plain}
\newtheorem{theorem}{Theorem}[section]
\newtheorem{lemma}[theorem]{Lemma}
\newtheorem{proposition}[theorem]{Proposition}
\newtheorem{corollary}[theorem]{Corollary}
\newtheorem{conjecture}[theorem]{Conjecture}
\newtheorem*{lemma*}{Lemma}
\theoremstyle{definition}
\newtheorem{definition}[theorem]{Definition}

\theoremstyle{remark}
\newtheorem{remark}[theorem]{Remark}

\title{\extpapertitle: \\[.1cm] \extpapersubtitle}
\newcommand{\extpaperauthor}{Nicolai Kraus}
\author{\extpaperauthor}
\date{}
\hypersetup{%
  pdftitle={\extpapertitle: \extpapersubtitle},%
  pdfauthor={\extpaperauthor}%
}

\begin{document}
\begin{abstract}
\emph{Extension types} are a concept in dependent type theory that has appeared in various contexts.
The idea is to have types whose terms are partially determined, e.g.\ via a strict boundary condition.
Standard examples are path types of cubical type theories (paths with fixed endpoints), Riehl and Shulman's name-giving extension types (terms fixed on subshapes), as well as the controlled-unfolding mechanism of \texttt{cooltt} and \texttt{Agda} (terms that are fixed if a condition is met).
In each case, the type theory is equipped with a (meta-theoretic) \emph{face calculus}, or \emph{shape layer}, that governs their rules, and comes with intended semantics.

We unify all these occurrences in a single framework where no new axioms or model constructions are needed: Extension types come ``for free''.
The framework is two-level type
theory (2LTT), which enriches homotopy type theory (HoTT) with a layer of
strict equality. This step, too, is free (semantically): the standard models of
HoTT are automatically models of 2LTT, and the theory is conservative over HoTT.
Extension types are definable
rather than postulated, and the definition validates Riehl and Shulman's
entire extension-type calculus: the rules hold strictly, and the postulated axioms,
such as relative function extensionality, become theorems.
In this way, every model of the base theory (HoTT) gives rise to a model
of the same theory with extension types; the only genuine assumptions are
which maps count as cofibrations.

Conservativity makes the framework a tool for comparing type theories. We
prove that cubical gluing, in a suitable formulation, is equivalent to
univalence; in particular, this reverses the familiar implication from Glue
types to the univalence axiom. On this basis, we suggest an approach toward
the conjecture that cubical type
theories are conservative over book HoTT, one of the
central open problems of homotopy type theory.

All results of the main body of the paper are
auto-formalized in Agda \texttt{--two-level}, in a development that combines HoTT-internal arguments with reasoning that is external to HoTT.

\end{abstract}

\maketitle

\begin{longversion}
\tableofcontents
\end{longversion}

\section{Introduction}

\paragraph{Extension types.}
Extension types were introduced, under this name, by Riehl and Shulman
\citep[\S2.2]{RS2017}. The informal idea is simple: given a term that is only
defined on a \emph{subshape} of some shape, there is a \emph{type} of all the
ways of extending it to the whole shape. Cubical type theory provides the
basic example: if a type $A$ lives over an interval and a term $a$ of $A$ is
given only over the two endpoints, then the type of extensions of $a$ to the
whole interval is precisely a type of \emph{cubical paths}.

The cubical literature
(e.g.\ \citep{CCHM2018}) writes such types as $A[\varphi \mapsto u]$, where the face
formula $\varphi$ describes the part of a cube on which the partial element
$u$ is defined. Riehl and Shulman's own extension types, in their simplicial
type theory, generalize this from interval endpoints to arbitrary shape
inclusions. A further family of examples is \emph{proposition-indexed}: the
type $\{A \mid \varphi \mapsto a\}$ of elements of $A$ that coincide with a
given $a$ whenever the proposition $\varphi$ holds. This connective underlies
controlled unfolding \citep{gratzer2022controlling} and extent types
\citep{sterling2022first,sterling2021logical}; Zhang \citep{zhang2023three}
gives an overview and discusses a general framework. Note that the cubical
$A[\varphi\mapsto u]$ is itself an instance of this proposition-indexed
reading.

The first goal of this paper is to unify these examples. We take the position
that they are conceptually the same construction: each setting fixes a
\emph{face calculus}, a logical layer whose formulas describe boundaries
(that is, cofibrations), and this calculus determines which extension types
can be formed. The settings differ mainly in the expressiveness of their face
calculus. To capture all of them at once, we therefore choose the most
expressive boundary language available to us: type theory itself. A framework in
which this idea works out of the box is \emph{two-level type theory}
\citep{AltenkirchCapriottiKraus2016,2LTT}, abbreviated 2LTT; as we explain next, it is
semantically \emph{free}. Our goals are then to demonstrate that, in this free
framework, extension types and their entire theory are free and automatic
(the natural definition validates the rules and axioms that Riehl and Shulman
propose), and that the framework is strong enough to prove meta-theoretic
statements. In particular, we use it to establish a precise connection
between cubical \emph{gluing} and \emph{univalence}, which opens a route
toward a conservativity result.

\paragraph{Two-level type theory, and how it is free.}
2LTT combines two layers in a
single language: an \emph{inner} level, which is homotopy type theory (HoTT),
and an \emph{outer} (also: strict, or exo-) level, a type theory with a strict
equality $\seq$ satisfying UIP and function extensionality; on paper, one
may even take the outer level to be extensional Martin-L\"of type theory.
Every inner type is in particular an outer type, but not conversely; the
inner types and their families are called \emph{fibrant}. The outer level is
where our cofibrations live: it is the face calculus.

2LTT is essentially free in terms of semantics. The standard models of HoTT,
such as simplicial and cubical sets, are automatically models of 2LTT, and
more generally, every CwF-style model of HoTT gives rise to a model of 2LTT
via a presheaf construction \citep{2LTT,Capriotti2017}. The semantics moreover
extends to all Grothendieck $(\infty,1)$-toposes: Uskuplu \citep[\S4--\S5]{Uskuplu2025} shows that the
presheaf construction applies to the fibration CwF of any good model category
in the sense of \citet{LumsdaineShulman2020}, and by Shulman's theorem
\citep{Shulman2019Toposes} such model categories present all
$(\infty,1)$-toposes; as Kolomatskaia and Shulman summarize, the semantics of
2LTT is \enquote{not significantly less general than ordinary HoTT}
\citep{KolomatskaiaShulman2025}.

This freeness is a statement about semantics, not about syntax: a proof
assistant implementing HoTT does not automatically implement 2LTT, although
Agda's \texttt{--two-level} mode \citep{agdareadthedocs} exists; as with
any formal system, expressiveness trades against computational behavior.
Since our face calculus is the full outer type theory, its equality is
undecidable, and we give up the decidable boundary checking that CCHM's face
lattice and Riehl--Shulman's tope logic are designed to provide. We regard
this as a parameter rather than a defect: one can imagine defining the
framework relative to a chosen sub-universe $\Cofib$ of strict propositions
closed under the relevant operations (essentially a \emph{dominance} in the
sense of synthetic topology \citep{Rosolini1986,Escardo2004}, or a cofibration
classifier in the style of
Orton--Pitts \citep{OrtonPitts2018}), with the expectation that decidable
face calculi arise as instances. Making this parametrization precise is left
to future work; in the present paper, decidability plays no role.

\paragraph{Extension types for free.}
In 2LTT, the notion of extension type can be defined in a natural way, and all the rules one may usually put on top of the construction are provable.
Given a cofibration $i:\Phi\cofib\Psi$, a fibrant family $A$ over $\Psi$, and
a partial section $a:\prod_{\phi:\Phi}A(i\phi)$, the extension type is the
strict fiber of the restriction map $f \mapsto f \circ i$ over $a$, written
in the angle-bracket notation of Riehl and Shulman \citep{RS2017}:
\[
	\Ext{i}{\Pi_\Psi A}{a}\;\colonequiv\;
	\Sigma\bigl(f:\textstyle\prod_{\psi:\Psi}A(\psi)\bigr).\,(f\circ i\seq a),
\]
cf.\ \cref{def:ext-formal}. With this definition, the rules that Riehl and
Shulman postulate (introduction, elimination, $\beta$, $\eta$, and
the boundary rule) hold strictly, and their axioms and structural laws become provable theorems;
this includes statements such as relative function extensionality, an axiom in \citet{RS2017} but automatic in this
framework,
and the homotopy extension property (a realignment principle), all of which we have formalized in Agda in higher generality than the original paper suggests.
The caveats are two hypotheses on the outer level and one on the cofibration class.
The union law requires the cofibration class to be closed under binary meets: simplicial and cubical face formulas
have this closure property, arbitrary cofibrations do not; therefore, while this property still holds automatically in the situation analogous to what Riehl--Shulman consider, its generality is reduced compared to the rest of the development (see \cref{lem:rs45}).
Beyond that, \cref{lem:rs42} assumes that the pushout-product exists in the outer level and \cref{lem:rs45} that $\Phi\vee\Psi$ is an outer pushout; both are stated in the lemmas and parametrized in the formalization.

While the framework is free, its instances are not. That a given map \emph{is} a
cofibration remains an assumption, to be justified by inspecting the intended
models. 

Because extension types are definable rather than postulated, the framework
can \emph{prove} statements about them which axiomatic settings can only
assume or must verify model by model. Our main application concerns cubical
\emph{Glue} types: we formulate gluing inside the framework
(\cref{sec:gluing}) and prove that gluing and univalence imply each other in
a precise cycle of implications (\cref{sec:glue-ua}).
This, in turn, allows us to discuss
an approach toward an equivalence between a cubical type theory and book HoTT
(\cref{sec:conservativity}), a version of one of the main open problems of
homotopy type theory.

\paragraph{Agda auto-formalization.}

All results of \cref{sec:framework} to \cref{sec:glue-ua} are formalized using Agda with the \texttt{--two-level} flag~\citep{agdareadthedocs}.
The contents discussed in the conclusions section (\cref{sec:conservativity}) are not, as they discuss theories, translations and provable judgments, which a development working inside one such theory cannot state.
The formalization builds on Uskuplu's 2LTT library~\citep{2LTTAgda,Uskuplu2025} and type checks using
Agda~2.8.0.

The formalization was written by Claude Opus~4.8/5.
I (the paper author) made many requests to change structure and namings, but made only minor manual edits.
I manually checked that the formalized statements are readable and coincide with the paper statements.
The source code consists of approximately 5,700 lines of code (not counting comments and blank lines) and is available as \texttt{.agda} files that can be type-checked locally and as \texttt{.html} files, generated with \texttt{agda-html}, for browsing the formalization without installing Agda.
The code and its HTML rendering are respectively available at
\url{https://github.com/nicolaikraus/extensiontypes-agda} and
\url{https://nicolaikraus.github.io/extensiontypes-agda/html/index.html}.
The file \href{\agdabaseurl}{index.html} acts as an interface between the formal development
and the paper. In particular, all environments that are suitable for formalization in this paper are marked
with a \formalized{} symbol, which is a clickable link to the corresponding
formalized statement in the HTML rendering.

\paragraph{Contributions.}
\begin{itemize}[noitemsep]
	\item We provide a uniform framework that captures the
	various extension types occurring in the literature (\cref{sec:framework}).
	Semantically, the framework shows that the concept of extension types
	is free, in the sense discussed above; briefly, if we have a model of a type theory, we can automatically derive a model of the same theory with extension types.
	\item We demonstrate that this framework makes the rules of \citet{RS2017}
	automatic: the rules hold strictly and the axioms become theorems.
	\item We use the framework to study CCHM-style gluing and prove that it
	is, in a specific formulation, equivalent to univalence; this in particular
	reverses the usual direction that shows that gluing implies univalence
	(\cref{sec:glue-ua}).
	\item One of the main open problems of homotopy type theory is the connection between
	``book HoTT'', the original version of homotopy type theory~\citep{HoTTBook},
	and cubical type theories.
	Our equivalence between gluing and univalence in this 2LTT framework
	allows us to suggest an approach 
	toward this open problem (\cref{sec:conservativity}).
	\item In the existing literature, formalizations in proof assistants usually concern internal results (such as mathematical theorems) \emph{or} meta-theoretic properties (such as normalization of a system).
	The current paper combines both and, building on Uskuplu's 2LTT library \citep{Uskuplu2025}, we demonstrate that the two kinds can be combined in a 2LTT setting.
	All results before the conclusions section are auto-formalized in Agda, cf.\ the above paragraph.
\end{itemize}

\paragraph{Related work.}
\emph{Occurrences of extension types.}
The name and rule package are due to Riehl and Shulman \citep[\S2.2]{RS2017},
who credit unpublished work of Lumsdaine and Shulman; their extension types
are implemented in Rzk \citep{rzk,KRW2024}, and the corresponding lemmas can be found in the formalization~\cite{sHoTT}.
Strictly substitution-stable semantics is given by Weinberger \citep{Wei2022}.
Simplicial extension types drive
the ``HoTT-internal'' synthetic $(\infty,1)$-category theory \citep{WB2023,GWB2024,GWB2025,GWB2026,Toth2025};
these results live in standard homotopy type theory, and the extension types are internal definitions without strict boundary conditions. This can be understood as the degenerated version of our setting where the outer level is HoTT as well, and the conversion map between the levels is the identity.

On the
cubical side, partial elements and their extensions run from the first
cubical model \citep{BCH2014} through Glue types \citep{CoquandTalk,CCHM2018}
and Cartesian cubical extension types \citep{angiuli2019computational} to the
systems Cubical Agda \citep{CubicalAgda}, the two-level RedPRL \citep{RedPRL2018},
redtt \citep{redtt}, cooltt \citep{cooltt}, Aya \citep{aya}, and
Arend \citep{arend}. The proposition-indexed case appears as extent
types \citep{sterling2022first,sterling2021logical}, in controlled unfolding
\citep{gratzer2022controlling}, and in Zhang's account of non-cubical
applications \citep{zhang2023three}.

The FaceTT program of Sch\"onlank, Nuyts, and Devriese
\citep{facett_types_talk}, at the time of writing available only as a talk
abstract, aims to abstract the face-formula layer that these systems share, in
the direction opposite to ours. The present paper shows that in a two-level
setting no dedicated syntax for extension types is needed, and that the rules and
axioms hold automatically once a class of cofibrations is fixed, but it says
nothing about the computational behavior of the resulting system. FaceTT instead
keeps the interval, the faces, and Glue primitive, and parametrizes the syntax by
a \emph{shape theory}, with the goal of obtaining presheaf soundness and
computational properties such as reducibility of face formulas uniformly in that
parameter; this is the syntactic counterpart of the parametrization over a
sub-universe $\Cofib$ discussed above. 

\emph{Study of the tools we use.}
Two-level type theory is essentially Voevodsky's HTS \citep{HTS} plus Capriotti's observation that, when formulated carefully, the theory is conservative over (and thus a good tool to study) HoTT~\citep{Capriotti2017}.
It was introduced in \citet{AltenkirchCapriottiKraus2016}
and developed in detail in \citet{2LTT} as well as the cited thesis by Capriotti.
Our formalization relies on Agda's \texttt{--two-level}
mode \citep{agdareadthedocs} and on Uskuplu's library and results
\citep{2LTTAgda,Uskuplu2025}. Using 2LTT to study conservativity questions was
suggested in \citet{KdJ_representing}.

\emph{Building cubical models, gluing, univalence.}
Closest to our work is Orton--Pitts' internal axiomatization of cofibrations
and filling, used to build cubical models \citep{OrtonPitts2018}: roughly,
our framework is Orton--Pitts over a \emph{fibrant} inner layer, with the
conservativity route back to HoTT; see also
\citet{licata_et_al:LIPIcs.FSCD.2018.22} and \citet{ABCFHL2021}. The Glue-to-univalence
direction belongs to the equivalence-extension line \citep{Sattler2017};
simplicial univalence is proved by representability and extension arguments
\citep{KapulkinLumsdaine2021,GambinoHenry2022}; univalence is decomposed into
simpler principles in \citet{OrtonPitts2017Decomposing}. Shulman's
inverse-diagram gluing \emph{preserves} univalent universes
\citep{Shulman2015InverseDiagrams}, a semantic result rather than an internal type
former; Nuyts--Devriese \citep{NuytsDevriese2020Transpension} compare internal presheaf operators (Glue, Weld,
strictness).

\emph{Conservativity.}
For 2LTT, conservativity is applied to staged compilation by Kov\'acs
\citep{andras-staging} and proved in Bocquet's thesis
\citep[\S6.6]{BocquetThesis}.
Representing object theories
inside a framework follows the logical-framework tradition \citep{HHP1993}.

\section{Two-level type theory as a framework for extension types}\label{sec:framework}

Semantically, a \emph{two-level type theory} in the sense of the cited formulations \citep{HTS,2LTT,AltenkirchCapriottiKraus2016,Capriotti2017}
can be described as \emph{two} CwF structures (the inner and outer model) on a shared category of contexts, together with a
conversion morphism from the inner model to the outer model.
In this section, we give some background and prove 2LTT-internal strictification statements.

\subsection{Background: two-level type theory}
\label{sec:2ltt-recall}

This subsection recalls the two-level type theory (2LTT) of
\citet{2LTT}, which is inspired by Voevodsky's HTS \citep{HTS} and the observation that a careful formulation thereof is conservative over HoTT~\citep{Capriotti2017}.
2LTT is a type theory with two \emph{levels}, each a Martin-L\"of type
theory with its own type formers and universes.
The concrete instantiation we use has the following two levels:
\begin{itemize}[noitemsep]
	\item The \emph{inner} (also referred to as \emph{fibrant}) level is the type theory we are \emph{actually} interested in; in principle, this is simply ``standard'' HoTT.
	For the later comparison with cubical type theories, it is helpful to not explicitly require certain components of HoTT; we assume that this inner theory has intensional identity types $a=b$,
	$\Pi$, $\Sigma$, $+$, $\one$, $\zero$, $\mathbb{N}$, and
	hierarchy of universes $\UU_0:\UU_1:\cdots$ (and, of course, strengthening the theory with additional components such as higher inductive types would not invalidate any results).
	We assume inner function
	extensionality throughout and add univalence explicitly where used.
	This is where HoTT-style mathematics happens.
	\item The \emph{outer} (also referred to as \emph{strict} or \emph{exo}) level has the
	same formers and its own universes $\UUs_j$, but its identity type,
	the \emph{strict equality} $a\seq b$, satisfies UIP and function
	extensionality. 
	It behaves like the equality of a set-level metatheory:
	proof-irrelevant, but still an ordinary type former with the usual
	eliminator. On paper, it would be harmless to assume that it satisfies equality reflection; we do not do this (in order to match the formalization), but we sometimes treat coercions as silent.
	This level can be understood as an ``internalized partial meta-theory'' of the inner level, and it is where one
	reasons \emph{about} inner types, internally.
\end{itemize}
A \emph{conversion} $c$ maps inner types and terms to outer ones. It is assumed to preserve context extension which, as a consequence, means that it preserves $\Pi$ and $\Sigma$ (and the Unit type).
We assume that this preservation is on the nose, as in axioms (T1)--(T2) of
\citet{2LTT}, to avoid coercions.
We keep $c$ implicit and mix the levels
freely. Every inner type is thus in particular an outer type.
Instead of talking about converted inner types, it is convenient to use
\emph{fibrant presentations} (because having such a representation is closed under strict isomorphism; cf.~\cref{rem:no-T3}). A fibrancy witness for an outer type $B$ consists
of an inner type $B^\circ$ and a strict isomorphism
$\rho_B:B\siso c(B^\circ)$; we call $B^\circ$ its \emph{fibrant match}.
Strict constructions and strict equations are read
on $B$, while inner constructions, identity types, equivalences, and homotopy
properties are read on $B^\circ$, with $\rho_B$ inserted when passing between
the two. We normally suppress the matches and the maps $\rho_B$ from notation.
We annotate outer constructions with ${}^s$ and leave fibrant constructions unannotated.%
\footnote{The 2LTT literature is incoherent on the notation: In the original paper \citep{2LTT}, the inner/fibrant fragment is annotated; and in \citet{ahrens2025univalence} and other work, the outer level is annotated with ${}^e$ (for \emph{exo}).}
In particular, a fibrant
type $A$ carries \emph{two} notions of equality: the inner $a=b$ and the finer
strict $a\seq b$. The rules let us derive a coercion $\iota:(a\seq b)\to(a=b)$, but no map
in the other direction; more generally, the inner identity type eliminates
only into fibrant types, whereas $\seq$ eliminates into everything. This
is what makes the theory useful: strict data can always be
read as weak one, and some of our results establish which weak data can be strictified.

Outer types such as
$\mathbb{N}^{\mathrm{s}}$, or the strict intervals and shapes of cubical
and simplicial type theories, are not assumed fibrant: they carry no
inner identity type and no transport, and inner constructions cannot
eliminate into them. We sometimes call outer types \emph{shapes} when they play the role
of domains of extension types; the literature also knows them as \emph{pretypes} or \emph{pseudotypes}.
The interaction of the two levels is mostly captured by 
three notions from \citet[\S3]{2LTT}:
First, a \emph{fibration} is a map all of whose strict fibers are fibrant
\citep[Def.~3.7, Lem.~3.9]{2LTT}; up to strict isomorphism, fibrations are exactly the projections $\Sigma(x:X).Y(x) \to X$ of fibrant families $Y$.
A fibration is \emph{trivial} if all its strict fibers are contractible.
\begin{wrapfigure}{r}{0.21\textwidth}
	\centering
	\vspace{-1.0\baselineskip}
	$\begin{tikzcd}[column sep=small, row sep=normal]
		A  \ar[r,"l"] \ar[d,hookrightarrow,"i"']
		& \Sigma_X Y \ar[d,->>] \\
		B \ar[r, "k"'] & X
	\end{tikzcd}$
	\vspace{-1.1\baselineskip}
\end{wrapfigure}
Second, a \emph{cofibration}, written $i : A \hookrightarrow B$, is a map the Leibniz exponential of which preserves fibrations and trivial fibrations
\citep[Def.~3.13]{2LTT}; spelled out (cf.~\citealp[Rem.~3.14(i)]{2LTT}) this means that, in any strictly commuting square as displayed on the right,
the type of strict diagonal fillers is fibrant, and contractible if the right vertical map is a trivial fibration.
Finally, a type $B$ is \emph{cofibrant} if
$\szero \to B$ is a cofibration, where $\szero$ is the strict
empty type; equivalently, if dependent products
over $B$ preserve (trivial) fibrancy \citep[Cor.~3.20]{2LTT}. Fibrant types
and finite types are cofibrant \citep[Lem.~3.25]{2LTT}.
A simple example of a cofibration is the inclusion $C \to C +^{\mathrm{s}} D$ into a strict binary sum with $D$ cofibrant;
for further examples of cofibrations, we refer to \citet[\S~3.4]{2LTT}. Strict isomorphism (maps in both directions so that both roundtrips are strictly equal to the identity) is denoted by $\siso$.
For a function $f : A \to B$ between fibrant types and $b : B$, we denote by $\sfib f (b) \defeq \Sigma (a:A). fa \seq b$ the strict fiber and by $\hfib f (b) \defeq \Sigma (a:A). fa = b$ the usual (homotopy) fiber; at this point, we deviate from the convention that constructions on the inner level are not annotated by placing the marker $\textup{h}$, but the reason for this is that it distinguishes the notation from the notation for path reversals, which is also denoted by $p^{-1}$.

Several (related) facts make 2LTT a safe setting for work in homotopy type theory. First, 2LTT is modeled essentially wherever HoTT is: the
simplicial model of \citet{KapulkinLumsdaine2021} extends to a model of 2LTT
with all simplicial sets and their strict equality as the outer level
\citep[\S2.5]{2LTT}, and similarly for other presheaf models;
as the introduction describes, 2LTT is essentially free in terms of semantics.
Moreover, 2LTT with (T1)--(T2) is conservative over its inner
level~\citep[Prop.~2.19]{2LTT}; see also
\citet{andras-staging,BocquetThesis} as well as \cref{rem:no-T3} below. This should be interpreted as saying that
the strict level makes more things expressible but does not add new inner
theorems.

Finally, 2LTT is implemented: Agda's \texttt{--two-level}
flag \citep{agdareadthedocs} provides both levels natively, which is what
the formalization accompanying this paper uses, building
on \citet{2LTTAgda,Uskuplu2025}. The extension types of this paper live
exactly at the interface of the two levels: they constrain inner data by
strict equations.

\begin{remark}[Avoiding the repleteness axiom T3 of 2LTT]\label{rem:no-T3}
	Recall that \cite{2LTT} suggests axiom (T3), expressing \emph{repleteness} of the inner fragment:
	any outer type that is strictly isomorphic to a converted inner type is itself a converted inner type.
	This axiom offers convenience and is justified by canonical models such as simplicial sets, but is not covered by the conservativity proofs in the literature cited above.
	While we expect that axiom (T3) can be justified by a general elaboration construction, which essentially consists of \enquote{unfolding} the fibrancy condition everywhere,
	such a result has not yet been formally established.
	Therefore, in the current paper, we do \emph{not} assume this axiom, and instead make use of the more explicit fibrant representations explained above (which itself can be understood as an instance of this elaboration argument).
	Fibrant representations offer the same level of convenience and the literature's conservativity arguments stay directly applicable. 
	All results in this paper that refer to fibrancy can and should be understood under this reading.
	
	The accompanying Agda development implements this elaboration literally. Its
	prelude imports the non-cumulative \texttt{2LTT\_C} interface, whose
	\texttt{isFibrant} record consists of \texttt{fibrant-match} and
	\texttt{fibrant-witness}, and this interface has no (T3) postulate. The modules
	\texttt{CofibFibration}, \texttt{RS48Ext}, and \texttt{GlueDataFibrant}
	respectively construct the extension-type match, formulate extension
	extensionality on that match, and formulate contractibility of Glue data on
	its match. Thus the formalization checks the representation maps that the
	paper-level convention suppresses.
\end{remark}

\subsection{Strictifications}

In type theory, a \emph{strictification} is usually a construction that turns \emph{some} relationship into a strict one: for example, it replaces structure with equivalent structure so that an equality becomes refl; see, for example, \citet{rice_Cayley} and \citet{kaposi2025type}.
The constructions in these examples are internal, but the observation that something is ``strictified'' is meta-theoretic.
What we do here is essentially exactly this; our 2LTT setting allows us to formulate some general principles.
An easy observation that nevertheless motivates the development is the following: 

\begin{lemma}[\flink{Lemma-2-2-strict}]\label{lem:strict-vs-homotopy-fibre}
	If $f : Y \twoheadrightarrow X$ is a fibration and $X$ is fibrant, then the strict fiber $\sfib f (x) \equiv \Sigma(y:Y).f(y)\seq x$ and the (homotopy) fiber $\hfib f (x) \equiv \Sigma(y:Y).f(y)=x$ over any $x : X$ are equivalent, and the equivalence is the canonical map $(y,q) \mapsto (y, \iota q)$.
\end{lemma}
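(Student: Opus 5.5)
We reduce to the case where $f$ is literally a projection $\pi : \Sigma(x:X).P(x) \to X$ of a fibrant family $P$ over the fibrant type $X$. By \citep[Def.~3.7, Lem.~3.9]{2LTT}, recalled above, every fibration is strictly isomorphic over $X$ to such a projection. A strict isomorphism $Y \siso \Sigma(x:X).P(x)$ commuting strictly with the maps to $X$ induces strict isomorphisms of strict fibers. It also induces equivalences of homotopy fibers, once we read everything on fibrant matches. Both are compatible with the canonical map $(y,q)\mapsto(y,\iota q)$, since $\iota$ commutes with transport along strict equalities. So it suffices to treat $Y \defeq \Sigma(x:X).P(x)$ with $f \defeq \pi_1$.

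In this case we compute both sides. The strict fiber is $\Sigma((x',p):\Sigma_X P).\,x'\seq x$. By strict $\Sigma$-reassociation and contraction of the strict singleton $\Sigma(x':X).\,x'\seq x$ (using the $\seq$-eliminator, which eliminates into all types), it is strictly isomorphic to $P(x)$ via $(x,p,\mathsf{refl})\leftrightarrow p$. The homotopy fiber is $\Sigma((x',p):\Sigma_X P).\,x'=x$, which is an inner type because $X$ and $P$ are fibrant. By the standard HoTT lemma that based path spaces $\Sigma(x':X).\,x'=x$ are contractible, together with path induction into the fibrant family $P$, it is equivalent to $P(x)$ via $(x',p,r)\mapsto \mathrm{transport}^P(r,p)$.

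It remains to check that the canonical map corresponds to the identity on $P(x)$ under these identifications. Composing, the element $p$ goes to $(x,p,\mathsf{refl})$ in the strict fiber. The canonical map sends this to $(x,p,\iota\,\mathsf{refl}) = (x,p,\mathsf{refl})$ in the homotopy fiber, which transports back to $p$. Hence the canonical map is homotopic to a composite of an isomorphism and an equivalence, and so is an equivalence by 2-out-of-3 and invariance of equivalences under homotopy.

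The main obstacle is bookkeeping rather than mathematics. The strict fiber is a priori only an outer type, so "equivalent" must be read via a fibrant match: we take the match to be $P(x)$ itself with the strict isomorphism above. We must also ensure the reduction step respects the canonical map; for that we use that $\iota$ is defined by $\seq$-elimination sending $\mathsf{refl}$ to $\mathsf{refl}$, so it commutes strictly with the relevant coercions.
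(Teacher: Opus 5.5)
Your proposal is correct and takes essentially the same approach as the paper. You represent $f$, up to strict isomorphism over $X$, as the projection of a fibrant family; the paper takes this family to be $x\mapsto\sfib{f}(x)$ itself. You then identify the strict fiber with the family's fiber at $x$ by contracting the strict singleton, identify the homotopy fiber with it by contracting the based path space, and trace the canonical map through both identifications. Your 2-out-of-3 step and your choice of fibrant match only spell out what the paper compresses into ``tracing $(y,q)$''.
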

Note that $X$ needs to be fibrant in order for the (homotopy) fiber to be well-defined.
\begin{proof}
	This is essentially Lemma~3.9 of \citet{2LTT}. Note that $Y$ is strictly isomorphic to $\Sigma(x:X). \sfib{f}(x)$. With this representation, one can calculate that the strict and homotopy fiber over $x$ are $\Sigma(x':X).(\sfib{f}(x')) \times (x' =^s x)$ and $\Sigma(x':X).(\sfib{f}(x')) \times (x' = x)$.
	The first is strictly isomorphic (and thus equivalent) to $\sfib{f}(x)$, while the second is equivalent to it.
	Tracing $(y,q)$ yields that it is mapped to $(y,\iota q)$ as required.
\end{proof}

This implies the following strictification result:
\begin{lemma}[\flink{Lemma-2-3} Strictifying diagonal fillers]\label{lem:correct-hom-fillers}
	Consider a strictly commuting square
	\[
		\begin{tikzcd}
			A  \ar[r,"l"] \ar[d,hookrightarrow,"i"]
			& Y \ar[d,->>,"f"] \\
			B \ar[r, "k"] & X
		\end{tikzcd}
	\]
	in which the left vertical map is a cofibration, the right is a fibration, $A$ is cofibrant, and $X$ is fibrant.
	Write $e_0 : f \circ l \seq k \circ i$ for the strict commutativity witness and $e' \colonequiv \iota(e_0) : f \circ l = k \circ i$ for its coercion.
	The type of strict fillers
	\[
		D \;\colonequiv\; \Sigma(h : B \to Y).\;(h \circ i \seq l) \times (f \circ h \seq k)
	\]
	and the type of homotopy fillers
	\[
		D' \;\colonequiv\; \Sigma(h : B \to Y).\;\Sigma(p_1 : h \circ i = l).\;\Sigma(p_2 : f \circ h = k).\;
		\mathrm{ap}_{f \circ -}(p_1)^{-1} \cdot \mathrm{ap}_{- \circ i}(p_2) = e'
	\]
	are both fibrant, and the canonical map
	\[
		D \;\longrightarrow\; D', \qquad
		(h,\,p_1^s,\,p_2^s) \;\mapsto\; (h,\,\iota\,p_1^s,\,\iota\,p_2^s,\,c),
	\]
	is an equivalence, where the coherence cell $c$ is constructed by strict-equality induction.
\end{lemma}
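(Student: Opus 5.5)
The plan is to factor both filler types through the restriction/composition map and then apply \cref{lem:strict-vs-homotopy-fibre} twice, once for each of the two filling conditions.

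\emph{Fibrancy.} Since $A$ is cofibrant, $B$ is cofibrant as well, because cofibrations compose with $\szero\to A$. By \citet[Cor.~3.20]{2LTT}, dependent products over $A$ and $B$ preserve fibrancy. Hence $B\to Y$, $A\to Y$, $B\to X$ and $A\to X$ are all fibrant. Consider the map
\[
	\langle -\circ i,\, f\circ -\rangle : (B\to Y)\longrightarrow \Sigma(l':A\to Y).\,\Sigma(k':B\to X).\,(f\circ l'\seq k'\circ i),
\]
that is, the Leibniz exponential of $i$ against $f$. Because $i$ is a cofibration and $f$ a fibration, this map is a fibration. By definition, $D$ is exactly its strict fiber over $(l,k,e_0)$, so $D$ is fibrant.

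The obstacle is that the codomain contains a strict equation and is therefore not obviously fibrant. I would handle this by rewriting it as $\Sigma(k':B\to X).\,\sfib{(f\circ-)}(k'\circ i)$, where $f\circ- : (A\to Y)\to(A\to X)$ is a fibration because $A$ is cofibrant. Then \cref{lem:strict-vs-homotopy-fibre} shows this strict fiber is equivalent to the homotopy fiber. This gives the codomain a fibrant match
\[
	P' \defeq \Sigma(l':A\to Y).\,\Sigma(k':B\to X).\,(f\circ l' = k'\circ i).
\]
For $D'$, all its components are inner types, so $D'$ is fibrant directly.

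\emph{Equivalence.} Next I would identify $D'$ with the homotopy fiber of the composite $g:(B\to Y)\to P'$, namely $h\mapsto(h\circ i,\, f\circ h,\, \mathrm{refl})$, over the point $(l,k,e')$. Paths in the $\Sigma$-type $P'$ unfold, by the standard characterisation of paths in $\Sigma$ with transport in the path-type family, to a pair $p_1:h\circ i=l$, $p_2:f\circ h=k$ together with a coherence. Up to path algebra, that coherence is exactly $\mathrm{ap}_{f\circ-}(p_1)^{-1}\cdot\mathrm{ap}_{-\circ i}(p_2)=e'$. Getting the orientation and the inverse to match the displayed form is routine but fiddly, and I expect this to be the main bookkeeping step.

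Then $D$ is the strict fiber of the fibration above and $D'$ is the homotopy fiber of its composite with the fibrant match $P\simeq P'$. Applying \cref{lem:strict-vs-homotopy-fibre} to the Leibniz map, whose codomain is made fibrant via the match, gives $D\simeq D'$.

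Finally, to see that the equivalence is the canonical map, I would trace an element $(h,p_1^s,p_2^s)$ through the chain. The lemma sends strict witnesses to their $\iota$-images. The coherence component comes from the transport computation, which by strict-equality induction on $p_1^s,p_2^s$ (and then on $e_0$) reduces to $\mathrm{refl}$. This is precisely the cell $c$ in the statement. Since the resulting map agrees with the constructed equivalence up to homotopy, it is itself an equivalence.
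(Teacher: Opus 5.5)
Your proposal follows the paper's proof essentially step for step. The shared steps are: the Leibniz exponential $E$ as a fibration with $D\siso\sfib{E}(s_0)$; fibrancy of its codomain via the rewriting $\Sigma(m:B\to X).\,\sfib{(f\circ-)}(m\circ i)$; the fiberwise use of \cref{lem:strict-vs-homotopy-fibre}, totalized, to compare that codomain with the homotopy pullback; $D'$ as the homotopy fiber of the composite, reached through the chain of fiber equivalences; and strict-equality induction plus outer UIP on $e_0$ to identify the map. Two points need fixing. First, $P'$ is not a \emph{fibrant match} of the strict codomain in the paper's sense, since there is no strict isomorphism, only the equivalence the paper calls $\phi$; the codomain's fibrancy comes from your $\Sigma$-of-strict-fibers rewriting, not from $P'$. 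Second, you should record that $Y$ is fibrant, because $f$ is a fibration over fibrant $X$, before concluding that $A\to Y$, $B\to Y$ and $D'$ are fibrant.
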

The last component of $D'$ says that the two triangle paths compose to the coerced square path (which comes from a strict equality, so it is ``essentially'' reflexivity).
As in the proof of \cref{lem:strict-vs-homotopy-fibre}, $f$ is strictly isomorphic over $X$ to the display map of the family $x\mapsto\sfib{f}(x)$ of its strict fibers, and $D$, $D'$ and the canonical map transport along that isomorphism; either presentation may be used.
\begin{proof}
	Since $f$ is a fibration and $X$ is fibrant, $Y$ is fibrant; since $A$ is cofibrant and $i$ is a cofibration, $B$ is cofibrant.
	Hence $(A \to Y)$, $(A \to X)$, $(B \to Y)$ and $(B \to X)$ are fibrant, and so are the homotopy pullback $P^h \colonequiv (A \to Y) \times^h_{A \to X} (B \to X)$ and $D'$, both iterated $\Sigma$-types of fibrant types.
	Post-composition $f_* \colonequiv (f \circ {-}) : (A \to Y) \to (A \to X)$ is a fibration since $A$ is cofibrant, and the strict pullback satisfies $P^s \colonequiv (A \to Y) \times^s_{A \to X} (B \to X) \siso \Sigma(m : B \to X).\, \sfib{f_*}(m \circ i)$, so $P^s$ is fibrant \citep[Lem.~3.9]{2LTT}.
	Write $s_0 \colonequiv (l,k,e_0) : P^s$.
	The Leibniz exponential $E \colonequiv \widehat{\mathsf{exp}}(f,i) : (B \to Y) \to P^s$, $h \mapsto (h \circ i,\, f \circ h,\, \mathsf{refl}^s)$, is a fibration since $i$ is a cofibration, and a strict equality $E(h) \seq s_0$ consists of the two triangle equalities together with an equality of square witnesses, unique by outer UIP; so there is a canonical strict isomorphism $\alpha : D \siso \sfib{E}(s_0)$, and $D$ is fibrant.

	Let $\phi : P^s \to P^h$ coerce the third component, $(g,m,e^s) \mapsto (g,m,\iota(e^s))$.
	Fiberwise over $m : B \to X$ it is the map of \cref{lem:strict-vs-homotopy-fibre} for $f_*$ at $m \circ i$, so $\phi$ is an equivalence by totalization \citep[Thm.~4.7.7]{HoTTBook}, and $\phi\,s_0 = (l,k,e')$.
	An inner equality in $P^h$ from $(\phi \circ E)(h)$ to $\phi\,s_0$ consists of $p_1 : h \circ i = l$ and $p_2 : f \circ h = k$ together with the coherence displayed in $D'$, which gives a canonical equivalence $\beta : \hfib{(\phi \circ E)}(\phi\,s_0) \simeq D'$.
	The map of the statement factors as
	\[
		D \;\xrightarrow{\ \alpha\ }\; \sfib{E}(s_0)
		  \;\xrightarrow{\ \kappa\ }\; \hfib{E}(s_0)
		  \;\longrightarrow\; \hfib{(\phi \circ E)}(\phi\,s_0)
		  \;\xrightarrow{\ \beta\ }\; D',
	\]
	with $\kappa$ the coercion $(h,q) \mapsto (h,\iota\,q)$, an equivalence by \cref{lem:strict-vs-homotopy-fibre} for $E$, and the third map $(h,\tilde q) \mapsto (h,\mathrm{ap}_\phi\,\tilde q)$, an equivalence since $\phi$ is one.
	To identify the composite, generalize $l$ and $k$ and apply strict-equality induction to the two triangle witnesses; outer UIP then reduces $e_0$ to reflexivity, and there the composite sends $(h,\mathsf{refl}^s,\mathsf{refl}^s)$ to $h$ with the two reflexivities and the canonical proof of $\mathsf{refl}^{-1}\cdot\mathsf{refl} = \iota\,\mathsf{refl}^s$. So it is that map, $c$ being its coherence cell.
\end{proof}

In the non-dependent form where $X$ is the unit type and $Y$ any fibrant type, the type of strict fillers simplifies to 
$\Sigma(h : B \to Y).\;(h \circ i \seq l)$, and the type of homotopy fillers to $\Sigma(h : B \to Y).\;(h \circ i = l)$.
We give the dependent version of this; note the additional cofibrancy condition on the domain of the cofibration:

\begin{lemma}[\flink{Lemma-2-4} Dependent version]\label{lem:weak=strict-dep}
	Let $i : A \hookrightarrow B$ be a cofibration with $A$ cofibrant, let
	$Y : B \to \UU$ be a fibrant family, and let $l : \prod_{a : A} Y(i a)$
	be a partial section. 
	Then, the type of homotopy extensions
	\begin{equation}
		\Sigma\bigl(h : \prod_{b : B} Y(b)\bigr). (h \circ i = l)
	\end{equation}
	and the type of strict extensions
	\begin{equation}
		\Sigma\bigl(h : \prod_{b : B} Y(b)\bigr). (h \circ i =^s l)
	\end{equation}
	are both fibrant, and equivalent to each other.
\end{lemma}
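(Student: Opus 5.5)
The plan is to reduce the dependent statement to \cref{lem:correct-hom-fillers} by turning the family $Y$ into a fibration. Set $\tilde Y \colonequiv \Sigma(b:B).\,Y(b)$ with projection $f \colonequiv \mathrm{pr}_1 : \tilde Y \twoheadrightarrow B$; this is a fibration since $Y$ is a fibrant family. The partial section gives $\tilde l : A \to \tilde Y$, $\tilde l(a) \colonequiv (ia, l(a))$, and $k \colonequiv \id_B$, so the square $f\circ\tilde l \seq \id\circ i$ commutes by $\mathsf{refl}^s$. The obstacle is that \cref{lem:correct-hom-fillers} needs the base $X$ to be fibrant, and here $X = B$ is only cofibrant. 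I therefore do not apply that lemma literally. Instead I rerun its argument with $X=B$, using only the ingredients that survive.

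\textbf{Strict side.} Since $i$ is a cofibration and $A$ is cofibrant, $B$ is cofibrant, so $\prod_{b:B}Y(b)$ and $\prod_{a:A}Y(ia)$ are fibrant. Cofibrancy of $i$ says that the Leibniz exponential of $i$ against the fibration $f$ is a fibration. Unfolding this over the point $(\tilde l, \id_B, \mathsf{refl}^s)$ of the strict pullback, the strict fiber is strictly isomorphic to the type of strict fillers, i.e.\ to pairs of a section $h$ together with a proof of $h\circ i\seq l$. The second triangle, $f\circ h\seq\id$, is absorbed by writing $h$ as a dependent function, and the square witness is trivial by UIP. More directly, the restriction map $r \colonequiv (-\circ i) : \prod_{b:B} Y(b) \to \prod_{a:A}Y(ia)$ is a fibration, and the strict extension type is exactly $\sfib{r}(l)$. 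Hence it is fibrant.

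\textbf{Equivalence.} The target $\prod_{a}Y(ia)$ of $r$ is fibrant, so \cref{lem:strict-vs-homotopy-fibre} applies to $r$ at $l$. It says that $\sfib{r}(l)\to\hfib{r}(l)$, $(h,q)\mapsto(h,\iota q)$, is an equivalence. Because the target is fibrant, $\hfib{r}(l)$ is a $\Sigma$ of fibrant types and so is fibrant, and it is literally the homotopy extension type. This gives both fibrancy claims and the equivalence.

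The step I expect to need the most care is justifying that $r$ is a fibration in the dependent setting. This means reading the cofibration condition (Leibniz exponential preserves fibrations) for the fibration $\Sigma_B Y\to B$ and identifying the resulting strict pullback and its fibers with the fibers of $r$. The identification runs through the strict isomorphism between maps $B\to\Sigma_BY$ over $\id_B$ and sections $\prod_b Y(b)$, as in \cite[Cor.~3.20]{2LTT}. Everything else is bookkeeping with UIP.
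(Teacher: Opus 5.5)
Your proposal is correct and is essentially the paper's proof. Restriction along $i$ is a fibration with fibrant codomain, the strict extension type is its strict fiber over $l$, and \cref{lem:strict-vs-homotopy-fibre} identifies that strict fiber with the homotopy fiber, i.e.\ the homotopy extension type. The paper simply cites \citep[Lem.~3.18, Cor.~3.20]{2LTT} for the fibration and cofibrancy facts, where you unfold them via strict fillers, and as you noticed yourself, the detour through \cref{lem:correct-hom-fillers} is not needed.
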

\begin{proof}
	By \citet[Lem.~3.18]{2LTT} the restriction map $\prod_{b : B} Y(b) \to
	\prod_{a : A} Y(i a)$ is a fibration, since $i$ is a cofibration and $Y$
	is fibrant; its codomain is fibrant because $A$ is cofibrant
	\citep[Cor.~3.20]{2LTT}. For a given $l$, the homotopy and the strict extensions
	are the homotopy and the strict fiber of this map over $l$, hence equivalent by
	\cref{lem:strict-vs-homotopy-fibre}. Both are fibrant: the strict fiber as a
	fiber of a fibration, the homotopy fiber as a homotopy fiber between fibrant
	types.
\end{proof}

\begin{remark}
	\Cref{lem:weak=strict-dep} is \emph{not} simply the special case of \Cref{lem:correct-hom-fillers} where the bottom horizontal map is the identity and the bottom triangle commutes strictly.
	Counterexample: $A \equiv B \equiv \one$, $i \equiv \id$, $Y \equiv X$,
	$f \equiv \id$, $k \equiv l$. The type of strict fillers is contractible, whereas
	the naive type is $\Sigma(y:Y).\,(y = l) \times (y \seq l) \siso \Omega(Y,l)$,
	the loop space, which may be non-trivial. The same example (with both conditions weak) shows that in
	\cref{lem:correct-hom-fillers} the two inner equalities $p_1,p_2$ cannot be taken
	independently: the coherence cell there is essential.
	The correct version of phrasing \Cref{lem:weak=strict-dep} as a fibered statement is more subtle; it is omitted here, but it is formalized (module \texttt{HomFillers} of the supplementary formalization).
\end{remark}

\begin{longversion}
\begin{lemma*}[Fibered version]\label{lem:weak=strict-fib}
	Consider a strictly commuting square
	\begin{equation}
		\begin{tikzcd}
			A \ar[r,"l"] \ar[d,hookrightarrow,"i"'] & Y \ar[d,->>,"f"] \\
			B \ar[r,"k"'] & X
		\end{tikzcd}
	\end{equation}
	in which $i$ is a cofibration, $A$ is cofibrant, and $f$ is a fibration, with
	\emph{no} fibrancy assumption on $X$ or $Y$; write
	$e_0 : f \circ l \seq k \circ i$ for the strict commutativity witness.
	Let $Z(b) \colonequiv \sfib f (k\,b) = \Sigma(y:Y).\,(f(y)\seq k(b))$
	be the strict fiber of $f$ over $k(b)$; this is a fibrant family $Z : B \to \UU$
	\citep[Lem.~3.9]{2LTT}, and $e_0$ turns $l$ into a partial section
	$l' : \prod_{a:A} Z(i\,a)$ via $l'(a) \colonequiv (l\,a,\,e_0\,a)$.
	Then the type of strict fillers
	\[
		D \;\colonequiv\; \Sigma(h : B \to Y).\;(h \circ i \seq l) \times (f \circ h \seq k)
	\]
	is strictly isomorphic to the type of strict extensions of $l'$ in the family
	$Z$, and hence, by \cref{lem:weak=strict-dep}, fibrant and equivalent to the
	type of homotopy extensions
	\[
		D_{\mathrm{fib}} \;\colonequiv\; \Sigma\bigl(h' : \textstyle\prod_{b:B} Z(b)\bigr).\;(h' \circ i = l').
	\]
\end{lemma*}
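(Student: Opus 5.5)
The plan is to build the strict isomorphism $D \siso \Sigma\bigl(h' : \prod_{b:B} Z(b)\bigr).\,(h' \circ i \seq l')$ by hand and then apply \cref{lem:weak=strict-dep} to the family $Z$. The hypotheses of that lemma are met: $i$ is a cofibration, $A$ is cofibrant, and $Z$ is fibrant by \citep[Lem.~3.9]{2LTT}. That lemma therefore gives fibrancy of the strict-extension type and its equivalence with $D_{\mathrm{fib}}$. Fibrancy of $D$ itself then follows, either by taking the isomorphism as a fibrant presentation in the sense of \cref{rem:no-T3}, or because fibrant presentations are closed under strict isomorphism.

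First, I rewrite $D$ by reassociating. A function $h : B \to Y$ together with a witness $f \circ h \seq k$ is, by outer function extensionality and the universal property of $\Sigma$, strictly the same as a section $h' : \prod_{b:B} \Sigma(y:Y).\,f(y)\seq k(b)$. Concretely, the map is $(h,p_2) \mapsto \lambda b.\,(h\,b,\ \mathsf{happly}(p_2)(b))$, with inverse $h' \mapsto (\mathsf{pr}_1 \circ h',\ \mathsf{funext}(\mathsf{pr}_2 \circ h'))$. The round trips hold strictly by outer funext and UIP. So $D \siso \Sigma(h' : \prod_b Z(b)).\,(\mathsf{pr}_1\circ h'\circ i \seq l)$.

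Second, I compare the remaining condition $\mathsf{pr}_1 \circ h' \circ i \seq l$ with $h' \circ i \seq l'$. Pointwise, $h'(i\,a) \seq (l\,a,\ e_0\,a)$ in $\Sigma(y:Y).\,f(y)\seq k(i\,a)$ splits into an equality of first components plus a transported equality of second components. The second component lives in the strict equality type $f(l\,a) \seq k(i\,a)$, which is a strict proposition by UIP, so it is automatic. Hence the two conditions are logically equivalent. Since both are strict propositions by UIP and funext, they are strictly isomorphic. Putting this together with the first step, under the reassociation, gives the required isomorphism.

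The only delicate point is bookkeeping: the witness $p_2$ is used to form $h'$, and the fibre-component equation must then hold regardless of how $e_0$ and $p_2$ interact. UIP settles this, so no coherence cell is needed, in contrast to \cref{lem:correct-hom-fillers}. I expect the main obstacle to be the formal transport in that second-component step, which requires dependent strict equality in $\Sigma$-types. I would handle it by strict path induction on the first-component equality, after which both sides are strict equalities between elements of the same proposition $f(l\,a)\seq k(i\,a)$.
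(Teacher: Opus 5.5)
Your proposal is correct and is essentially the paper's proof. Both arguments go the same way: use outer function extensionality to identify $h$ together with $f\circ h\seq k$ with a section $h'$ of $Z$, observe that $h\circ i\seq l$ and $h'\circ i\seq l'$ correspond because the second components lie in a strict proposition, and then apply \cref{lem:weak=strict-dep} to the cofibration $i$ and the fibrant family $Z$. Your strict path induction for the transported second component, and your transfer of fibrancy to $D$ along the strict isomorphism, only spell out details the paper leaves implicit.
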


\begin{proof}
	Using strict function extensionality of the outer level, a map $h : B \to Y$
	together with a witness of $f \circ h \seq k$ is the same as a section
	$h' : \prod_{b:B} Z(b)$, via $h'(b) \colonequiv (h\,b,\,\text{--})$ and back via
	$h \colonequiv \pi_1 \circ h'$. The boundary conditions correspond: a strict
	equality of pairs whose second components inhabit a strict proposition is
	exactly a strict equality of the first components, so $h \circ i \seq l$ iff
	$h' \circ i \seq l'$. This identifies $D$ with
	$\Sigma(h' : \prod_B Z).\,(h' \circ i \seq l')$, and
	\cref{lem:weak=strict-dep}, applied to the cofibration $i$ (with $A$
	cofibrant) and the fibrant family $Z$, gives fibrancy and the equivalence with
	$D_{\mathrm{fib}}$.
\end{proof}

\begin{remark}[The weak condition must live in the fibers]\label{rem:fibred-warning}
	The weak condition in $D_{\mathrm{fib}}$ is an inner equality in $\prod_{a:A} Z(i\,a)$:
	pointwise inner equalities \emph{in the strict fibers} of $f$, coherent with the strict
	witnesses. It cannot be replaced by the naive pair
	$(h \circ i = l) \times (f \circ h \seq k)$. For non-fibrant $Y$ the first
	component is not even well-formed, being the inner identity type of the
	non-fibrant $A \to Y$; and for fibrant $Y$ the naive type is genuinely larger.
	Counterexample: $A = B = \one$, $i = \id$ (a cofibration), $Y = X$ fibrant,
	$f = \id$, $k \seq l$. The strict fillers form a contractible type, whereas
	the naive type is $\Sigma(y:Y).\,(y = l) \times (y \seq l) \siso \Omega(Y,l)$,
	the loop space. The same example (with both conditions weak) shows that in
	\cref{lem:correct-hom-fillers} the two inner equalities $p_1,p_2$ cannot be taken
	independently: the coherence cell there is essential.
\end{remark}
\end{longversion}

\section{The Riehl--Shulman extension-type calculus}
\label{sec:rs-calculus}

Riehl and Shulman's simplicial type theory \citep{RS2017} is a type theory for
developing the theory of $\infty$-categories synthetically: beneath homotopy
type theory sits a layer of \emph{shapes}, carved out of directed cubes by a
tope logic, and the one new type former is the \emph{extension type}, whose
elements are sections of a family over a shape that restrict, strictly, to
a partial section on a subshape.
In the framework of this paper,
both ingredients are automatically available: the shapes and shape inclusions as
cofibrations,%
\footnote{The claim is that the shape calculus with its rules is automatic; the existence of specific shape inclusions, such as horn inclusions, is an additional assumption.}
and the extension type is definable rather than postulated.


The following shows the rules of extension types as introduced and presented in \citet[Fig.~4]{RS2017}.
We simplify the presentation and omit assumptions which state that something is a valid shape, a valid shape inclusion, or similar; instead, we assume for all rules that $i:\Phi\hookrightarrow\Psi$ is a shape inclusion, with $\Phi$ and $\Psi$ shapes.
Later rules omit assumptions that can be inferred from earlier rules, and the symbol $\seq$ is used to emphasize that all rules are strict.

\vspace*{.5cm}

\begin{mdframed}
	\begin{mathpar}
	\inferrule*[right=form]{A:\Psi\to\UU \\ a:\textstyle\prod_{\phi:\Phi}A(i\phi)}{\Ext{i}{\Pi_\Psi A}{a}\ \mathsf{type}}
	\and
	\inferrule*[right=intro]{b:\textstyle\prod_{\psi:\Psi}A(\psi) \\ b\circ i\seq a}{\lambda\psi.\,b(\psi):\Ext{i}{\Pi_\Psi A}{a}}
	\and
	\inferrule*[right=eval]{f:\Ext{i}{\Pi_\Psi A}{a} \\ \psi:\Psi}{f(\psi):A(\psi)}
	\and
	\inferrule*[right=bdry]{f:\Ext{i}{\Pi_\Psi A}{a} \\ \phi:\Phi}{f(i\phi)\seq a(\phi)}
	\and
	\inferrule*[right=$\beta$]{b:\textstyle\prod_{\psi:\Psi}A(\psi) \\ b\circ i\seq a \\ \psi:\Psi}{(\lambda\psi'.\,b(\psi'))(\psi)\seq b(\psi)}
	\and
	\inferrule*[right=$\eta$]{f:\Ext{i}{\Pi_\Psi A}{a}}{f\seq\lambda\psi.\,f(\psi)}
\end{mathpar}
\captionof{figure}{The rules of extension types as presented in \citet[Fig.~4]{RS2017}.}
\end{mdframed}

\vspace*{.5cm}

In our framework, shapes become cofibrant types, and shape inclusions are simply cofibrations.%
\footnote{Regarding the figure, note that Riehl--Shulman decorate the bar with the subshape $\Phi$, with the inclusion being implicit; here, we consider it helpful to instead decorate the bar with the cofibration/shape inclusion.}
For the rest of this section, we work with the following convention: $i : \Phi \hookrightarrow \Psi$ is a cofibration,
$A : \Psi \to \UU$ a fibrant family, $a : \Pi_{\phi:\Phi} A(i \phi)$ a partial section, and we
write $f \circ i \colonequiv \lambda\phi.\,f(i\phi)$ for restriction along $i$.
Cofibrancy of $\Phi$ or $\Psi$ is not assumed (unless a statement says so).
The formation rule is interpreted by the following definition:

\begin{definition}[\flink{Definition-3-1} Extension type]\label{def:ext-formal}
	Given a cofibration $i : \Phi \hookrightarrow \Psi$, a type family $A : \Psi \to \UU$, and a partial section $a : \Pi_{\phi:\Phi} A(i \phi)$, the \emph{extension type} is defined as
	\begin{equation}
		\Ext{i}{\Pi_\Psi A}{a}\;\colonequiv\;\Sigma\bigl(f:\textstyle\prod_{\psi:\Psi}A(\psi)\bigr).\,(f\circ i\seq a).
	\end{equation}
	This is the strict fiber of restriction along $i$ over the partial section, $\sfib{(i^*)}(a)$.
	Since $i$ is a cofibration, the restriction map $i^*$ is a fibration, so this strict fiber comes equipped with a fibrant match \citep[Lem.~3.9, 3.18]{2LTT}.
	Following the convention of omitting conversions (cf.~\cref{sec:2ltt-recall}), the same notation denotes this match in inner constructions and the displayed realization in strict constructions.
\end{definition}

The special cases (2.3) and (2.4) of \citet{RS2017} are immediate strict isomorphisms: for
$\Phi \equiv \szero$ the boundary component is a contractible strict
proposition, so $\Ext{i}{\Pi_\Psi A}{\mathsf{rec}_\bot}\siso\prod_\Psi A$, and a
constant family gives the plain function type $\Psi\to X$.

The remainder of this section derives the extension-type calculus of
\citet[\S2.2, \S4]{RS2017} from \cref{def:ext-formal}, statement by statement:
the rules of their Figure~4 hold strictly (\cref{lem:rs-rules}); the structural equivalences from their
statements 4.1--4.5 become strict isomorphisms
(\cref{lem:rs41,lem:rs42,lem:rs43,lem:rs44,lem:rs45}); and their relative function extensionality axiom \citep[Axiom~4.6]{RS2017} becomes a theorem
(\cref{lem:rel-funext-holds}), from which the extensionality properties of their
results 4.8--4.12 follow (\cref{lem:rs48,lem:rs410,lem:rs412}; their
4.11 becomes redundant), as does the homotopy extension property \citep[Prop.~4.10]{RS2017}.

Nothing beyond standard function extensionality is assumed.
The exceptions are the union law \citep[Thm.~4.5]{RS2017}, which needs the cofibration class to be
closed under binary meets; arbitrary cofibrations need not have this property, so the assumption has to be added explicitly (\cref{lem:rs45}).%
\footnote{The simplicial/cubical face formulas do have it: the face lattice is closed under conjunction.}
Two statements also need the outer level to have the pushouts they mention: \cref{lem:rs42} the pushout-product, \cref{lem:rs45} the join $\Phi\vee\Psi$.


\subsection{The rules and the proposition-indexed calculus}

The first lemma makes \cref{def:ext-formal} meaningful:

\begin{lemma}[\flink{Lemma-3-2-intro} Riehl--Shulman rules hold]\label{lem:rs-rules}
	The extension type of \cref{def:ext-formal} validates the rules of
	\citet[Fig.~4]{RS2017}.
\end{lemma}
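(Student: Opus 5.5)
The plan is to interpret each rule of the figure directly via \cref{def:ext-formal} and check it by unfolding the $\Sigma$-type; apart from the formation rule, nothing beyond the definitional behaviour of outer $\Sigma$-types and strict equality is needed.

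\emph{Formation.} Given $i:\Phi\cofib\Psi$, a fibrant family $A$ and a partial section $a$, the type $\Ext{i}{\Pi_\Psi A}{a}$ is the strict fiber $\sfib{(i^*)}(a)$ of the restriction map. By \citet[Lem.~3.18]{2LTT} the map $i^*$ is a fibration, because $i$ is a cofibration and $A$ is fibrant. Its strict fibers therefore come with fibrant matches \citep[Lem.~3.9]{2LTT}, so the extension type is a genuine (fibrant) type, exactly as noted after \cref{def:ext-formal}. No cofibrancy of $\Phi$ is needed here, since we never form the homotopy fiber.

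\emph{Introduction and elimination.} We interpret $\lambda\psi.\,b(\psi)$ as the pair $(b,e)$, where $e:b\circ i\seq a$ is the given witness. Evaluation $f(\psi)$ is interpreted as $\pi_1(f)(\psi)$. The boundary rule is $\mathsf{happly}^s(\pi_2 f)(\phi): \pi_1(f)(i\phi)\seq a(\phi)$, obtained by applying strict equality to the argument $\phi$.

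\emph{Computation rules.} For $\beta$, the term $\pi_1(b,e)(\psi)$ reduces to $b(\psi)$ by the outer $\Sigma$-$\beta$ rule, so it holds judgmentally and a fortiori strictly. For $\eta$, we need $f\seq(\pi_1 f,\pi_2 f)$; this is the $\Sigma$-$\eta$ rule, or alternatively follows by $\Sigma$-induction.

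The only step I expect to need care is the interaction with fibrant matches. Inner constructions read the extension type through the match $\rho$, so the rules must be compatible with $\rho$. Since $\rho$ is a strict isomorphism, transporting the introduction and elimination forms along $\rho$ and $\rho^{-1}$ preserves $\beta$ and $\eta$ up to $\seq$, using that both roundtrips are strictly the identity. Under the paper's convention of suppressing these maps, I would simply state this compatibility. I would also note one subtlety: the rule's side condition ``$b\circ i\seq a$'' must be read as data carried by the term, and this is harmless because UIP makes the choice of witness irrelevant.
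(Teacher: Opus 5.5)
Your proposal is correct and follows the paper's proof. Elements are pairs $(f,p)$, \textsc{intro} is pairing, \textsc{eval} is $\pi_1$, \textsc{bdry} is $\pi_2$ read at $\phi$, and $\beta$/$\eta$ come from the strict $\Pi$/$\Sigma$ rules together with proof-irrelevance of the boundary witness (your UIP remark). Your extra discussion of formation and of compatibility with the fibrant match is handled in the paper by \cref{def:ext-formal} and its convention of suppressing the maps $\rho$.
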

\begin{proof}
	By \cref{def:ext-formal} an element of $\Ext{i}{\Pi_\Psi A}{a}$ is a pair $(f,p)$
	with $f:\Pi_\Psi A$ and $p: f\circ i\seq a$. Thus, \textsc{intro} sends $b$ (with its unique boundary proof) to
	$(b,\,\text{--})$; \textsc{eval} is $f(\psi)\colonequiv\pi_1(f)(\psi)$;
	\textsc{bdry} is $\pi_2(f)$ read at $\phi$; and $\beta$, $\eta$ are those of the
	strict $\Pi$-types together with the fact that the boundary component is a strict proposition, so both round-trips are strict
	identities.
\end{proof}

\begin{remark}[\flink{Remark-3-3-type} The proposition-indexed calculus]\label{rem:zhang-rules}
	The extension type of \cref{def:ext-formal} also subsumes the
	\emph{proposition-indexed} extension types of Zhang \citep{zhang2023three} and
	of the controlled-unfolding calculus \citep{gratzer2022controlling}. Their
	connective $\{A\mid\varphi\rhd u\}$ is the extension type along the map
	$\varphi\to\one$, and its rules are \cref{lem:rs-rules} for that map, so no
	separate verification is required. (The displayed outer realization in \cref{def:ext-formal}
	and the strict rules of \cref{lem:rs-rules} never use the cofibration property
	of $i$, so they apply to $\varphi\to\one$ whether or not this map is a
	cofibration; the fibrant match of \cref{def:ext-formal} does use it.) The only feature that needs its own
	comment is the treatment of several clauses at once, which we reach at the end.

	These calculi are organized around a universe $F$ of proof-irrelevant
	propositions and \emph{restricted judgments} $\Gamma,\varphi\vdash J$, meaning
	that $J$ is checked under the assumption $\varphi:F$. The connective
	$\{A\mid\varphi\rhd u\}$ is a wrapper with rules \citep[Fig.~2.1]{zhang2023three}:
	\begin{mathpar}
		\inferrule*[right=form]{\Gamma\vdash A\ \mathsf{type} \\ \Gamma\vdash\varphi:F \\ \Gamma,\varphi\vdash u:A}{\Gamma\vdash\{A\mid\varphi\rhd u\}\ \mathsf{type}}
		\and
		\inferrule*[right=intro]{\Gamma\vdash v:A \\ \Gamma,\varphi\vdash u\equiv v:A}{\Gamma\vdash\mathsf{inS}(v):\{A\mid\varphi\rhd u\}}
		\and
		\inferrule*[right=elim]{\Gamma\vdash v:\{A\mid\varphi\rhd u\}}{\Gamma\vdash\mathsf{outS}(v):A}
		\and
		\inferrule*[right=bdry]{\Gamma\vdash v:\{A\mid\varphi\rhd u\}}{\Gamma,\varphi\vdash\mathsf{outS}(v)\equiv u:A}
		\and
		\inferrule*[right=$\beta$]{\Gamma\vdash v:A}{\Gamma\vdash\mathsf{outS}(\mathsf{inS}(v))\equiv v:A}
	\end{mathpar}
	The $\eta$-rule is given by \citet[Ex.~2.6]{zhang2023three}. We read all of this off
	\cref{def:ext-formal,lem:rs-rules} at $i=(\varphi\to\one)$, the strict
	injection of a strict proposition, under the dictionary: $F$ is a universe of
	strict propositions; a restricted judgment $\Gamma,\varphi\vdash J$ is
	$\Gamma,p:\varphi\vdash J$; a restricted judgmental equality
	$\Gamma,\varphi\vdash u\equiv v:A$ is an element of $\prod_{p:\varphi}u\seq v$,
	itself a strict proposition; and the connective is the strict fiber
	\[
		\{A\mid\varphi\rhd u\}\;\rightsquigarrow\;
		\Ext{\varphi\to\one}{\textstyle\prod_{\one}A}{u}
		\;\;\equiv\;\;\Sigma(x:A).\,\textstyle\prod_{p:\varphi}x\seq u ,
	\]
	writing $A$, $u$ also for the constant family over $\one$ and the partial
	element $\lambda p.u$. Rule by rule, Zhang's \textsc{form} is the formation of
	this extension type (\cref{def:ext-formal}), and \textsc{intro}, \textsc{elim},
	\textsc{bdry}, $\beta$, $\eta$ are \textsc{intro}, \textsc{eval}, \textsc{bdry},
	$\beta$, $\eta$ of \cref{lem:rs-rules}: concretely $\mathsf{inS}$ is pairing,
	$\mathsf{outS}=\pi_1$, \textsc{bdry} is $\pi_2$, and $\beta,\eta$ are those of
	the strict $\Sigma$-type, its boundary component $\prod_{p:\varphi}x\seq u$
	being a strict proposition. These are all of Zhang's rules for a single clause.

	The one feature that does not fit this single-clause reading is that Zhang
	attaches several clauses $\varphi_1\rhd u_1,\dots,\varphi_n\rhd u_n$ at once,
	with \textsc{intro} asking for agreement with every $u_k$ and \emph{no}
	compatibility condition between them. This is the same construction along the
	copairing $[\varphi_1,\dots,\varphi_n]:\varphi_1+\dots+\varphi_n\to\one$. The
	domain $\varphi_1+\dots+\varphi_n$ need not be a strict proposition, but the
	boundary type $\prod_{p}\bigl(x\seq[u_1,\dots,u_n](p)\bigr)$, with $p$ ranging
	over that sum, still is one; so any inhabitant $x$ satisfies $x\seq u_j$ and
	$x\seq u_k$ wherever $\varphi_j$ and $\varphi_k$ both hold. Agreement of the
	$u_k$ on overlaps is therefore a consequence of inhabitation rather than a side
	condition, which is exactly why Zhang's rules omit it.

	The reading is not tied to $\Psi=\one$: fiberwise over the codomain, every
	extension type is a $\prod$ of extension types along maps into $\one$
	\flink{Remark-3-3-fibrewise}. Write
	$\sfib{i}(\psi)\defeq\Sigma(\phi:\Phi).\,(i\phi\seq\psi)$ for the strict
	fiber and $a_\psi(\phi,p)\defeq p_*(a\phi)$ for the boundary datum
	transported into it along $p$. Then
	\[
		\Ext{i}{\textstyle\prod_{\psi:\Psi}A\psi}{a}
		\;\siso\;
		\textstyle\prod_{\psi:\Psi}\{A\psi\mid\sfib{i}(\psi)\rhd a_\psi\},
	\]
	by strict-equality induction in one direction and $\prod$-extensionality in
	the other. The factors are the connective of this remark at a general index,
	as in the several-clauses case above. They are \emph{proposition}-indexed
	exactly when the fibers $\sfib{i}(\psi)$ are strict propositions, that is
	when $i$ is injective, and injectivity is a property of the boundary
	inclusions of particular calculi rather than part of the cofibration
	property of \cref{sec:2ltt-recall}.\footnote{%
		The Leibniz condition does not entail it. For
		$i:\one+^{\mathrm{s}}\one\to\one$ and $f:Y\to X$, the strict fiber of
		$\widehat{\mathsf{exp}}(f,i)$ over $(y_0,y_1,x)$ is $y_0\seq y_1$: if $f$
		is a trivial fibration this is contractible, since $y_0$ and $y_1$ lie in
		the contractible strict fiber $\sfib{f}(x)$ and strict equality types are
		strict propositions, and it is fibrant whenever every type is, that is
		when the inner level is the whole outer one. In that interpretation this
		$i$ is a cofibration and is not injective.} Simplicial subshape inclusions and the
	extents of cubical face formulas do have it, so there the shape-indexed and
	the proposition-indexed presentations differ in where the base sits, in an
	explicit shape or in an ambient $\prod$, rather than in the construction.

	The cubical calculi make that split explicitly.
	\citet[\S3.5]{angiuli2019computational} has line types $(x{:}I)\to A$ and
	restriction types $A[\xi\hookrightarrow N]$ as separate formers and defines
	$\mathsf{Path}_{x.A}(a,a')$ as
	$(x{:}I)\to A[x=0\hookrightarrow a,\,x=1\hookrightarrow a']$; restriction
	types are not Kan on their own, while the combination is Kan once the line
	binds the variables the face formula mentions, so it is the combination
	$(\vec{x}{:}I)\to A[\vec{\xi}\hookrightarrow\vec{N}]$ that is taken as the
	primitive extension type, its restriction part being the connective above.
	What varies between the settings is the face calculus, which boundaries can
	be named at all, rather than the shape of the construction.
\end{remark}

\subsection{The structural equivalences (RS 4.1--4.5)}

The five structural equivalences of \citet[\S4, Thms~4.1--4.5]{RS2017} are all provable;
only the union law asks for more than an arbitrary cofibration.

\begin{lemma}[\flink{Lemma-3-4} RS 4.1: commutation with  $\Pi$]\label{lem:rs41}
	For $X:\UU$, a fibrant $Y:\Psi\to X\to\UU$ and
	$f:\prod_{\phi:\Phi}\prod_{x:X}Y(i\phi,x)$,
	\[
		\Ext{i}{\textstyle\prod_{\psi:\Psi}\prod_{x:X}Y(\psi,x)}{f}\;\siso\;\textstyle\prod_{x:X}\Ext{i}{\textstyle\prod_{\psi:\Psi}Y(\psi,x)}{\lambda\phi.f(\phi,x)}.
	\]
\end{lemma}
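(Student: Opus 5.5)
I will prove this by writing down the two maps explicitly and checking that both round-trips are strict identities. Both sides are iterated strict $\Sigma$- and $\Pi$-types whose second components are strict equalities, so everything can be done by currying/swapping arguments, strict function extensionality, and UIP.

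\emph{Forward map.} An element of the left side is a pair $(g,p)$ with $g:\prod_{\psi:\Psi}\prod_{x:X}Y(\psi,x)$ and $p: g\circ i\seq f$, that is, $\lambda\phi.\lambda x.\,g(i\phi,x)\seq f$. Send it to
\[
\lambda x.\,\bigl(\lambda\psi.\,g(\psi,x),\ p_x\bigr),
\]
where $p_x:\lambda\phi.\,g(i\phi,x)\seq\lambda\phi.\,f(\phi,x)$ is obtained by applying $\mathrm{ap}^s_{h\mapsto\lambda\phi.h(\phi,x)}$ to $p$.

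\emph{Backward map.} Given $F:\prod_{x:X}\Ext{i}{\prod_{\psi}Y(\psi,x)}{\lambda\phi.f(\phi,x)}$, set $g(\psi,x)\defeq\pi_1(F x)(\psi)$. The family $x\mapsto\pi_2(Fx)$ gives pointwise strict equalities $g(i\phi,x)\seq f(\phi,x)$ for each $x$ and $\phi$. Two applications of strict function extensionality (over $x$ and over $\phi$, in the swapped order) assemble these into $g\circ i\seq f$.

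\emph{Round-trips.} On the first components, both round-trips are strict identities by the $\eta$/$\beta$ rules of strict $\Pi$ and $\Sigma$ (argument swapping $\psi\leftrightarrow x$). The second components live in strict equality types, which are strict propositions by outer UIP. A strict equality of pairs whose first components agree therefore holds automatically; on the right side this holds pointwise in $x$ and is combined by strict function extensionality. Hence $\siso$.

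The only point requiring care is the boundary component of the backward map: assembling a $\Phi$-indexed family of equalities into an equality of functions needs outer funext, applied in the swapped order $X$-then-$\Phi$. No cofibrancy or fibrancy is used for the strict isomorphism itself. Fibrancy of the right side as a $\Pi$ over $X$ follows because $X$ is inner, hence cofibrant.
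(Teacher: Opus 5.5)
Your proof is correct and takes the same route as the paper. Both use the currying/argument-swap strict isomorphism on underlying sections, match the boundary conditions because strict equality of functions is pointwise (outer funext) and proof-irrelevant (UIP), and get the round-trips as identities by $\beta/\eta$. You spell out explicitly the two maps and the funext steps that the paper leaves implicit.
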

\begin{proof}
	Currying is a strict isomorphism $\prod_\psi\prod_x Y\siso\prod_x\prod_\psi Y$,
	$g\mapsto\lambda x.\lambda\psi.g(\psi)(x)$. Under it $g\circ i\seq f$ holds iff
	for every $x$, $(\lambda\psi.g(\psi)(x))\circ i\seq\lambda\phi.f(\phi,x)$ (a
	strict equality of functions is pointwise, its proofs irrelevant), so the
	strict fiber over $f$ on the left is the product over $x$ of the strict fibers
	on the right; the maps are inverse by $\beta/\eta$.
\end{proof}

\begin{lemma}[\flink{Lemma-3-5} RS 4.2: currying / pushout-product]\label{lem:rs42}
	Let $i:\Phi\hookrightarrow\Psi$ and $i':\Phi'\hookrightarrow\Psi'$ be
	cofibrations, so the pushout-product $i\poprod i'$, with domain the pushout
	$\Phi\times\Psi'\cup_{\Phi\times\Phi'}\Psi\times\Phi'$, is a cofibration
	by \citet[Lem.~3.23]{2LTT}, provided this pushout exists in the outer level.
	Let $A$ be a fibrant family over $\Psi\times\Psi'$ and $c$ a partial section
	over that pushout; we write $c$ for both of its components, so
	$c(\psi,-):\prod_{\phi':\Phi'}A(\psi,i'\phi')$ is the boundary datum of the
	inner extension type at $\psi:\Psi$. Since $c$ is total in the second argument
	over $\Phi\times\Psi'$, it induces the outer boundary
	\[
		c_1:\textstyle\prod_{\phi:\Phi}\Ext{i'}{\Pi_{\Psi'}A(i\phi,-)}{c(i\phi,-)},
		\qquad
		c_1(\phi)\colonequiv\bigl(\lambda\psi'.\,c(\phi,\psi'),\;\text{--}\bigr),
	\]
	whose second component is the strict boundary witness supplied by the strict
	agreement of the two components of $c$ over $\Phi\times\Phi'$ (the pushout
	cocone condition); symmetrically
	$c_2:\prod_{\phi':\Phi'}\Ext{i}{\Pi_\Psi A(-,i'\phi')}{c(-,i'\phi')}$.
	Then
	\[
		\Ext{i}{\textstyle\prod_{\psi:\Psi}\Ext{i'}{\Pi_{\Psi'}A(\psi,-)}{c(\psi,-)}}{c_1}\;\siso\;\Ext{i\poprod i'}{\Pi_{\Psi\times\Psi'}A}{c}\;\siso\;\Ext{i'}{\textstyle\prod_{\psi':\Psi'}\Ext{i}{\Pi_\Psi A(-,\psi')}{c(-,\psi')}}{c_2}.
	\]
\end{lemma}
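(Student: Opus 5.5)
By symmetry it suffices to construct the left-hand strict isomorphism; the right-hand one follows by the same argument after swapping the factors $\Psi\times\Psi'\siso\Psi'\times\Psi$, which is a strict isomorphism and carries the pushout-product $i\poprod i'$ to $i'\poprod i$. As in \cref{lem:rs41}, the proof unfolds \cref{def:ext-formal} on both sides into $\Sigma$-types over strict $\Pi$-types. It then uses currying together with strict function extensionality and UIP, so that strict boundary conditions are pointwise strict propositions, and matches the resulting data.

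Unfolded, the left-hand side consists of a function $g:\prod_{\psi:\Psi}\Ext{i'}{\Pi_{\Psi'}A(\psi,-)}{c(\psi,-)}$ together with a witness $g\circ i\seq c_1$. Writing $g(\psi)=(\tilde g(\psi),p(\psi))$, this is a curried map $\tilde g:\prod_\psi\prod_{\psi'}A(\psi,\psi')$ together with two witnesses:
\begin{enumerate}
\item $\tilde g(\psi)\circ i'\seq c(\psi,-)$ for all $\psi$;
\item $\tilde g(i\phi)\seq\lambda\psi'.c(\phi,\psi')$ for all $\phi$.
\end{enumerate}
For the second witness we use that an equality of pairs whose second components lie in a strict proposition reduces to equality of the first components. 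Currying gives a strict isomorphism between such $\tilde g$ and maps $h:\prod_{(\psi,\psi')}A(\psi,\psi')$. On the right-hand side, the condition $h\circ(i\poprod i')\seq c$ is a strict equality of functions out of the outer pushout $P\colonequiv\Phi\times\Psi'\cup_{\Phi\times\Phi'}\Psi\times\Phi'$.

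The key step is the universal property of $P$ applied to dependent maps. Two sections over $P$ are strictly equal iff their restrictions along both pushout injections are strictly equal; the compatibility over $\Phi\times\Phi'$ is automatic, since strict equality is a proposition. Restricting to $\Psi\times\Phi'$ gives exactly condition (1), and restricting to $\Phi\times\Psi'$ gives exactly condition (2), both after currying. The relevant components of $c$ are $c(\psi,-)$ on $\Psi\times\Phi'$ and $c(\phi,-)$ on $\Phi\times\Psi'$, which are precisely the data used to form the inner boundary and $c_1$. The two boundary types are therefore logically equivalent strict propositions over strictly isomorphic bases, so the $\Sigma$-types are strictly isomorphic, with the roundtrips holding by $\beta/\eta$ and proof irrelevance.

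I expect the main obstacle to be stating the pushout's universal property in a form usable for \emph{dependent} maps into the fibrant family $A$ pulled back along $i\poprod i'$. The outer pushout is only assumed to exist, so I would first try to derive the dependent (induction) principle from the non-dependent universal property. This works by the standard trick of mapping into the outer $\Sigma$-type $\Sigma(x:\Psi\times\Psi').A(x)$ and using uniqueness of the mediating map, in the form \enquote{two maps agreeing on both injections are strictly equal}. If that derivation becomes awkward, I would simply assume the pushout with its dependent eliminator and the joint-injectivity principle, parametrizing as the formalization does. A secondary bookkeeping point is checking that the second component of $c_1$, produced by the cocone condition, plays no role; this is immediate from UIP.
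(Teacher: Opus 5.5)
Your proposal is correct and follows essentially the paper's route. Currying identifies the underlying sections, and the universal property of the outer pushout $\Phi\times\Psi'\cup_{\Phi\times\Phi'}\Psi\times\Phi'$ splits the boundary condition into exactly the two conditions of the iterated extension type, with the other order of iteration handled symmetrically. Your observation that only the uniqueness (joint-injectivity) half of the universal property is needed is a useful sharpening of the paper's brief appeal to the universal property. So is your point that its dependent form can be obtained via the $\Sigma$-type trick.
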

\begin{proof}
	By currying and swapping the two arguments, called \enquote{commutation of
	arguments} by Riehl and Shulman \citep[\S4.1]{RS2017},
	$\prod_{\Psi\times\Psi'}A\siso\prod_\Psi\prod_{\Psi'}A\siso\prod_{\Psi'}\prod_\Psi A$
	as strict isomorphisms (by $\eta$). By the universal property of the pushout, a
	boundary on $\Phi\times\Psi'\cup_{\Phi\times\Phi'}\Psi\times\Phi'$ is a pair of
	boundaries, on $\Phi\times\Psi'$ and on $\Psi\times\Phi'$, agreeing on
	$\Phi\times\Phi'$; under these isomorphisms this is exactly the boundary of the
	iterated extension type. Both isomorphisms follow, the two orders of iteration
	giving the outer terms.
\end{proof}

\begin{lemma}[\flink{Lemma-3-6} RS 4.3: extending into $\Sigma$]\label{lem:rs43}
	For fibrant $X:\Psi\to\UU$ and $Y:(\sum_\Psi X)\to\UU$, with $a:\prod_\Phi X\circ i$
	and $b:\prod_{\phi}Y(i\phi,a(\phi))$,
	\[
		\Ext{i}{\textstyle\prod_{\psi:\Psi}\sum_{x:X(\psi)}Y(\psi,x)}{(a,b)}\;\siso\;\textstyle\sum_{f:\Ext{i}{\Pi_\Psi X}{a}}\Ext{i}{\textstyle\prod_{\psi:\Psi}Y(\psi,f\psi)}{b}.
	\]
\end{lemma}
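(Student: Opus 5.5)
The plan is to unfold \cref{def:ext-formal} on both sides and observe that the statement reduces to the strict distributivity of $\prod$ over $\Sigma$ (the type-theoretic axiom of choice), together with a splitting of the boundary condition into its two components. On the left, an element is a pair $(g,p)$ with $g:\prod_{\psi}\sum_{x:X(\psi)}Y(\psi,x)$ and $p: g\circ i\seq(a,b)$, where $(a,b)$ denotes the pointwise pairing $\lambda\phi.(a\phi,b\phi)$. On the right, an element is a quadruple $((f,q),(h,r))$ with $f:\prod_\psi X(\psi)$, $q:f\circ i\seq a$, $h:\prod_\psi Y(\psi,f\psi)$ and $r: h\circ i\seq b$. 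Here $r$ only typechecks after transporting $b$ along $q$, since $h\circ i$ lives in $\prod_\phi Y(i\phi,f(i\phi))$ while $b$ lives in $\prod_\phi Y(i\phi,a\phi)$. Following the paper's convention, we treat this coercion as silent, or equivalently read $r$ as a strict equality over $q$.

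The first step is the strict isomorphism $\prod_\psi\sum_{x}Y(\psi,x)\siso\sum_{f:\prod_\psi X}\prod_\psi Y(\psi,f\psi)$, given by $g\mapsto(\pi_1\circ g,\pi_2\circ g)$ with inverse $(f,h)\mapsto\lambda\psi.(f\psi,h\psi)$. Both round trips are strict identities by $\eta$ for strict $\Sigma$ and $\Pi$.

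The second step transports the boundary condition along this isomorphism. A strict equality $g\circ i\seq(a,b)$ of functions into a $\Sigma$-type amounts, by outer function extensionality and the characterization of strict equality in $\Sigma$, to a pair: a strict equality $\pi_1\circ g\circ i\seq a$ together with a strict equality of second components over it, $\pi_2\circ g\circ i\seq b$ (transported along the first). All of these are strict propositions by UIP, so it suffices to exhibit maps in both directions; no coherence between them has to be checked. Reassociating the $\Sigma$-types, we group $f$ with $q$ and $h$ with $r$. For the map from left to right, the cleanest route is strict-equality induction on $p$: generalizing the boundary datum, we may assume $g\circ i$ \emph{is} $(a,b)$, and then $q$ and $r$ are reflexivities. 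For the converse, we perform induction on $q$ first, which turns $r$ into an ordinary strict equality $h\circ i\seq b$, and then on $r$.

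I expect the only real obstacle to be this dependent bookkeeping, namely stating the right-hand extension type so that its boundary $b$ is well-typed relative to $f$. I would handle it by first proving the isomorphism with $a$ and $b$ generalized, so that $J$ for strict equality applies, and by using UIP to discard any comparison of equality proofs. Fibrancy of both sides is not needed for the strict isomorphism itself: it follows from \cref{def:ext-formal} on each side, and the isomorphism is compatible with the fibrant matches because strict isomorphisms induce equivalences of matches.
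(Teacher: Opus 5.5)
Your proposal is correct and takes the same route as the paper: the strict $\prod$/$\Sigma$ distributivity $g\mapsto(\pi_1\circ g,\pi_2\circ g)$, with inverse given by pointwise pairing, followed by splitting the boundary condition $g\circ i\seq(a,b)$ into $f\circ i\seq a$ and $h\circ i\seq b$. Your reading of the second condition as a strict equality over $q$, discharged by strict-equality induction and UIP, spells out the dependent bookkeeping that the paper's word ``componentwise'' leaves implicit.
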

\begin{proof}
	A section $h:\prod_\psi\sum_{x}Y(\psi,x)$ is, by the projections, a pair
	$(f,g)$ with $f\colonequiv\lambda\psi.\pi_1(h\psi)$ and
	$g\colonequiv\lambda\psi.\pi_2(h\psi)$; this is a strict isomorphism with
	inverse $(f,g)\mapsto\lambda\psi.(f\psi,g\psi)$. The boundary $h\circ i\seq(a,b)$
	holds iff $f\circ i\seq a$ and $g\circ i\seq b$ componentwise, so the strict
	fiber splits as the displayed $\Sigma$.
\end{proof}

\begin{lemma}[\flink{Lemma-3-7} RS 4.4: composite of cofibrations]\label{lem:rs44}
	For cofibrations $i:\Phi\hookrightarrow\Psi$, $j:\Psi\hookrightarrow\Omega$,
	a fibrant family $A:\Omega\to\UU$ and a partial section
	$a:\prod_{\phi:\Phi}A(j(i\phi))$,
	\[
		\Ext{j\circ i}{\Pi_\Omega A}{a}\;\siso\;\textstyle\sum_{f:\Ext{i}{\Pi_\Psi(A\circ j)}{a}}\Ext{j}{\Pi_\Omega A}{f}.
	\]
\end{lemma}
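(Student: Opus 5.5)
The plan is to unfold \cref{def:ext-formal} on both sides and give explicit maps that are mutually inverse up to strict equality, just as in \cref{lem:rs41,lem:rs43}. An element of the left-hand side is a pair $(g,p)$ with $g:\prod_{\omega:\Omega}A(\omega)$ and $p:g\circ j\circ i\seq a$. An element of the right-hand side is a nested pair $((f,q),(g',r))$. Here $f:\prod_{\psi:\Psi}A(j\psi)$ and $q:f\circ i\seq a$, while $g':\prod_\Omega A$ and $r:g'\circ j\seq f$; the boundary datum of the second factor is the underlying section $f$, i.e.\ $\pi_1$ of the first component.

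The forward map sends $(g,p)$ to $((g\circ j,\,p),\,(g,\,\mathsf{refl}^s))$. This is well-typed because $(g\circ j)\circ i$ and $g\circ(j\circ i)$ agree definitionally, as both unfold to $\lambda\phi.\,g(j(i\phi))$. The backward map sends $((f,q),(g',r))$ to $(g',\,\mathsf{ap}^s_{-\circ i}(r)\cdot q)$, which has type $g'\circ j\circ i\seq a$.

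For the round trips, forward-then-backward returns $g$ on the nose. The boundary witnesses then agree because $g\circ j\circ i\seq a$ is a strict proposition (outer UIP). For backward-then-forward, we get $((g'\circ j,\,\text{--}),(g',\mathsf{refl}^s))$ and must show it is strictly equal to $((f,q),(g',r))$. I would do this by strict-equality induction on $r$, generalizing over $f$. This reduces to the case $f\equiv g'\circ j$ with $r\equiv\mathsf{refl}^s$, where the remaining witnesses again coincide by UIP.

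The only step needing care is this dependent second round trip. The type of the second component depends on $f$, so the equation cannot be checked componentwise naively. Instead we either use strict $J$ on $r$, or show that the pair $(f,r)$ lies in a strict singleton $\Sigma(f).\,g'\circ j\seq f$, which is contractible with respect to $\seq$.

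Note that the cofibration hypotheses on $i$ and $j$ are used only to give both sides fibrant matches, via \cref{def:ext-formal}, and via closure of cofibrations under composition for the left-hand side. The strict isomorphism itself uses neither hypothesis, in line with the parenthetical observation in \cref{rem:zhang-rules}.
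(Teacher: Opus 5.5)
Your proposal is correct and follows the paper's proof: both use the maps $g\mapsto(g\circ j,g)$ and $(f,g')\mapsto g'$, and both settle the dependent round trip by observing that $f$ is forced to be $g'\circ j$, which your strict-$J$ / strict-singleton argument makes explicit. Your closing remark also matches the paper: the cofibration hypotheses, including composition closure for $j\circ i$, are needed only for the fibrant matches and not for the strict isomorphism itself.
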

\begin{proof}
	$j\circ i$ is a cofibration (composition closure). Map $h\mapsto(h\circ j,h)$:
	$h\circ j$ satisfies $(h\circ j)\circ i \seq h\circ(ji)\seq a$, so
	$h\circ j\in\Ext{i}{\Pi_\Psi(A\circ j)}{a}$, and $h\circ j\seq h\circ j$ witnesses
	$h\in\Ext{j}{\Pi_\Omega A}{h\circ j}$. The inverse $(f,h)\mapsto h$ uses
	$h\circ(ji) \seq f\circ i\seq a$; round-trips are identities since $f$ is forced to
	be $h\circ j$.
\end{proof}

\begin{lemma}[\flink{Lemma-3-8} RS 4.5: union of cofibrations]\label{lem:rs45}
	Let $\Phi_1$ and $\Phi_2$ be subshapes of a common shape, and suppose the
	cofibration class is closed under binary meets, so that
	$\Phi_1\wedge\Phi_2\hookrightarrow\Phi_1$ is a cofibration and $\Phi_1\vee\Phi_2$ is the
	pushout of $\Phi_1\hookleftarrow\Phi_1\wedge\Phi_2\hookrightarrow\Phi_2$. Then for
	fibrant $A$ over $\Phi_1\vee\Phi_2$ and $a:\prod_{\Phi_2} A|_{\Phi_2}$,
	\[
		\Ext{\Phi_2\hookrightarrow\Phi_1\vee\Phi_2}{\Pi_{\Phi_1\vee\Phi_2}A}{a}\;\siso\;\Ext{\Phi_1\wedge\Phi_2\hookrightarrow\Phi_1}{\Pi_{\Phi_1} A}{a|_{\Phi_1\wedge\Phi_2}}.
	\]
	Without meet-closure the right-hand side need not be an extension type
	along a cofibration: $\Phi_1\wedge\Phi_2\hookrightarrow\Phi_1$, a base change of
	$\Phi_2\hookrightarrow\Phi_1\vee\Phi_2$, need not be a cofibration, as cofibrations are not provably pullback-stable.
\end{lemma}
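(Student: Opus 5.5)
The plan is to follow the pattern of \cref{lem:rs41,lem:rs43,lem:rs44}: build an explicit strict isomorphism between the underlying strict $\Sigma$-types, and check that the strict boundary conditions correspond. Write $j_1:\Phi_1\to\Phi_1\vee\Phi_2$ and $j_2:\Phi_2\to\Phi_1\vee\Phi_2$ for the pushout injections, $k_1:\Phi_1\wedge\Phi_2\to\Phi_1$ and $k_2:\Phi_1\wedge\Phi_2\to\Phi_2$ for the projections, and $\mathsf{glue}$ for the map out of the pushout. The assumptions (meet closure, and $\Phi_1\vee\Phi_2$ being the outer pushout) are used only to make both sides extension types along cofibrations. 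The isomorphism itself uses only the universal property of the pushout and the strictness of the boundary component.

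In the forward direction, send $(h,p)$ with $h:\prod_{\Phi_1\vee\Phi_2}A$ and $p:h\circ j_2\seq a$ to $(h\circ j_1,\,q)$. The witness $q:(h\circ j_1)\circ k_1\seq a\circ k_2$ is obtained by composing the strict cocone equation $j_1k_1\seq j_2k_2$ (transported along $A$) with $p$ restricted along $k_2$. Here $a|_{\Phi_1\wedge\Phi_2}$ is read as $a\circ k_2$, viewed in $A\circ j_1\circ k_1$ via that same cocone equation.

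In the backward direction, send $(g,q)$ to $(\mathsf{glue}(g,a),\mathsf{refl}^s)$. The pushout universal property for dependent maps (sections of the fibrant family $A$ over the pushout, as an outer type family) gives a unique section whose restrictions along $j_1$ and $j_2$ are strictly $g$ and $a$, exactly because $q$ supplies strict agreement on $\Phi_1\wedge\Phi_2$. The boundary witness for $j_2$ is then the computation rule of the pushout.

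For the round trips, backward-then-forward gives $g$ by the $j_1$-computation rule, with the second component equal by UIP. Forward-then-backward gives $\mathsf{glue}(h\circ j_1,a)$. This equals $h$ strictly by the uniqueness (η) part of the universal property, since both have the same restrictions: $h\circ j_2\seq a$ by $p$. Again UIP handles the boundary witnesses.

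The main obstacle is purely technical: using the universal property of the outer pushout in its \emph{dependent} form, with the strict transports along the cocone equation. If the pushout is only assumed with a non-dependent universal property, I would first derive the dependent one. To do so, apply the non-dependent property to maps into $\Sigma_{\Phi_1\vee\Phi_2}A$, then use uniqueness to show that the first component is the identity, and with it strict funext and UIP to keep the coherences trivial. The closing remark of the statement, that $\Phi_1\wedge\Phi_2\hookrightarrow\Phi_1$ need not otherwise be a cofibration, is simply an observation: the right-hand side still makes sense as a strict $\Sigma$-type, but \cref{def:ext-formal}'s fibrant match requires a cofibration. So no proof is needed beyond noting that pullback stability of cofibrations is not among the 2LTT results cited.
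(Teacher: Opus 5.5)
Your proposal is correct and matches the paper's proof: restriction along $j_1$ in one direction, copairing with $a$ in the other, and mutual inverseness from the computation and uniqueness parts of the pushout's universal property, with UIP absorbing the boundary witnesses. Your derivation of the dependent universal property from the non-dependent one, via maps into $\Sigma_{\Phi_1\vee\Phi_2}A$, makes explicit a step that the paper's one-line appeal to ``the universal property of $\Phi_1\vee\Phi_2$'' leaves implicit.
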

\begin{proof}
	By the universal property of $\Phi_1\vee\Phi_2$, a section over it is a pair of
	sections over $\Phi_1$ and $\Phi_2$ agreeing on $\Phi_1\wedge\Phi_2$. Fixing the
	$\Phi_2$-component to $a$ leaves a section over $\Phi_1$ agreeing with $a$ on
	$\Phi_1\wedge\Phi_2$. The maps are restriction and the copairing
	$\mathrm{rec}_\vee$, inverse by $\beta/\eta$.
\end{proof}

\subsection{Relative function extensionality and realignment (RS 4.6)}

Relative function extensionality (RS's Axiom 4.6) is \emph{not} an extra
assumption in the framework: it holds automatically,
because our shape inclusions are cofibrations and a cofibration preserves
\emph{trivial} fibrations (\cref{lem:rel-funext-holds}). In our notation:

\begin{definition}[\flink{Definition-3-9} Relative function extensionality, RS Axiom 4.6]\label{def:rel-funext}
	Say $(i,\UU)$ satisfies \emph{relative funext} if, for every fibrant
	$A:\Psi\to\UU$ with each $A(\psi)$ contractible and every
	$a:\prod_\Phi A\circ i$, the type $\Ext{i}{\Pi_\Psi A}{a}$ is contractible.
\end{definition}

\noindent
This is the \emph{contractibility} form of extensionality for extension types.
There are two further forms, an equivalence form and a naive one, and RS take
the contractibility form as their axiom because they can derive the other two
from it while not knowing how to derive it from either of them alone
\citep[\S4.4]{RS2017}. Here all three are theorems: this one is
\cref{lem:rel-funext-holds}, and the other two follow from it in
\cref{lem:rs48}.

\begin{remark}[What relative funext is]\label{rem:relfunext-is-shape-funext}
	For general $i$, \cref{def:rel-funext} is exactly the
	\emph{trivial-fibration} half of the cofibration property of $i$, read
	on the strict fibers of the restriction map; this is why it holds for
	our shape inclusions (\cref{lem:rel-funext-holds}). The instance
	$i = \szero \hookrightarrow\Psi$, for cofibrant $\Psi$, explains the
	\emph{function extensionality} in the name:
	there it says that $\prod_\Psi A$ is contractible whenever each
	$A(\psi)$ is, i.e.\ \emph{function extensionality for the shape-indexed
	product $\prod_\Psi$}, in contractibility form. This is
	\emph{not} a consequence of inner funext (which covers $\Pi$ over inner
	types only), nor even of inner $+$ outer funext (outer funext concerns
	$\seq$, not the inner identity type).
\end{remark}

\begin{lemma}[\flink{Lemma-3-11} RS 4.6: relative function extensionality holds for cofibrations]\label{lem:rel-funext-holds}\label{lem:rs46}
	Every cofibration $i:\Phi\hookrightarrow\Psi$ satisfies relative
	funext; in particular, every shape inclusion does. Explicitly: if each
	$A(\psi)$ is contractible, then $\Ext{i}{\Pi_\Psi A}{a}$ is contractible. No
	cofibrancy of $\Phi$ or $\Psi$ is needed.
\end{lemma}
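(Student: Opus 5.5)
The plan is to apply the defining property of a cofibration directly: the Leibniz exponential of $i$ sends trivial fibrations to trivial fibrations. The extension type is a strict fiber of a map built from $i$ and a suitable trivial fibration, so it should be contractible.

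First, I would package the family $A$ as a fibration. Let $p:\Sigma(\psi:\Psi).A(\psi)\twoheadrightarrow\Psi$ be the display map. Its strict fiber over $\psi$ is strictly isomorphic to $A(\psi)$, hence fibrant and contractible, so $p$ is a trivial fibration.

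Next, I would form the strictly commuting square with top map $l\colonequiv\lambda\phi.(i\phi,a(\phi)):\Phi\to\Sigma_\Psi A$, bottom map $\id:\Psi\to\Psi$, left map $i$ and right map $p$; it commutes strictly since $p\circ l\seq i$. By the cofibration property in the spelled-out form (\citealp[Rem.~3.14(i)]{2LTT}), the type $D$ of strict diagonal fillers is fibrant and contractible. Concretely,
\[
D\;\equiv\;\Sigma(h:\Psi\to\Sigma_\Psi A).\,(h\circ i\seq l)\times(p\circ h\seq\id).
\]

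The remaining step is a strict isomorphism $D\siso\Ext{i}{\Pi_\Psi A}{a}$ that respects the fibrant matches. A section $h$ with $p\circ h\seq\id$ is, by outer funext and strict-equality induction, the same as a dependent function $f:\prod_\psi A(\psi)$ via $h=\lambda\psi.(\psi,f\psi)$. Under this correspondence the condition $h\circ i\seq l$ becomes $f\circ i\seq a$; both sides are strict propositions by UIP, so no coherence data is lost. This is the same maneuver as in the fibered version behind \cref{lem:weak=strict-dep}, with $k=\id$ and $Z(\psi)\siso A(\psi)$.

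Contractibility is a property of the fibrant match. The match of the extension type in \cref{def:ext-formal} comes from $i^*$ being a fibration, while $D$ gets its match from the Leibniz exponential, so I expect the main obstacle to be checking that these two matches agree up to equivalence. I would handle this by noting that both are strict fibers of maps related by a strict isomorphism over the base. The restriction $i^*$ on $\prod_\Psi A$ is, up to strict isomorphism, the Leibniz map $\widehat{\mathsf{exp}}(p,i)$ restricted to the strict fiber over $\id$ in its $\Psi\to\Psi$ component. A strict isomorphism between outer types with fibrant presentations induces an equivalence of their matches, since strict equalities coerce to inner ones. Contractibility therefore transfers. Throughout, no cofibrancy of $\Phi$ or $\Psi$ is used, because only the Leibniz condition on $i$ is invoked, never fibrancy of function types out of $\Phi$ or $\Psi$.
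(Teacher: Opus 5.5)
Your proof is correct and is essentially the paper's argument. Both rest on the trivial-fibration half of the cofibration property, applied to $A$ viewed as a family of contractible (trivially fibrant) types. The paper simply cites \citet[Lem.~3.18]{2LTT} to get that $i^*:\prod_\Psi A\to\prod_\Phi A\circ i$ is a trivial fibration, and then reads off its strict fiber over $a$. You instead re-derive that instance by hand from the filler-square formulation, using the display map and $k\colonequiv\id$. Your extra bookkeeping about transferring contractibility across the strict isomorphism is harmless, and in the paper it is absorbed into that citation.
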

\begin{proof}
	If each $A(\psi)$ is contractible, then $A$ is a family of \emph{trivially
	fibrant} types. Since $i$ is a cofibration, restriction along $i$
	carries such families to \emph{trivial} fibrations \citep[Lem.~3.18]{2LTT}:
	$i^*:\prod_\Psi A\to\prod_\Phi A\circ i$ is a trivial fibration. Its strict
	fibers are therefore trivially fibrant \citep[Def.~3.7]{2LTT}; in
	particular the fiber over $a$, which is $\Ext{i}{\Pi_\Psi A}{a}$, is
	contractible. This uses exactly the trivial-fibration half of the
	cofibration property (see \cref{sec:2ltt-recall}); the fibration half
	alone does not give this argument. 
\end{proof}

\begin{longversion}
\begin{remark}[Why RS assume it]\label{rem:rs-assumes-extext}
	\Cref{lem:rel-funext-holds} is an instance of the paper's thesis that axioms
	assumed elsewhere are automatic here. RS postulate relative funext (their Axiom
	4.6) because their cofibrations are bare \emph{shape inclusions}, carrying no
	preservation of trivial fibrations; ours are the full 2LTT notion, which does.
	The \emph{fibration} half of cofibrancy alone (\citealp[Cor.~3.20]{2LTT}:
	$\prod_\Psi$ preserves fibrancy) is \emph{not} enough.
\end{remark}
\end{longversion}

Relative funext has a consequence that we will use repeatedly: components of
a structure that are \emph{property-like} in the HoTT sense can be strictly
realigned along any cofibration. In HoTT, since such components are
propositions, they are invisible to \emph{inner equalities} between structures; they
are not invisible to \emph{strict} equality, since inhabitants of
propositions are propositionally, not strictly, unique. The following says
that along a cofibration this deficit can always be repaired, using nothing
beyond the cofibration property of $i$; no cofibrancy of $\Phi$ or $\Psi$
is needed.

\begin{corollary}[\flink{Corollary-3-12} Strict realignment of propositional components]\label{cor:prop-realign}
	Let $i:\Phi\hookrightarrow\Psi$ be a cofibration, $D:\Psi\to\UU$ a
	fibrant family (\emph{data}), and let $P(\psi,d)$ be a fibrant family of
	inner propositions, for $\psi:\Psi$ and $d:D(\psi)$ (\emph{property}).
	Write $S(\psi)\colonequiv\Sigma(d:D\,\psi).\,P(\psi,d)$. Given a section
	$s:\prod_\Psi S$ and a partial section $s_0:\prod_\Phi S\circ i$ whose
	underlying data agree strictly with $s$, i.e.\
	$\pi_1(s(i\phi))\seq\pi_1(s_0\,\phi)$ for all $\phi:\Phi$, the type of
	\emph{realignments}
	\[
		\Sigma\bigl(s':\prod_\Psi S\bigr).\;
		\bigl(\pi_1\circ s'\seq\pi_1\circ s\bigr)\times
		\bigl(s'\circ i\seq s_0\bigr)
	\]
	is contractible. In particular a realignment exists, changing only the
	property components of $s$, and it is unique up to homotopy.
\end{corollary}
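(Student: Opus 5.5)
The plan is to identify the type of realignments with an extension type of a family of contractible types, and then apply \cref{lem:rel-funext-holds}. Define the fibrant family
\[
	Q(\psi)\;\colonequiv\;P\bigl(\psi,\pi_1(s\,\psi)\bigr)
\]
over $\Psi$. Each $Q(\psi)$ is an inner proposition, and it is inhabited by $\pi_2(s\,\psi)$, so it is contractible. The boundary datum is the partial section
\[
	q_0:\textstyle\prod_{\phi:\Phi}Q(i\phi),\qquad q_0(\phi)\colonequiv \bigl(e_\phi\bigr)_*\,\pi_2(s_0\,\phi),
\]
where $e_\phi:\pi_1(s_0\,\phi)\seq\pi_1(s(i\phi))$ is the given strict agreement of the data and ${}_*$ denotes strict transport in the second argument of $P$. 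Strict transport is available because $\seq$ eliminates into all types.

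The key step is a strict isomorphism between the realignment type and $\Ext{i}{\Pi_\Psi Q}{q_0}$.

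In the forward direction, take a realignment $(s',u,v)$ with $u:\pi_1\circ s'\seq\pi_1\circ s$ and $v:s'\circ i\seq s_0$. Send it to the section $\psi\mapsto (u_\psi)_*\,\pi_2(s'\psi)$ of $Q$. Its restriction agrees strictly with $q_0$: by $v$, the pair $s'(i\phi)$ is strictly $s_0(\phi)$, and the composite of the two strict transports reduces by outer UIP.

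In the backward direction, send $(q,w)$ to $s'\colonequiv\lambda\psi.(\pi_1(s\,\psi),q(\psi))$. Then $u$ is $\mathsf{refl}^s$. The equation $s'\circ i\seq s_0$ holds pointwise by $\Sigma$-path formation for strict equality, using $e_\phi$ for the first component and $w$ for the second.

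That the two round trips are identities follows from outer function extensionality, strict $\Sigma$-$\eta$, and outer UIP: both boundary components are strict propositions, so only the first components need to be compared. This is exactly the kind of calculation done in \cref{lem:rs43}. Alternatively, one could first split $\prod_\Psi S$ via \cref{lem:rs43} as an extension type of the data followed by one of the property. Fixing the data strictly to $\pi_1\circ s$, that is, taking the singleton $\Sigma$ over the strict data equation, then leaves precisely the $Q$-extension type.

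Since strict isomorphism preserves contractibility (it transports to the fibrant match), \cref{lem:rel-funext-holds} applied to $i$ and $Q$ then gives contractibility. No cofibrancy is needed, matching that lemma.

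The main obstacle is the transport bookkeeping in the second component. The boundary condition on $s_0$ relates pairs whose first components are only strictly equal, not definitionally equal. To handle this, I will generalize $\pi_1\circ s_0$ together with $e$ and do strict path induction, so that $e$ becomes reflexivity and $q_0(\phi)\seq\pi_2(s_0\phi)$ holds on the nose. Everything else reduces to UIP.

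Finally, the statement must be read on fibrant matches: contractibility is a property of the inner match of the realignment type. The isomorphism above has to be composed with the match supplied by \cref{def:ext-formal}, and this is routine.
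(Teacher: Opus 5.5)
Your proposal is correct and follows the paper's proof essentially step for step. The paper likewise passes to the fiberwise contractible family $P'(\psi)\colonequiv P(\psi,\pi_1(s\,\psi))$ and obtains the boundary $l$ by transporting $\pi_2(s_0\,\phi)$ along the strict data agreement; it then identifies the realignment type strictly with $\Ext{i}{\Pi_\Psi P'}{l}$, using the decomposition of strict equalities of pairs and outer UIP, and concludes by \cref{lem:rel-funext-holds}. Your explicit round trips and the strict path induction on the data agreement only spell out that same bookkeeping in more detail.
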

\begin{proof}
	Set $P'(\psi)\colonequiv P(\psi,\pi_1(s\,\psi))$. Each $P'(\psi)$ is a
	proposition inhabited by $\pi_2(s\,\psi)$, hence contractible, so $P'$ is
	a fiberwise contractible fibrant family. Transporting $\pi_2(s_0\,\phi)$
	along the assumed strict equality of underlying data yields a partial
	section $l:\prod_\Phi P'\circ i$. Since the strict equality of a
	$\Sigma$-pair decomposes into strict equalities of the components (the
	second transported along the first), and since the constraint
	$\pi_1\circ s'\seq\pi_1\circ s$ identifies $s'$ with a section of $P'$
	(using UIP of the outer level), the type of realignments is strictly
	isomorphic to $\Ext{i}{\Pi_\Psi P'}{l}$, which is contractible by
	\cref{lem:rel-funext-holds}.
\end{proof}

\noindent
Any property-like component qualifies: $\mathsf{isEquiv}$ (used below for
the boundary coherence of Glue structures, \cref{def:glue-structure}),
$\mathsf{isContr}$, being an $n$-type, half-adjointness data, or a
mere-existence witness.

\subsection{Extension extensionality and truncation levels (RS 4.7--4.12)}

With relative function extensionality in hand, the remaining statements of
\citet[\S4]{RS2017} follow by RS's own arguments.

\begin{lemma}[\flink{Lemma-3-13} RS 4.8: extension extensionality]\label{lem:rs48}
	For $f,g:\Ext{i}{\Pi_\Psi A}{a}$, the map
	$\mathsf{happly}:(f=g)\to\Ext{i}{\prod_{\psi}(f\psi=g\psi)}{\lambda\phi.\mathsf{refl}}$
	is an equivalence.
\end{lemma}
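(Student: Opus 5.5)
We follow Riehl and Shulman's own argument: derive the equivalence form from the contractibility form, now available as \cref{lem:rel-funext-holds}, using the fundamental theorem of identity types \citep[Thm.~5.8.4]{HoTTBook}. Fix $f:\Ext{i}{\Pi_\Psi A}{a}$ and regard $g$ as varying. The map $\mathsf{happly}$ is defined by path induction, sending $\mathsf{refl}_f$ to $\lambda\psi.\mathsf{refl}_{f\psi}$. That value is an element of the extension type $\Ext{i}{\prod_\psi(f\psi=f\psi)}{\lambda\phi.\mathsf{refl}}$, because its restriction along $i$ is strictly $\lambda\phi.\mathsf{refl}$. Here the identity type is that of the fibrant match of \cref{def:ext-formal}, so the induction is legitimate. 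It then suffices to show that the total space
\[
	T\;\colonequiv\;\Sigma\bigl(g:\Ext{i}{\Pi_\Psi A}{a}\bigr).\,\Ext{i}{\textstyle\prod_{\psi}(f\psi=g\psi)}{\lambda\phi.\mathsf{refl}}
\]
is contractible. The fundamental theorem then makes $\mathsf{happly}$ a fiberwise equivalence.

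To show $T$ is contractible, we rewrite it as a single extension type using the strict isomorphisms already proved. First we unfold $g$ as a pair $(g,p)$ with $p:g\circ i\seq a$ and reassociate. Then \cref{lem:rs43} (extending into $\Sigma$) identifies $T$, up to strict isomorphism, with
\[
	\Ext{i}{\textstyle\prod_{\psi:\Psi}\Sigma(y:A\psi).\,(f\psi=y)}{\lambda\phi.(a\phi,\mathsf{refl})}.
\]
Two details must be checked here. The boundary of $g$ is $a$. The boundary of the path component is $\mathsf{refl}$, and this is well-typed because $f\circ i\seq a\seq g\circ i$ strictly; we use outer UIP to transport the datum $\lambda\phi.\mathsf{refl}$ along these strict equalities. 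The family $\psi\mapsto\Sigma(y:A\psi).(f\psi=y)$ is fibrant, and each fiber is a based path space, hence contractible. \Cref{lem:rel-funext-holds} therefore makes this extension type contractible. Contractibility transfers along the strict isomorphism, and so along an equivalence of fibrant matches, so $T$ is contractible.

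We expect the main obstacle to be the bookkeeping in the middle step, not the homotopy theory. Specifically, we must verify that the strict isomorphism of \cref{lem:rs43} really matches the given boundary datum, including the transport of $\mathsf{refl}$ along $f\circ i\seq g\circ i$. We must also verify that the equivalence is stated for the fibrant matches rather than for the outer realizations. Our approach is to work with $g$ generalized together with its boundary witness, and to perform strict-equality induction on $p$ before comparing. After that, $f\circ i$ and $g\circ i$ coincide definitionally and the boundary data agree on the nose. Finally, we would note that the map produced by the fundamental theorem is, by construction, the path-induction map $\mathsf{happly}$, so the equivalence holds for the canonical map as stated.
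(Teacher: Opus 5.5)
Your proposal is correct and is essentially the paper's argument. You fix $f$ and reduce to contractibility of the total space: you cite the fundamental theorem of identity types, while the paper compares the contractible based path space with $T$ via the total map. You then identify $T$ by \cref{lem:rs43} with the extension type of the based path spaces $\Sigma(y:A\psi).\,(f\psi=y)$ and conclude by \cref{lem:rs46}, exactly as the paper does. Your extra care with the boundary transport of $\mathsf{refl}$ and with working on fibrant matches is bookkeeping the paper leaves implicit.
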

\begin{proof}
	Fix $f$; it suffices that the induced map of total spaces over $g$ is an
	equivalence. Its domain $\sum_g(f=g)$ is contractible (a based identity type). Its
	codomain
	$\sum_g\Ext{i}{\prod_{\psi}(f\psi=g\psi)}{\lambda\phi.\mathsf{refl}}$ is,
	by \cref{lem:rs43}, equivalent to
	$\Ext{i}{\prod_{\psi}\sum_{y:A(\psi)}(f\psi=y)}{\lambda\phi.(a\phi,\mathsf{refl})}$;
	each $\sum_{y}(f\psi=y)$ is contractible, so this is contractible by
	\cref{lem:rs46}. A map of contractible types is an equivalence.
\end{proof}

The map of \cref{lem:rs48} is RS's (4.7) and the lemma their Proposition
4.8(i). Its domain is the identity type \emph{of the extension type}, which the
lemma identifies with relative homotopies restricting to $\mathsf{refl}$ on
$\Phi$: that is the \emph{relative} in the name (RS Remark 4.9), the
\emph{function extensionality} being the instance $\Phi=\szero$ that
\cref{rem:relfunext-is-shape-funext} reads off the contractibility form. This
also completes the three forms announced after \cref{def:rel-funext}: the
contractibility form is \cref{def:rel-funext}, a theorem by \cref{lem:rs46};
the equivalence form is \cref{lem:rs48}; and the naive form (RS 4.8(ii): two
extensions that are pointwise equal relative to $\Phi$ are equal) follows by
inverting it. In the Rzk formalization \citep{KRW2024} the three are
\texttt{WeakExtExt}, \texttt{ExtExt} and \texttt{NaiveExtExt}, with
\texttt{ExtExt}$\Leftrightarrow$\texttt{WeakExtExt} observed by T.\ Walde.

\begin{lemma}[\flink{Lemma-3-14} RS 4.10: the homotopy extension property]\label{lem:rs410}\label{cor:RS-equiv-ext}
	Let $i : \Phi \hookrightarrow \Psi$ be a cofibration,
	let $Y : \Psi \to \UU$ be a fibrant family, let
	$b : \prod_{\psi:\Psi} Y(\psi)$ be a total section, and let
	$a : \prod_{\phi:\Phi} Y(i\phi)$ be a partial section that agrees with $b$
	up to homotopy, $e : \prod_{\phi:\Phi}\, a(\phi) = b(i\phi)$.
	Then there is a total section $a' : \prod_{\psi:\Psi} Y(\psi)$ with
	$a' \circ i \seq a$, together with a homotopy
	$e' : \prod_{\psi:\Psi}\, a'(\psi) = b(\psi)$ with $e' \circ i \seq e$;
	in
	extension-type notation, $a':\Ext{i}{\Pi_\Psi Y}{a}$ and
	$e':\Ext{i}{\prod_{\psi}(a'\psi=b\psi)}{e}$.

	In particular, for a constant family: given $f : \Phi \to Y$ and
	$g : \Psi \to Y$ with $Y$ fibrant such that the triangle on the left
	commutes weakly, witnessed by
	$H : \prod_{\phi:\Phi}\, f(\phi) = g(i\phi)$, there is $h : \Psi \to Y$
	such that, in the diagram on the right,
	\[
		\begin{tikzcd}[column sep=small, row sep=small]
			\Phi  \ar[r,"f"] \ar[d,hookrightarrow,"i"]
			& Y \\
			\Psi \ar[ur,"g"',bend right] &
		\end{tikzcd}
		\qquad\rightsquigarrow\qquad
		\begin{tikzcd}[column sep=small, row sep=small]
			\Phi  \ar[r,"f"] \ar[d,hookrightarrow,"i"]
			& Y  \\
			\Psi \ar[ur,"g"',bend right] \ar[ur,"h"]&
		\end{tikzcd}
	\]
	the top triangle commutes strictly, $h \circ i \seq f$, and the bottom
	\enquote{globe} commutes weakly, witnessed by
	$G : \prod_{\psi:\Psi}\, h(\psi) = g(\psi)$, in such a way that the
	restriction of the globe to $\Phi$ is \emph{strictly} the original
	filler: $G \circ i \seq H$.
\end{lemma}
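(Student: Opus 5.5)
The plan is to obtain both $a'$ and $e'$ at once as a single extension into a family of contractible types, and then apply \cref{lem:rel-funext-holds}. For each $\psi:\Psi$ set
\[
	C(\psi)\;\colonequiv\;\Sigma\bigl(y:Y(\psi)\bigr).\,\bigl(y=b(\psi)\bigr),
\]
the type of elements of $Y(\psi)$ together with a path to $b(\psi)$. This is a fibrant family over $\Psi$, since it is built from $Y$ by inner $\Sigma$ and identity types. Each $C(\psi)$ is contractible, being a based path space with centre $(b\psi,\mathsf{refl})$. The data $a$ and $e$ assemble into a partial section
\[
	c_0 \;\colonequiv\; \lambda\phi.\,(a\phi,\,e\phi)\;:\;\textstyle\prod_{\phi:\Phi} C(i\phi).
\]

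By \cref{lem:rel-funext-holds}, applied to $i$, the fibrant family $C$ and the partial section $c_0$, the extension type $\Ext{i}{\Pi_\Psi C}{c_0}$ is contractible. In particular it has a centre $(c,p)$, where $c:\prod_\Psi C$ and $p:c\circ i\seq c_0$. No cofibrancy of $\Phi$ or $\Psi$ is needed for this step, in line with the statement.

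Next we split $c$ using \cref{lem:rs43}. The family $C$ is a $\Sigma$ over $\psi$ with first component $X\colonequiv Y$ and second component $(\psi,y)\mapsto (y=b\psi)$. The boundary $c_0$ is literally the pair $(a,e)$. So the strict isomorphism of \cref{lem:rs43} turns $(c,p)$ into two pieces. The first is $a'\colonequiv\lambda\psi.\pi_1(c\psi)$, an element of $\Ext{i}{\Pi_\Psi Y}{a}$. The second is $e'\colonequiv\lambda\psi.\pi_2(c\psi)$, an element of $\Ext{i}{\prod_\psi(a'\psi=b\psi)}{e}$. Concretely, the strict equality $p$ decomposes componentwise into $a'\circ i\seq a$ and $e'\circ i\seq e$. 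The second of these is well typed because the family $a'\psi=b\psi$ restricted along $i$ is strictly $a\phi=b(i\phi)$, and this holds after transport along the first equality. Making this transport explicit, using the outer UIP, is the only delicate point. I expect it to be the main, though small, obstacle; \cref{lem:rs43} already packages exactly this bookkeeping.

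The constant-family case is the specialisation $Y(\psi)\colonequiv Y$, with $b\colonequiv g$, $a\colonequiv f$ and $e\colonequiv H$. Then $h\colonequiv a'$ and $G\colonequiv e'$ give $h\circ i\seq f$ and $G\circ i\seq H$. As a bonus, contractibility shows that the pair $(a',e')$ is unique up to homotopy.
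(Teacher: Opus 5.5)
Your proposal is correct and matches the paper's primary proof: contractibility of $\Ext{i}{\prod_{\psi}\sum_{y:Y(\psi)}(y=b\psi)}{\lambda\phi.(a\phi,e\phi)}$ by relative funext (\cref{lem:rel-funext-holds}), then \cref{lem:rs43} to split the extension into $a'$ and $e'$, with no cofibrancy of $\Phi$ or $\Psi$ needed. The constant-family specialisation is also the same as the paper's. (The paper additionally sketches an alternative for cofibrant $\Phi$ via \cref{lem:weak=strict-dep}, but your route is the main one.)
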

\begin{proof}
	In RS's style:
	$\Ext{i}{\prod_{\psi}\sum_{y:Y(\psi)}(y=b\psi)}{\lambda\phi.(a\phi,e\phi)}$
	is contractible by \cref{lem:rs46}, hence inhabited, and \cref{lem:rs43}
	extracts $a'$ and $e'$; note that no cofibrancy of $\Phi$ or $\Psi$ is
	needed. Alternatively, for \emph{cofibrant} $\Phi$, consider the family
	of based identity types
	$Y'(\psi) \colonequiv \Sigma(y : Y(\psi)).\,(y = b(\psi))$, a fibrant
	family with each $Y'(\psi)$ contractible. The pair
	$\lambda\phi.\,(a(\phi),\, e(\phi))$ is a partial section of $Y'$ along
	$i$. The type of its \emph{homotopy} extensions is inhabited (any two
	elements of the contractible $\prod_{\phi:\Phi} Y'(i\phi)$ are equal),
	so by \cref{lem:weak=strict-dep} the type of \emph{strict} extensions
	is inhabited as well; its components are $a'$ and $e'$.
	The constant-family instance is
	$b \colonequiv g$, $a \colonequiv f$, $e \colonequiv H$,
	$h \colonequiv a'$, $G \colonequiv e'$.
\end{proof}

\begin{longversion}
The contrast with RS is instructive: RS \emph{assume} relative funext to
derive their Proposition 4.10, whereas here it is a theorem; and the
alternative proof needs no extensionality beyond the agreement of the
strict and homotopy fibers of
$i^*$ (\cref{lem:strict-vs-homotopy-fibre}). The weak-to-strict content of RS
\S4.4 is thus free in 2LTT.
\end{longversion}

\begin{lemma}[\flink{Lemma-3-15-general} RS 4.12: $n$-types]\label{lem:rs412}
	If each $A(\psi)$ is an $n$-type, then $\Ext{i}{\Pi_\Psi A}{a}$ is an $n$-type.
\end{lemma}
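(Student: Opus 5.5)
We argue by induction on $n\ge -2$, generalizing over the family $A$, the partial section $a$, and the level $n$ simultaneously; the cofibration $i$ stays fixed throughout.

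\emph{Base case.} For $n=-2$ the statement says that if each $A(\psi)$ is contractible then $\Ext{i}{\Pi_\Psi A}{a}$ is contractible. This is exactly \cref{lem:rel-funext-holds}.

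\emph{Inductive step.} Suppose each $A(\psi)$ is an $(n+1)$-type. We must show that for any $f,g:\Ext{i}{\Pi_\Psi A}{a}$ the identity type $f=g$ is an $n$-type. By \cref{lem:rs48} there is an equivalence
\[
(f=g)\;\simeq\;\Ext{i}{\textstyle\prod_{\psi}(f\psi=g\psi)}{\lambda\phi.\mathsf{refl}}.
\]
The right-hand side is again an extension type along the same cofibration $i$. Its family is $\psi\mapsto (f\psi=g\psi)$, an inner identity family over the fibrant family $A$, and hence fibrant. Each fibre $f\psi=g\psi$ is an $n$-type, because $A(\psi)$ is an $(n+1)$-type. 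The induction hypothesis, applied to this family, shows the right-hand side is an $n$-type. Since truncation levels are invariant under equivalence, $f=g$ is an $n$-type, and therefore $\Ext{i}{\Pi_\Psi A}{a}$ is an $(n+1)$-type.

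\emph{Main obstacle: bookkeeping.} There are two points to check.

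First, the boundary datum $\lambda\phi.\mathsf{refl}$ must be well-typed as a partial section. This requires $f(i\phi)=g(i\phi)$, which holds because both sides are strictly equal to $a(\phi)$: we transport $\mathsf{refl}$ along the two strict boundary witnesses. This is the same convention \cref{lem:rs48} already uses.

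Second, the statement ``is an $n$-type'' must be read on the fibrant match of the extension type, as in \cref{rem:no-T3}. Otherwise the inner identity type $f=g$ is not defined. The equivalence of \cref{lem:rs48} is stated on that match, and the induction hypothesis is applied to the match of the new extension type, so the two readings agree.

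No cofibrancy of $\Phi$ or $\Psi$ is needed anywhere, since neither \cref{lem:rel-funext-holds} nor \cref{lem:rs48} requires it.
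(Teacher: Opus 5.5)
Your proof is correct and follows the paper's proof: induction on $n\ge-2$, with the base case given by relative function extensionality (\cref{lem:rel-funext-holds}) and the inductive step given by extension extensionality (\cref{lem:rs48}), applied with the induction hypothesis generalized over the family and the partial section. Your bookkeeping remarks (the well-typedness of the boundary datum $\lambda\phi.\mathsf{refl}$ via the strict boundary witnesses, and reading truncation levels on the fibrant match) spell out points the paper leaves implicit, and they are consistent with its conventions.
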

\begin{proof}
	Induction on $n\ge-2$. For $n=-2$ this is \cref{lem:rs46}. For the step, given
	$f,g:\Ext{i}{\Pi_\Psi A}{a}$, \cref{lem:rs48} gives
	$(f=g)\simeq\Ext{i}{\prod_{\psi}(f\psi=g\psi)}{\lambda\phi.\mathsf{refl}}$;
	each $f\psi=g\psi$ is an $(n-1)$-type, so by the inductive hypothesis this
	extension type, hence $f=g$, is an $(n-1)$-type. Thus $\Ext{i}{\Pi_\Psi A}{a}$ is
	an $n$-type.
\end{proof}

\section{Gluing}\label{sec:gluing}

In this section and the next, the framework is instantiated to the cubical
setting. We first study
the central cubical type former, gluing.

Gluing can be formalized at several levels, from a literal rendering of the
syntactic rules to purely structural conditions; the differences in strength
carry real content. We consider four definitions:
\begin{enumerate}[noitemsep]
	\item the \emph{literal CCHM rules}
	(\cref{def:literally-translated-glue-rules}): the rules of
	\citet[Fig.~4]{CCHM2018}, internalized operation by operation, with the
	judgmental equalities rendered as strict ones;
	\item \emph{Glue structures} (\cref{def:glue-structure}): a glued family
	with strict type boundary and an $\unglue$ equivalence, strictly
	coherent with the input equivalence (a rule-style repackaging, the
	\emph{homotopy form} of the rules, is identified in
	\cref{lem:homotopy-vs-semantic});
	\item \emph{contractible Glue data} (\cref{def:strong-glue}): the type
	of Glue data is contractible, i.e.\ gluing exists uniquely;
	\item the \emph{weak Glue structure} (\cref{def:weak-glue}): a glued
	family for a single interval (no cofibration assumed), with boundary agreement only
	by inner equalities in the universe.
\end{enumerate}
Definitions (3) and (4) appear in \cref{sec:glue-ua}: they are the two ends
through which gluing is compared with univalence. The relations between
these definitions are as follows. Definitions (1) and (2) differ in exactly
two ways, both measured by one canonical comparison map $\Theta$
(\cref{lem:constructor-package}): the element-level rules of (1) are
equivalent to a \emph{strict} two-sided inverse of $\Theta$, with no
cofibrancy involved, whereas over cofibrant shapes (2) inverts $\Theta$ up
to homotopy; and (1) yields invertibility of $\unglue$ only at the level of
sections, whereas (2) demands it pointwise. Downwards, (3) implies (2), which,
instantiated at the endpoint inclusion of an interval, implies (4)
(\cref{lem:glue-strength-chain}). Univalence closes the circle: it implies
(3) for every cofibration (\cref{thm:ua-implies-strong-glue}), and (4),
for a path interval, already implies univalence
(\cref{thm:weak-glue-implies-ua}). The resulting cycle makes univalence,
(3), (4), and (2) along the interval cofibration
all equivalent (\cref{thm:glue-sandwich}.

\subsection{CCHM Glue}

\Cref{fig:cchm} is directly taken from \citet[Figure 4]{CCHM2018}; only the rule names are
added.

\vspace*{.5cm}

\begin{mdframed}
	\begin{mathpar}\label{figure:glue-rules-from-CCHM}
		\inferrule*[right=\rulename{form}]
		{\Gamma \vdash A \\ \Gamma,\varphi \vdash T \\
			\Gamma,\varphi \vdash f : {\sf Equiv}~T~A}
		{\Gamma \vdash {\sf Glue}~[\varphi \mapsto (T,f)]~A}
		\and
		\inferrule*[right=\rulename{unglue}]
		{\Gamma \vdash b : {\sf Glue}~[\varphi \mapsto (T,f)]~A}
		{\Gamma \vdash {\sf unglue} \; b : A[\varphi \mapsto f \;
			b]}
		\and
		\inferrule*[right=\rulename{glue}]
		{\Gamma,\varphi \vdash f : {\sf Equiv}~T~A \\
			\Gamma,\varphi \vdash t : T \\
			\Gamma \vdash a : A[\varphi \mapsto f \; t]}
		{\Gamma \vdash {\sf glue}~[\varphi \mapsto t]~a :
			{\sf Glue}~[\varphi \mapsto (T,f)]~A}

		\and
		\inferrule*[right=\rulename{type-bdry}]
		{\Gamma \vdash T \\ \Gamma \vdash f : {\sf Equiv}~T~A}
		{\Gamma \vdash {\sf Glue}~[1_\mathbb{F} \mapsto (T,f)]~A = T}
		\and
		\inferrule*[right=\rulename{term-bdry}]
		{\Gamma \vdash t : T\\
			\Gamma \vdash f : {\sf Equiv}~T~A}
		{\Gamma \vdash {\sf glue}~[1_\mathbb{F} \mapsto t]~(f \; t) = t : T}
		\and
		\inferrule*[right=\rulename{eta}]
		{\Gamma \vdash b : {\sf Glue}~[\varphi \mapsto (T,f)]~A}
		{\Gamma \vdash b = {\sf glue}~[\varphi \mapsto b]~({\sf unglue} \; b) :
			{\sf Glue}~[\varphi \mapsto (T,f)]~A}
		\and
		\inferrule*[right=\rulename{beta}]
		{\Gamma,\varphi \vdash f : {\sf Equiv}~T~A \\
			\Gamma, \varphi \vdash t : T \\
			\Gamma \vdash a : A[\varphi \mapsto f \; t ]}
		{\Gamma \vdash {\sf unglue} \; ({\sf glue}~[\varphi\mapsto t]~a) = a : A}
	\end{mathpar}
	\captionof{figure}{CCHM gluing as presented in \citet[Figure 4]{CCHM2018}.}\label{fig:cchm}
\end{mdframed}

\vspace*{.5cm}

Our first step is to internalize this.
In our setting, the annotated type $A[\varphi \mapsto a]$ is exactly the
extension type $\Ext{i}{\Pi_\Gamma A}{a}$ of \cref{def:ext-formal}.
The annotation $A[\varphi \mapsto f\,b]$ is subtler: it applies $f$ to the
restriction of $b : \Glue$, which is an element of $T$ only by virtue of the
boundary rule \rulename{type-bdry}. In \cref{fig:cchm}
this is silent, the equality being judgmental; internally it forces us to
introduce the strict boundary of $\Glue$ \emph{before} any rule that mentions
elements, and to coerce along it. This dictates the order of the items in the definition below.

\begin{definition}[\flink{Definition-4-1} The CCHM rules on sections]\label{def:literally-translated-glue-rules}
	Let $\UU$ be a fibrant universe. We say that $\UU$ \emph{has CCHM Glue on sections} if we
	have the operations (1)--(7) below (one for each of the seven rules of
	\cref{fig:cchm}, in a different order) for every cofibration
	$i : \Phi \hookrightarrow \Gamma$ (here and throughout the gluing
	development we write $\Gamma$ for the codomain of the cofibration,
	matching the ambient context of \cref{fig:cchm}; the extension-type sections
	keep $\Psi$) and all data
	\[
	A:\Gamma\to\UU,
	\qquad
	T:\Phi\to\UU,
	\qquad
	f:\prod_{\phi:\Phi} T(\phi)\simeq A(i\phi).
	\]
	(We suppress the dependence of the operations on $(i,A,T,f)$ except in the
	name of the glued family itself.)
	\begin{enumerate}
		\item \emph{(Formation; rule \rulename{form}.)} A family
		$\Glue^{T,f,A} : \Gamma\to\UU$.
		\item \emph{(Type boundary; rule \rulename{type-bdry}.)} A strict equality
		\[
		p \,:\, \Glue^{T,f,A}\circ i \,\seq\, T.
		\]
		In \cref{fig:cchm}, this rule is stated at the top face formula
		$1_\mathbb{F}$, i.e.\ the instance $i=\id$. By stating it with
		a general cofibration, we add the strict
		stability of the $\Glue$ former under the substitution $i$, which
		in \citet{CCHM2018} is captured by the ambient presheaf rather than \cref{fig:cchm}.
		\item \emph{(unglue; rule \rulename{unglue}.)} A function
		\[
		\unglue \,:\, \prod_{\gamma:\Gamma}\, \Glue^{T,f,A}(\gamma) \to A(\gamma)
		\]
		such that, for every $\phi:\Phi$, the function $\unglue_{i\phi}$ is strictly
		equal to $f^p_\phi$, where $f^p_\phi : \Glue^{T,f,A}(i\phi)\to A(i\phi)$
		denotes $f_\phi$ with its source coerced along the type boundary $p$ of (2).
		This strict equality internalizes the boundary annotation
		$A[\varphi\mapsto f\,b]$ at the \emph{generic} element $b$.
		Consequently, for a section
		$b : \prod_{\gamma:\Gamma}\Glue^{T,f,A}(\gamma)$, the composite
		$\unglue \odot b \colonequiv \lambda\gamma.\,\unglue_\gamma(b\,\gamma)$
		carries a strict boundary witness and is an element of
		$\Ext{i}{\Pi_\Gamma A}{\;\lambda\phi.\,f^p_\phi(b(i\phi))}$,
		the literal reading of \rulename{unglue} in the ambient context
		$\Gamma$.
		\item \emph{(glue; rule \rulename{glue}.)} For every partial section
		$t : \prod_{\phi:\Phi} T(\phi)$ and every
		$a : \Ext{i}{\Pi_\Gamma A}{\;\lambda\phi.\,f_\phi(t\,\phi)}$, an
		element
		\[
		\glue(t,a) \,:\, \prod_{\gamma:\Gamma} \Glue^{T,f,A}(\gamma).
		\]
		We write $a_0 : \prod_{\gamma:\Gamma}A(\gamma)$ for the underlying section
		of $a$ (\cref{def:ext-formal}).
		\item \emph{(Boundary of glue; rule \rulename{term-bdry}.)} In the situation of (4), and coerced along $p$,
		\[
		\glue(t,a)\circ i \,\seq\, t.
		\]
		The same comment as for \rulename{type-bdry} applies: in \cref{fig:cchm}, \rulename{term-bdry} is stated at $1_\mathbb{F}$ only;
		taking into account the (silent) stability under substitution, we get this general form.
		\item \emph{($\eta$; rule \rulename{eta}.)} For every
		$b : \prod_{\gamma:\Gamma}\Glue^{T,f,A}(\gamma)$,
		\[
		b \,\seq\, \glue\bigl(b\circ i,\ \unglue\odot b\bigr),
		\]
		where $b\circ i$ is regarded as an element of $\prod_\Phi T$ via $p$,
		and $\unglue\odot b$ carries the strict boundary witness of item (3),
		which matches the annotation required in item (4) since
		$(\unglue\odot b)\circ i \seq \lambda\phi.\,f^p_\phi(b(i\phi))$.
		\item \emph{($\beta$; rule \rulename{beta}.)} In the situation of (4),
		\[
		\unglue\odot\glue(t,a) \,\seq\, a_0.
		\]
	\end{enumerate}
\end{definition}

Items (1)--(3) are generic: a family, a strict equality of families, and a map
given at every $\gamma:\Gamma$. Items (4)--(7) are asserted in the ambient
context $\Gamma$, where all their premises are sections; the rules of \cref{fig:cchm}
apply after every substitution into $\Gamma$ as well, and we do not demand that.
Their generic form is not available in the vocabulary of families over $\Phi$
and $\Gamma$ and precomposition with $i$. A partial element over
$\gamma:\Gamma$ is a section of $T$ over the strict fiber
$\sfib{i}(\gamma) \equiv \Sigma(\phi:\Phi).\,(i\phi\seq\gamma)$, and those data do not
provide it.

We call items (1)--(3) of \cref{def:literally-translated-glue-rules},
together with the rule that
$\unglue_\gamma:\Glue^{T,f,A}(\gamma)\to A(\gamma)$ is an equivalence for
every $\gamma:\Gamma$, the \emph{homotopy form} of the Glue rules; up to a
canonical adjustment of the $\mathsf{isEquiv}$ witnesses, it is precisely a
Glue structure as defined in the next subsection (cf.~\cref{lem:homotopy-vs-semantic} below).

The two rule packages are related by a canonical comparison map. For an
input datum $(A,T,f)$ and $t:\prod_\Phi T$, write
$f\cdot t\colonequiv\lambda\phi.\,f_\phi(t\,\phi):\prod_\Phi A\circ i$.

\begin{lemma}[\flink{Lemma-4-2-i-to} The constructor package inverts a comparison map]\label{lem:constructor-package}
	Let $\UU$ be a fibrant universe, $i:\Phi\hookrightarrow\Gamma$ a cofibration, $(A,T,f)$ an input datum, and
	$(G,p,\unglue)$ with $G\colonequiv\Glue^{T,f,A}$ data as in items
	(1)--(3) of \cref{def:literally-translated-glue-rules}. Define the
	\emph{restriction--unglue map}
	\[
		\Theta \,:\, \prod_{\gamma:\Gamma}G(\gamma)
		\;\longrightarrow\;
		\sum_{t:\prod_\Phi T}\Ext{i}{\Pi_\Gamma A}{f\cdot t},
		\qquad
		\Theta(b)\colonequiv\bigl(b\circ i,\;(\unglue\odot b,\,w_b)\bigr),
	\]
	where $w_b$ is the strict boundary witness supplied by item (3) and coercion by the type boundary equation (2) is implicit.
	\begin{enumerate}
		\item The element-level rules, items (4)--(7) of
		\cref{def:literally-translated-glue-rules}, hold for this input
		if and only if $\Theta$ admits a strict two-sided inverse (which is
		then $\glue$ itself). This is a purely formal reformulation; no
		cofibrancy is used.
		\item Suppose $\Phi$ is cofibrant (hence so is $\Gamma$) and each
		$\unglue_\gamma$ is an equivalence, as in the homotopy form. Then
		$\Theta$ is an equivalence
		of fibrant types. Consequently the constructor package holds up to
		homotopy: there is $\glue(t,a):\prod_\Gamma G$ with \emph{strict}
		boundary $\glue(t,a)\circ i\seq t$ (\rulename{term-bdry}) and with
		inner equalities $\Theta(\glue(t,a))=(t,a)$ and $\glue(\Theta\,b)=b$
		(\rulename{beta} and \rulename{eta} in linked form; the
		componentwise laws follow by projecting). Disregarding the strict
		boundary, the pair consisting of $\glue(t,a)$ and the inner equality
		$\Theta(\glue(t,a))=(t,a)$ is contractibly unique, being an element
		of the homotopy fiber of the equivalence $\Theta$ over $(t,a)$.
		\item If $\Phi$ is cofibrant and $\Theta$ admits a strict two-sided
		inverse, then $\unglue\odot(-):\prod_\Gamma G\to\prod_\Gamma A$ is an
		equivalence: invertibility at the level of sections, without any
		assumption of pointwise invertibility of $\unglue$.
	\end{enumerate}
	In short,
	\[
		\begin{aligned}
			&\text{homotopy form} \;+\; \text{strict two-sided inverses of }\Theta\\
			&\qquad=\;\; \text{literal rules} \;+\; \text{pointwise invertibility of }\unglue.
		\end{aligned}
	\]
\end{lemma}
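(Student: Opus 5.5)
Part (1) is a definitional unfolding. Given $\glue$ from item (4), its type says that $\glue$ is a map $\sum_{t}\Ext{i}{\Pi_\Gamma A}{f\cdot t}\to\prod_\Gamma G$. We read off the three remaining items in terms of it. Item (5) together with item (7) says exactly that $\Theta\circ\glue$ is strictly the identity. The first component of $\Theta(\glue(t,a))$ is $\glue(t,a)\circ i$, which is $t$ by (5). The second is $\unglue\odot\glue(t,a)$, which is $a_0$ by (7), and the boundary witnesses agree because they inhabit strict propositions (outer UIP). Item (6) says precisely that $\glue\circ\Theta$ is strictly the identity. Conversely, from a strict two-sided inverse $\Psi$ of $\Theta$ we set $\glue\colonequiv\Psi$ and obtain (4)--(7) by projecting the two round-trip equations. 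No fibrancy or cofibrancy enters, so this matches the ``purely formal'' claim. The one care point is the transport along $p$, which I would handle by strict-equality induction on $p$ so that $G\circ i$ and $T$ become judgmentally identified.

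For (2), fibrancy comes first. Since $\Phi$ and $\Gamma$ are cofibrant, $\prod_\Gamma G$ and $\prod_\Phi T$ are fibrant \citep[Cor.~3.20]{2LTT}, and each extension type is fibrant by \cref{def:ext-formal}. To prove that $\Theta$ is an equivalence, I factor it. Consider the map $U\colonequiv\unglue\odot(-):\prod_\Gamma G\to\prod_\Gamma A$. It is an equivalence because it is a $\prod$ over the cofibrant $\Gamma$ of pointwise equivalences; this is shape-indexed funext, obtainable from \cref{lem:rel-funext-holds} at $\szero\hookrightarrow\Gamma$ applied to contractible fibers. The codomain of $\Theta$ is strictly isomorphic to $\prod_\Gamma A$, since every $a_0$ determines $t$ uniquely via $t\colonequiv f^{-1}\cdot(a_0\circ i)$. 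That is not strict, though, so I instead argue fiberwise over the restriction map.

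Concretely, $\Theta$ lies over $\prod_\Phi T$: the square with top map $U$, bottom map $f\cdot(-):\prod_\Phi T\to\prod_\Phi A\circ i$, and the two restriction maps $i^*$ (which are fibrations) commutes strictly, by item (3). On strict fibers over $t$, $\Theta$ is the induced map $\sfib{(i^*)}(t)\to\sfib{(i^*)}(f\cdot t)=\Ext{i}{\Pi_\Gamma A}{f\cdot t}$. By \cref{lem:strict-vs-homotopy-fibre} the strict fibers agree with the homotopy fibers. A map of homotopy fibers induced by a square whose horizontal maps $U$ and $f\cdot(-)$ are equivalences is an equivalence. Totalizing gives that $\Theta$ is an equivalence. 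This is the step I expect to be the main obstacle, namely keeping track of the fibrant matches and of the coercion along $p$.

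The remaining claims of (2) then follow. Take the inverse $\glue'$ of $\Theta$ up to homotopy; it has $\Theta(\glue'(t,a))=(t,a)$, whose first component is only the inner equality $\glue'(t,a)\circ i=t$. Apply \cref{lem:rs410} to $b\colonequiv\glue'(t,a)$, the partial section $t$ and that homotopy. This realigns $b$ to a section $\glue(t,a)$ with strict boundary $\glue(t,a)\circ i\seq t$ that is homotopic to $b$, and the homotopy identities transfer along it. Contractible uniqueness is the standard fact that homotopy fibers of an equivalence are contractible.

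For (3), a strict inverse is in particular an inverse up to homotopy, so $\Theta$ is an equivalence. Composing with the projection $\sum_t\Ext{i}{\Pi_\Gamma A}{f\cdot t}\to\prod_\Gamma A$, $(t,(a_0,w))\mapsto a_0$, gives $U$. This projection is an equivalence, because its homotopy fiber over $a_0$ is the type of $t$ together with a strict witness $f\cdot t\seq a_0\circ i$. By \cref{lem:strict-vs-homotopy-fibre} this fiber is equivalent to the homotopy fiber of the equivalence $f\cdot(-)$, which is contractible. Hence $U$ is an equivalence by 2-out-of-3. The summary display then just combines (1)--(3).
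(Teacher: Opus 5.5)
Parts (1) and (2) of your proposal are sound. Part (1) is the paper's bookkeeping argument. Part (2) takes a different route from the paper. The paper writes $\unglue\odot(-)=\kappa\circ\Theta$, with $\kappa:(t,(a_0,w))\mapsto a_0$, and concludes by 2-out-of-3. You instead compare $\Theta$ fiberwise over $\prod_\Phi T$, applying \cref{lem:strict-vs-homotopy-fibre} to the two restriction fibrations $i^*$. This avoids $\kappa$ altogether. Your appeal to \cref{lem:rs410} also spells out how the strict \rulename{term-bdry} is obtained, which the paper's sketch leaves implicit.

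The gap is in part (3), where you do need $\kappa$ to be an equivalence. The type $\Sigma(t:\prod_\Phi T).\,(f\cdot t\seq a_0\circ i)$ that you call the homotopy fiber of $\kappa$ over $a_0$ is its \emph{strict} fiber, namely $\sfib{(f\cdot -)}(a_0\circ i)$. Moreover, \cref{lem:strict-vs-homotopy-fibre} does not apply to $f\cdot(-)$. That lemma requires a fibration, but $f\cdot(-)$ is built from arbitrary inner equivalences $f_\phi$, which need not be fibrations. So both of your steps tacitly assume that $\kappa$ (a strict pullback of $f\cdot(-)$) is a fibration.

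A small example shows this fails. Take $\Phi=\Gamma=\one$, $T=\one$, and $f$ the inclusion of a point $a$ into a contractible $A$. Then $\kappa$ is the point $a:\one\to A$. Its homotopy fiber over $x$ is $a=x$, which is contractible, whereas your type is $a\seq x$, which in general is not even fibrant as a family in $x$.

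The repair applies the lemma to the map that \emph{is} a fibration. The domain of $\kappa$ is $\Sigma_t\,\sfib{(i^*_A)}(f\cdot t)$, the strict pullback of the fibration $i^*_A$ along $f\cdot(-)$. Fiberwise over $t$, it is therefore equivalent to $\Sigma_t\,\hfib{(i^*_A)}(f\cdot t)$, compatibly with the projection to $\prod_\Gamma A$. The homotopy fiber of that projection over $a_0$ is, up to inverting paths, $\hfib{(f\cdot -)}(a_0\circ i)$. This is contractible because $f\cdot(-)$ is an equivalence.

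Equivalently, you can reuse your own part-(2) pattern. The map $\kappa$ covers the equivalence $f\cdot(-)$. It is the identity on the strict fibers of the two fibrations $\Sigma_t\,\Ext{i}{\Pi_\Gamma A}{f\cdot t}\to\prod_\Phi T$ and $i^*_A$, hence an equivalence on their homotopy fibers.
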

\begin{proof}[Proof sketch]
	Item (1) is bookkeeping: items (5) and (7) make the two components of
	$\Theta(\glue(t,a))$ strictly $t$ and $a_0$, the remaining witnesses inhabit
	strict propositions and agree by outer UIP, and item (6) is
	$\glue(\Theta\,b)\seq b$; conversely the component equations of a strict
	two-sided inverse of $\Theta$ are exactly items (5)--(7). For item (2), a
	fiberwise equivalence over a cofibrant shape induces an equivalence on
	products, so $\unglue\odot(-)$ is an equivalence; that composite is
	$\kappa\circ\Theta$ for the projection $\kappa$ out of the $\Sigma$-type,
	whose fibers are contractible, so $\Theta$ is an equivalence by
	2-out-of-3, and the constructor package is an element of its homotopy fiber
	over $(t,a)$. Item (3) follows because a strict isomorphism of fibrant types
	is an equivalence. The argument is carried out in full in
	\texttt{Extension.GlueConstructorPackage}.
\end{proof}

\subsection{Semantic gluing}

\begin{definition}[\flink{Definition-4-3-data} Glue structure]\label{def:glue-structure}
	Let \(i:\Phi \hookrightarrow \Gamma\) be a cofibration and $\UU$ a
	fibrant universe. A \emph{Glue structure for $(i,\UU)$} assigns, to every
	input triple $(A, T, e)$ of types
	\[
	A:\Gamma\to\UU,
	\qquad
	T:\Phi\to\UU,
	\qquad
	e:\prod_{\phi:\Phi} T(\phi)\simeq A(i\phi),
	\]
	a triple $(G, p, u)$ of types
	\[
	G:\Gamma\to\UU,
	\qquad
	p:G\circ i\seq T,
	\qquad
	u:\prod_{\gamma:\Gamma} G(\gamma)\simeq A(\gamma)
	\]
	such that $u$ restricts to $e$ up to strict equality: for every
	$\phi:\Phi$, the composite equivalence
	$T(\phi) \seq G(i\phi) \simeq A(i\phi)$ is strictly equal to $e_\phi$
	(the \emph{coherence}).
	For a \emph{fixed} input triple $(A,T,e)$, we write
	$\mathsf{GlueData}_{i,\UU}(A,T,e)$ for the type of triples $(G,p,u)$
	together with their coherence witnesses; a Glue structure is thus an
	element of the product of the types $\mathsf{GlueData}_{i,\UU}(A,T,e)$
	over all inputs.
\end{definition}

Formulated as diagrams, the above definition says the following for a given cofibration \(i:\Phi \hookrightarrow \Gamma\) and $\UU$:
whenever the triangle on the left commutes pointwise up to equivalence, there is $G$ such that, in the diagram on the right,
the top triangle commutes strictly and the bottom ``globe'' commutes up to equivalence,
in such a way that the fillers ``compose'' to the original filler up to strict equality (cf.~\cref{lem:rs410}).
\[
	\begin{tikzcd}[column sep=small, row sep=small]
		\Phi  \ar[r,"T"] \ar[d,hookrightarrow,"i"]
		& \UU  \\
		\Gamma \ar[ur,"A"',bend right] & 
	\end{tikzcd}
	\qquad\rightsquigarrow\qquad
	\begin{tikzcd}[column sep=small, row sep=small]
		\Phi  \ar[r,"T"] \ar[d,hookrightarrow,"i"]
		& \UU  \\
		\Gamma \ar[ur,"A"',bend right] \ar[ur,"G"]& 
	\end{tikzcd}
\]

The coherence of \cref{def:glue-structure} compares the
restriction of $u$ with $e$ as \emph{equivalences}; here and below, we write
$e^{p}_\phi$ for $e_\phi$ with its source coerced to $G(i\phi)$ along $p$,
so that the coherence reads $u_{i\phi}\seq e^{p}_\phi$.

\begin{remark}[\flink{Remark-4-4} Homotopy form $=$ Glue structure]\label{lem:homotopy-vs-semantic}\label{def:homotopy-glue-rules}
	The coherence is a priori stronger than strict agreement of the
	\emph{underlying functions}, since the $\mathsf{isEquiv}$ witnesses are
	only propositionally unique. But only a priori: by the realignment
	principle \cref{cor:prop-realign}, the witnesses can be adjusted,
	contractibly uniquely and changing nothing else. Consequently, with no
	cofibrancy assumptions at all, $(i,\UU)$ satisfies the Glue rules in
	homotopy form if and only if it has a Glue structure.
	Indeed, items (1) and (2) of \cref{def:literally-translated-glue-rules}
	are literally the data $(G,p)$ of a Glue structure; a coherent
	$(G,p,u)$ yields the homotopy form by taking underlying maps,
	$\unglue\colonequiv\pi_1\circ u$, with invertibility witnessed by
	$\pi_2\circ u$; and homotopy-form data $(G,p,\unglue)$ with
	invertibility proofs $\varepsilon$ realign to a coherent
	$(G,p,u)$ with $\pi_1\circ u\seq\unglue$, by \cref{cor:prop-realign}
	applied to the data family
	$D(\gamma)\colonequiv\bigl(G(\gamma)\to A(\gamma)\bigr)$ with property
	$\mathsf{isEquiv}$, total section $(\unglue,\varepsilon)$ and partial
	section $e^{p}$: the strict boundary of item (3) is exactly the
	required strict agreement of the underlying data.
\end{remark}

One might hope that having CCHM Glue in the literal sense of
\cref{def:literally-translated-glue-rules} for all cofibrations is
equivalent to having Glue structures for all cofibrations in the sense of
\cref{def:glue-structure}. The two notions differ, in both directions. From
literal to semantic: the rules of \cref{def:literally-translated-glue-rules}
constrain only \emph{sections} of the glued family, whereas
\cref{def:glue-structure} demands that $\unglue$ be a \emph{pointwise}
equivalence, a statement that in \citet[\S7.2]{CCHM2018} is derived from the
composition structures of the constituent types, which are not part of \cref{fig:cchm}. From
semantic to literal: a Glue structure offers no strict $\eta$- and
$\beta$-laws. \Cref{lem:constructor-package,lem:homotopy-vs-semantic}
delimit exactly what does hold.

\begin{longversion}
The following remark explains why the two gaps are genuine.

\begin{remark}[The two gaps are genuine]\label{rem:connection-gaps}
	The two discrepancies separating the literal rules from the semantic
	Glue structures cannot be removed.
	\begin{enumerate}
		\item \rulename{eta} and \rulename{beta} cannot be upgraded from
		fibrant to strict equality by these means: by
		\cref{lem:constructor-package}(1), a $\glue$ satisfying items
		(5)--(7) strictly is precisely a strict two-sided inverse of
		$\Theta$, and producing one from the mere equivalence of
		\cref{lem:constructor-package}(2) is a \emph{realignment}
		(strictification) principle in the style of Orton--Pitts
		\citep[Sec.~6]{OrtonPitts2018}, which bare 2LTT does not provide.
		Compare the explicit pointwise candidate
		$\sum_{a:A}\prod_{p:\varphi}\sum_{t:T(p)}\bigl(e_p(t)\seq a\bigr)$:
		it satisfies the strict boundary laws by construction but is not
		known to be fibrant, while the semantic glued family is fibrant but
		satisfies the laws only up to homotopy.
		\item Pointwise invertibility of $\unglue$, the one ingredient
		of \cref{def:glue-structure} that the literal rules do not deliver,
		is not an artifact of the translation: the rules constrain only
		sections of the glued family, and accordingly they yield
		invertibility of $\unglue$ only at the level of sections
		(\cref{lem:constructor-package}(3)). In
		\citet[Secs.~6--7]{CCHM2018}, the statement that $\unglue$ is an
		equivalence is derived from the \emph{composition structure} of
		${\sf Glue}$, which is not part of \cref{fig:cchm}.
	\end{enumerate}
\end{remark}
\end{longversion}

\begin{longversion}
\paragraph{A fibered, universe-free formulation.}
The gluing operation can also be formulated for \emph{fibrations} over the
shapes instead of families into a universe. This is the form in which the
\emph{equivalence extension property} appears in the semantic literature
\citep[Sec.~5]{Sattler2017}; the internal, strictified version is discussed
in \cref{rem:strong-glue-eep} below. For a fibration
$q:\widetilde{Y}\twoheadrightarrow\Gamma$ we write
$i^*\widetilde{Y}\twoheadrightarrow\Phi$ for its pullback along $i$, and we
call a map over a shape a \emph{fiberwise equivalence} if it induces an
equivalence on each strict fiber (the strict fibers of a fibration are
fibrant \citep[Lem.~3.9]{2LTT}, so this is meaningful).

\begin{definition}[Fibered Glue structure]\label{def:fibred-glue}
	Let $i:\Phi\hookrightarrow\Gamma$ be a cofibration. A \emph{fibered
	Glue structure} for $i$ assigns, to every input (the black part of the
	diagram below) consisting of fibrations
	$\widetilde{T}\twoheadrightarrow\Phi$ and
	$\widetilde{A}\twoheadrightarrow\Gamma$ together with a fiberwise
	equivalence $\widetilde{e}:\widetilde{T}\to i^*\widetilde{A}$ over
	$\Phi$, an output (the green part) consisting of a fibration
	$\widetilde{G}\twoheadrightarrow\Gamma$, a strict isomorphism
	$i^*\widetilde{G}\siso\widetilde{T}$ over $\Phi$ (the
	green pullback square), and a fiberwise equivalence
	$\unglue:\widetilde{G}\to\widetilde{A}$ over $\Gamma$ whose restriction
	along $i$ agrees with $\widetilde{e}$ modulo the pullback
	identification: up to fiberwise homotopy for the \emph{homotopy
	coherent} version, up to strict equality for the \emph{strictly
	coherent} one.
	\begin{center}
	\begin{tikzpicture}[>=stealth, scale=1.1]

		\node (I)    at (0,2.5) {$\Phi$};
		\node (T)    at (3,2.5) {$\widetilde{T}$};
		\node (Atop) at (1.6,1.9) {$i^*\widetilde{A}$};

		\node (J)    at (0,0) {$\Gamma$};
		\node[text=green!60!black] (Glue) at (3,0) {$\widetilde{G}$};
		\node (Abot) at (1.6,-0.6) {$\widetilde{A}$};

		\draw[->]          (I)    -- node[left] {$i$} (J);
		\draw[->]          (Atop) -- (Abot);
		\draw[->,green!60!black] (T)    -- (Glue);

		\draw[<<-] (I) -- (Atop);
		\draw[<<-] (I) -- (T);
		\draw[->] (T) -- node[below] {$\widetilde{e}$} (Atop);

		\draw[<<-]               (J)    -- (Abot);
		\draw[<<-, dashed,green!60!black]       (J)    -- (Glue);
		\draw[->,green!60!black] (Glue) -- node[below right] {$\unglue$} (Abot);

		\draw (Atop) ++(-0.05,-0.4) -- ++(-0.4,+0.12) -- ++(0,+0.4);
		\draw[green!60!black] (T) ++(-0.05,-0.4) -- ++(-0.4,-0.22) -- ++(0,+0.4);
	\end{tikzpicture}
	\end{center}
\end{definition}

\begin{lemma}[Family gluing induces fibered gluing]\label{lem:family-vs-fibred}
	Let $i:\Phi\hookrightarrow\Gamma$ be a cofibration and $\UU$ a
	fibrant universe. A Glue structure for $(i,\UU)$
	(\cref{def:glue-structure}) induces a strictly coherent fibered Glue
	structure for all inputs whose strict fibers lie in $\UU$.
\end{lemma}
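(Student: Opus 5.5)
The plan is to pass from fibrations to families via their strict fibers, apply the Glue structure, and then return to fibrations by taking total spaces.

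\textbf{From fibrations to families.} Given an input $(\widetilde{T}\twoheadrightarrow\Phi,\ \widetilde{A}\twoheadrightarrow\Gamma,\ \widetilde{e})$ whose strict fibers lie in $\UU$, we set
\[
T(\phi)\colonequiv\sfib{\widetilde{T}}(\phi)
\qquad\text{and}\qquad
A(\gamma)\colonequiv\sfib{\widetilde{A}}(\gamma).
\]
These are fibrant families \citep[Lem.~3.9]{2LTT}, and by assumption they land in $\UU$; more precisely, we take their fibrant matches. The strict fiber of the pullback $i^*\widetilde{A}$ over $\phi$ is strictly isomorphic to $A(i\phi)$. Restricting the fiberwise equivalence $\widetilde{e}$ to strict fibers therefore gives $e:\prod_{\phi}T(\phi)\simeq A(i\phi)$. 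The underlying maps here are strict functions, and the $\mathsf{isEquiv}$ witnesses come from the hypothesis that $\widetilde{e}$ is a fiberwise equivalence. Applying the Glue structure yields $(G,p,u)$ with $p:G\circ i\seq T$ and coherence $u_{i\phi}\seq e^p_\phi$.

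\textbf{From families back to fibrations.} We define $\widetilde{G}\colonequiv\Sigma(\gamma:\Gamma).\,G(\gamma)$ with its first projection. Up to strict isomorphism, projections of fibrant families are exactly the fibrations, so this projection is a fibration. The map $\unglue:\widetilde{G}\to\widetilde{A}$ sends $(\gamma,g)$ to the total-space point of $u_\gamma(g)\in\sfib{\widetilde{A}}(\gamma)$, namely its first component. This map lies over $\Gamma$ because the fiber element carries a strict witness over $\gamma$, and it is a fiberwise equivalence because each $u_\gamma$ is one.

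For the pullback square, the pullback $i^*\widetilde{G}$ is $\Sigma(\phi:\Phi).\,G(i\phi)$. Transporting along $p$ gives a strict isomorphism with $\Sigma(\phi:\Phi).\,T(\phi)$. This in turn is strictly isomorphic to $\widetilde{T}$ via the canonical decomposition of a map into its strict fibers,
\[
\widetilde{T}\siso\Sigma(\phi:\Phi).\,\sfib{\widetilde{T}}(\phi).
\]

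\textbf{Strict coherence.} It remains to show that $\unglue$ restricted along $i$ agrees strictly with $\widetilde{e}$ under this identification. Pointwise, on $(\phi,g)$ this is the strict equality $u_{i\phi}(g)\seq e^p_\phi(g)$, after projecting to underlying functions. That equality is exactly the coherence of \cref{def:glue-structure}. We then reassemble $e_\phi$ into $\widetilde{e}$ along the fiber decomposition, which is the identity up to the strict isomorphisms just described. Outer function extensionality turns the pointwise equalities into a strict equality of maps.

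\textbf{Main obstacle.} The only real work is bookkeeping. The argument passes through the fibrant-match isomorphisms $\rho$ and the coercions along $p$ and along the fiber decomposition, and all of these must compose to identities strictly rather than merely up to homotopy. We will handle this by strict-equality induction on the witness $i\phi\seq\gamma$ in the fiber, together with outer UIP, following the pattern of the proof of \cref{lem:strict-vs-homotopy-fibre}.
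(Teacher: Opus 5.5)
Your proposal is correct and follows the paper's proof essentially step for step. Strict fibers give the families $(A,T,e)$; the Glue output $(G,p,u)$ is totalized to $\widetilde{G}\colonequiv\Sigma_\Gamma G$; the pullback identification uses $p$ together with the fiber decomposition; and the strict coherence of $u$ transfers to the boundary agreement of $\unglue$. Your remarks on handling the fibrant matches and coercions by strict-equality induction and outer UIP spell out what the paper condenses into ``the strict coherence of $u$ transfers \ldots{} under these identifications.''
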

\begin{proof}
	Let $(\widetilde{T},\widetilde{A},\widetilde{e})$ be such an input. By
	\citet[Lem.~3.9]{2LTT}, the strict fibers form fibrant families
	$T:\Phi\to\UU$ and $A:\Gamma\to\UU$, with canonical strict isomorphisms
	$\widetilde{T}\siso\sum_\Phi T$ over $\Phi$ and
	$\widetilde{A}\siso\sum_\Gamma A$ over $\Gamma$; under these,
	$\widetilde{e}$ induces a family of equivalences
	$e:\prod_{\phi:\Phi}T(\phi)\simeq A(i\phi)$. Let $(G,p,u)$ be the
	output of the given Glue structure at $(A,T,e)$, and set
	$\widetilde{G}\colonequiv\sum_\Gamma G$, a fibration over $\Gamma$. Then
	\[
		i^*\widetilde{G}
		\;\siso\;
		\textstyle\sum_\Phi G\circ i
		\;\seq\;
		\textstyle\sum_\Phi T
		\;\siso\;
		\widetilde{T}
		\qquad\text{over }\Phi,
	\]
	using $p$ in the middle step; and the total map
	$(\gamma,g)\mapsto(\gamma,\,u_\gamma(g))$, composed with the canonical
	isomorphism $\sum_\Gamma A\siso\widetilde{A}$, is a
	fiberwise equivalence $\unglue:\widetilde{G}\to\widetilde{A}$ over
	$\Gamma$. The strict coherence of $u$ transfers to the strict boundary
	agreement of $\unglue$ under these identifications.
\end{proof}
\end{longversion}

\begin{longversion}
\begin{remark}[Universe realignment]\label{rem:universe-realignment}
	A \emph{realignment property of the universe} along $i$ asks: given
	$G_0:\Gamma\to\UU$ and pointwise strict isomorphisms
	$G_0(i\phi)\siso T(\phi)$ on $\Phi$, find $G:\Gamma\to\UU$, strictly
	isomorphic to $G_0$ over $\Gamma$, with $G\circ i\seq T$.
	Cubical universes are constructed so as to satisfy such realignment;
	see the strictness axiom of \citet[Sec.~6]{OrtonPitts2018} and the
	alignment step in the universe constructions of \citet{ABCFHL2021}.
	Bare 2LTT provides nothing of the sort. This is what separates the
	family form of \cref{def:glue-structure} from fibration-based,
	universe-free formulations of gluing, as in the semantic equivalence
	extension property (cf.\ \cref{rem:strong-glue-eep}): passing to strict
	fibers of a glued fibration yields all of a Glue structure \emph{except}
	the boundary condition, strict \emph{isomorphisms}
	$G(i\phi)\siso T(\phi)$ rather than the strict \emph{equality}
	$G\circ i\seq T$, and strictly isomorphic elements of a universe need
	not be strictly equal.

	The difference matters: the converse direction of the
	univalence--gluing equivalence (\cref{lem:glue-strength-chain}(2) and
	\cref{thm:weak-glue-implies-ua} below) consumes the strict equality,
	reading off an \emph{inner equality}
	$A=_\UU B$ from a line $G:I\to\UU$ with $G(0)\seq A$ and $G(1)\seq B$.
	With only strict isomorphisms $G(0)\siso A$ and
	$G(1)\siso B$, one would first have to convert an
	isomorphism into an inner equality in $\UU$, a special case of univalence, so
	the argument would be circular. A fibration-based form therefore cannot
	replace the family form of \cref{def:glue-structure} in what follows.

	We note a pattern. The comparison results of this section turn on
	three independent \emph{realignment} demands: for the propositional
	$\mathsf{isEquiv}$ witnesses (\cref{cor:prop-realign}; automatic, by
	relative funext), for the constructor package (strict two-sided
	inverses of $\Theta$, \cref{lem:constructor-package}(1); not automatic),
	and for the universe (the present remark; not automatic).
\end{remark}
\end{longversion}

\begin{longversion}
\subsection{Explicit pointwise formulas: homotopy and strict fibers}\label{sec:strict-glue-from-ua}

The preceding definitions specify gluing by rules, or as structure on
$(i,\UU)$. We now record two explicit \emph{candidate formulas}, both
pointwise over a proposition $\varphi$. The first uses homotopy fibers; it
is essentially the non-strict gluing type of Orton--Pitts
\citep[Def.~6.1]{OrtonPitts2018}, requires no 2LTT assumptions at all, and
univalence alone makes it a Glue object up to inner equality in the universe
(\cref{rem:pointwise-glue-from-UA}, the baseline for the forward direction
of \cref{sec:glue-ua}). The second replaces the homotopy fibers by strict
fibers, with the opposite profile: strict boundary laws by construction,
fibrancy only under extra hypotheses.

\begin{definition}[Pointwise Glue]\label{def:pointwise-glue}
	Let \(\varphi : \Prop\) be a proposition, \(A : \UU\), \(T : \varphi \to \UU\),
	and \(e : \prod_{p:\varphi} T(p) \simeq A\) a family of equivalences. Define
	\[
		\mathsf{Glue}_{\mathrm{pt}}(\varphi,T,e,A)
		\;\colonequiv\;
		\sum_{a:A}\prod_{p:\varphi}\hfib{e_p}(a),
		\qquad
		\hfib{e_p}(a) \colonequiv \sum_{t:T(p)}(e_p(t)=a),
	\]
	together with
	\(\mathsf{unglue} : \mathsf{Glue}_{\mathrm{pt}}(\varphi,T,e,A) \to A\),
	\(\mathsf{unglue}(a,h) \colonequiv a\).
\end{definition}

\begin{remark}[Strict-fiber variant]
One can replace the homotopy fiber in the pointwise Glue
(\cref{def:pointwise-glue}) by a strict fiber and define
\[
  \mathsf{Glue}_{\mathrm{strict}}(\varphi,T,e,A)
  \;\colonequiv\;
  \sum_{a:A}\prod_{p:\varphi}\sfib{e_p}(a),
  \qquad
  \sfib{e_p}(a)
  \colonequiv
  \sum_{t:T(p)}(e_p(t)\seq a).
\]
This immediately gives strict compatibility on the boundary: if
\(x=(a,h):\mathsf{Glue}_{\mathrm{strict}}(\varphi,T,e,A)\) and
\(p:\varphi\), then
\[
  e_p(\pi_1(h(p))) \seq \mathsf{unglue}(x).
\]
However, this stricter definition is not a consequence of univalence alone.
Univalence turns equivalences into inner equalities in the universe; it does not turn the
homotopies witnessing an equivalence \(e_p:T(p)\simeq A\) into strict equalities
\(e_p(t)\seq a\). Thus the strict-fiber version is generally too small unless
the boundary maps are already strict equivalences, or unless one assumes an
additional strictification/realignment principle. In particular, without such extra
hypotheses, \(\mathsf{Glue}_{\mathrm{strict}}\) is not known to be fibrant:
strict equality and strict fibers do not preserve fibrancy in general.
\end{remark}

\begin{proposition}[Strict boundary for strict Glue]\label{prop:strict-glue-boundary}
Assume that \(\varphi\) is a strict proposition, in the sense that in any
context containing \(p:\varphi\), every \(q:\varphi\) is strictly equal to
\(p\). Then, in the boundary context \(p:\varphi\), there is a strict
isomorphism
\[
  \mathsf{Glue}_{\mathrm{strict}}(\varphi,T,e,A) \siso T(p).
\]
Consequently, if \(T(p)\) is fibrant, then
\(\mathsf{Glue}_{\mathrm{strict}}(\varphi,T,e,A)\) is fibrant in that boundary
context.
\footnote{Related to the boundary-strictness discussion in \citet[Def.~6.11, Eq.~(6.9)]{OrtonPitts2018} and the judgmental boundary rule for CCHM Glue \citep[Fig.~4]{CCHM2018}; the present pointwise strict-fiber calculation differs from Orton--Pitts' \(SGlue\), which uses their axiom ax9 to strictify gluing.}
\end{proposition}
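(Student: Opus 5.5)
The plan is to write down the two maps explicitly, working throughout in the context extended by $p:\varphi$, and then check both round trips using only the strict-proposition hypothesis on $\varphi$ and outer UIP. Forward, send $(a,h)\mapsto \pi_1(h(p)) : T(p)$. Backward, send $t:T(p)$ to
\[
  \bigl(e_p(t),\ \lambda q.\,(t_q,\ r_q)\bigr),
\]
where $t_q : T(q)$ is $t$ transported along the strict equality $p\seq q$ (available by hypothesis), and $r_q : e_q(t_q)\seq e_p(t)$ is obtained by strict-equality induction on that same equality. Both maps are definable purely at the outer level.

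For the round trip starting at $t:T(p)$, we get $\pi_1$ of the component at $p$, which is $t$ transported along the chosen witness of $p\seq p$. By outer UIP this witness is strictly $\mathsf{refl}^s$, so the transport is the identity and we recover $t$ strictly.

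The round trip starting at $(a,h)$ is the main step. We must show strictly that
\[
  (a,h)\;\seq\;\bigl(e_p(\pi_1 h(p)),\ \lambda q.\,((\pi_1 h(p))_q,\ r_q)\bigr).
\]
The first components agree via $\pi_2(h(p)) : e_p(\pi_1 h(p))\seq a$. For the second components, we use outer function extensionality and argue pointwise in $q:\varphi$. We induct on the strict equality $p\seq q$, which reduces the case to $q\equiv p$. There the first entries agree, since transport along $\mathsf{refl}^s$ is trivial. The remaining entries are elements of strict equality types, and these are strict propositions by UIP, so they agree automatically. The only bookkeeping is that the second component must be transported along the first-component equality $\pi_2(h(p))$; I would handle this by performing strict-equality induction on $\pi_2(h(p))$ first, generalizing over $a$, so that $a$ becomes literally $e_p(\pi_1 h(p))$ before comparing the $h$ components. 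I expect this ordering of inductions, and the transport it removes, to be the only real obstacle, and it is routine.

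For the consequence, suppose $T(p)$ is fibrant, with fibrant match $T(p)^\circ$ and isomorphism $\rho$. Composing the strict isomorphism just constructed with $\rho$ gives a fibrant presentation of $\mathsf{Glue}_{\mathrm{strict}}(\varphi,T,e,A)$ in the boundary context, in the sense of \cref{sec:2ltt-recall}. This uses only closure of fibrancy witnesses under strict isomorphism, which is exactly the reason the paper adopted fibrant presentations rather than (T3) in \cref{rem:no-T3}.
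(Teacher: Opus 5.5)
Your proposal is correct and is essentially the paper's proof. Your two explicit maps are exactly the composite of the paper's chain of strict isomorphisms: evaluation at $p$ (using strict proof-irrelevance of $\varphi$), reassociation of the $\Sigma$-type, and contraction of the strict singleton $\Sigma(a:A).\,(e_p(t)\seq a)$, followed by transfer of fibrancy along the strict isomorphism. The paper's factorization also handles your ordering-of-inductions bookkeeping more cleanly, because it reduces $h$ to $h(p)$ for \emph{fixed} $a$ before contracting away the equation $e_p(\pi_1 h(p))\seq a$ (a naive induction on $\pi_2(h(p))$ generalizing $a$ is awkward, since the type of $h$ depends on $a$ and that equation is itself a component of $h$, but this is easily repaired).
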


\begin{proof}
In the context \(p:\varphi\), strict proof-irrelevance for \(\varphi\) gives a
strict isomorphism
\[
  \prod_{q:\varphi}\sfib{e_q}(a)
  \siso
  \sfib{e_p}(a).
\]
Therefore
\[
  \mathsf{Glue}_{\mathrm{strict}}(\varphi,T,e,A)
  \siso
  \sum_{a:A}\sum_{t:T(p)}(e_p(t)\seq a)
  \siso
  \sum_{t:T(p)}\sum_{a:A}(e_p(t)\seq a).
\]
For each \(t:T(p)\), the type \(\sum_{a:A}(e_p(t)\seq a)\) is strictly
contractible, with center \((e_p(t),\mathsf{refl}^{\mathrm{s}})\). Hence the
last displayed type is strictly isomorphic to \(T(p)\). Fibrancy is preserved
by strict isomorphism.
\end{proof}

\begin{proposition}[Fibrancy of strict Glue from cofibrancy]\label{prop:strict-glue-fibrant-from-cofibrancy}
Assume that \(\varphi\) is a \emph{cofibrant strict proposition},
i.e.\ both \(\szero\to[\varphi]\) and
\([\varphi]\to 1\) are cofibrations; the proof uses both halves. Assume
moreover that \(A\) is
fibrant and that \(T(p)\) is fibrant for every \(p:\varphi\). Then
\(\mathsf{Glue}_{\mathrm{strict}}(\varphi,T,e,A)\) is fibrant.
\footnote{Related in purpose to fibrancy of gluing in \citet[Thm.~6.8]{OrtonPitts2018}, which however assumes equivalence data and constructs a CCHM fibration structure; the present proposition is a separate 2LTT fibrancy argument.}
\end{proposition}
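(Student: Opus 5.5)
We regard $\mathsf{Glue}_{\mathrm{strict}}(\varphi,T,e,A)$ as an extension type and then use the two cofibration hypotheses one after the other. Write $j:[\varphi]\to\one$ for the (cofibrant) inclusion. For $a:A$ and $p:\varphi$, the strict fiber $\sfib{e_p}(a)$ is the type of strict fillers of a square with left leg $\szero\to\one$. The cleanest way to package this is to take a single map out of $[\varphi]$ that collects the data.

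\textbf{Step 1: the fibration we are fibering over.} Form the fibrant family $E(p)\colonequiv T(p)$ over $[\varphi]$, the fibrant type $A$, and the map
\[
  \epsilon:\textstyle\sum_{p:\varphi}T(p)\to[\varphi]\times A,\qquad (p,t)\mapsto(p,e_p(t)).
\]
Then $\prod_{p:\varphi}\sfib{e_p}(a)$ is the type of strict sections of $\epsilon$ over the map $p\mapsto(p,a)$. Equivalently, it is the strict fiber of the post-composition map
\[
  \epsilon_*:\textstyle\prod_{p:\varphi}T(p)\to([\varphi]\to A),\qquad t\mapsto\lambda p.\,e_p(t\,p),
\]
over the constant function $\lambda p.\,a$.

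\textbf{Step 2: fibrancy of the pieces via $\szero\to[\varphi]$.} Because $[\varphi]$ is cofibrant, both $\prod_{p:\varphi}T(p)$ and $[\varphi]\to A$ are fibrant (\citep[Cor.~3.20]{2LTT}). The map $\epsilon_*$ is then a map between fibrant types, but a strict fiber of an arbitrary such map need not be fibrant. So the goal is to replace $\epsilon_*$ by a fibration up to strict isomorphism.

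\textbf{Step 3: using $j:[\varphi]\to\one$.} This is where the second cofibration enters. For $a:A$, consider the constant family $A$ over $\one$ and the partial section $\lambda p.\,e_p(t\,p)$ along $j$. The type $\sum_{t}\Ext{j}{\Pi_\one A}{\lambda p.\,e_p(t\,p)}$ consists of pairs $(t,a)$ with $a\seq e_p(t\,p)$ for all $p$, and this is exactly $\mathsf{Glue}_{\mathrm{strict}}$ after reordering the $\Sigma$ and strict $\Pi$ (using outer funext and UIP). Indeed $\sum_a\prod_p\sum_t(e_p t\seq a)\siso\sum_a\sum_{t:\prod_p T p}\prod_p(e_p(t p)\seq a)$ by the strict axiom of choice. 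Reading it the other way,
\[
  \mathsf{Glue}_{\mathrm{strict}}\siso\textstyle\sum_{t:\prod_\varphi T}\Ext{j}{\Pi_\one A}{e\cdot t}.
\]
Now $\prod_\varphi T$ is fibrant by Step 2. Each extension type along $j$ is the strict fiber of $j^*:A\to([\varphi]\to A)$, which is a fibration because $j$ is a cofibration (\cref{def:ext-formal}), so it has a fibrant match. A $\Sigma$ of a fibrant family over a fibrant base is fibrant, and fibrancy transfers across strict isomorphism, which finishes the argument.

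\textbf{Main obstacle.} The delicate point is the strict reindexing in Step 3. The strict choice principle $\prod_p\sum_t\siso\sum_{t}\prod_p$ is routine, but the family $t\mapsto\Ext{j}{\Pi_\one A}{e\cdot t}$ must be exhibited as a \emph{fibrant family} over $\prod_\varphi T$, not merely pointwise fibrant. I would handle this by presenting it as the pullback of the fibration $j^*$ along $e\cdot(-)$, since fibrations are stable under pullback. If that becomes awkward, I would instead apply \citep[Lem.~3.18]{2LTT} directly to the dependent family $\lambda(t,\star).A$ over $(\prod_\varphi T)\times\one$. Step 2 is needed only to make the base $\prod_\varphi T$ fibrant, which is why both halves of the hypothesis are used.
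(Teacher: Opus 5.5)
Your argument is correct and is essentially the paper's proof: your $\sum_{t:\prod_\varphi T}\Ext{j}{\Pi_\one A}{e\cdot t}$ is the strict pullback $A\times_{\prod_{p:\varphi}A}\prod_{p:\varphi}T(p)$ that the paper uses. It pulls back the constant-map fibration $j^*$ (a fibration because $[\varphi]\to\one$ is a cofibration) along $t\mapsto e\cdot t$, over the base $\prod_\varphi T$, which is fibrant because $[\varphi]$ is cofibrant. Step~1 and the fibrancy of $[\varphi]\to A$ in Step~2 are unused detours and can be dropped.
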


\begin{proof}
The point is to avoid proving fibrancy of each strict fiber separately. Instead,
rewrite the whole type as a pullback:
\[
  \mathsf{Glue}_{\mathrm{strict}}(\varphi,T,e,A)
  \siso
  A \times_{\prod_{p:\varphi}A} \prod_{p:\varphi}T(p).
\]
Here the map \(A\to\prod_{p:\varphi}A\) sends \(a\) to the constant family
\(\lambda p.a\), and the map \(\prod_{p:\varphi}T(p)\to\prod_{p:\varphi}A\)
sends \(t\) to \(\lambda p.e_p(t(p))\). Since \([\varphi]\to 1\) is a
cofibration and \(A\) is fibrant, the first map is a fibration. Since each
\(T(p)\) is fibrant and \([\varphi]\) is cofibrant, the dependent product
\(\prod_{p:\varphi}T(p)\) is fibrant. Pulling back the fibration
\(A\to\prod_{p:\varphi}A\) along the map from this fibrant type therefore gives
a fibrant total space. Thus \(\mathsf{Glue}_{\mathrm{strict}}\) is fibrant.
\end{proof}

The following proposition is \emph{not} subsumed by
\cref{thm:ua-implies-strong-glue}: it assumes strict equivalence data, but
needs neither univalence nor cofibrancy, and its conclusions are strict.

\begin{proposition}[Strict Glue from strict equivalences]\label{prop:strict-glue-from-strict-equivalences}
Assume the data \(\varphi,A,T\) as above, but suppose that each boundary map
\(e_p:T(p)\to A\) is equipped with a strict inverse \(r_p:A\to T(p)\), with
strict retraction and section laws
\[
  e_p(r_p(a))\seq a,
  \qquad
  r_p(e_p(t))\seq t.
\]
Then \(G^{\mathrm{s}}\colonequiv
\mathsf{Glue}_{\mathrm{strict}}(\varphi,T,e,A)\) has the following stricter
properties.
\begin{enumerate}
  \item The map \(\mathsf{unglue}:G^{\mathrm{s}}\to A\) is a strict equivalence.
        Hence, if \(A\) is fibrant, then \(G^{\mathrm{s}}\) is fibrant.
  \item For every \(p:\varphi\), the boundary projection
        \(\rho_p:G^{\mathrm{s}}\to T(p)\), defined by
        \(\rho_p(a,h)\colonequiv \pi_1(h(p))\), is a strict equivalence.
  \item The boundary compatibility is strict:
        \[
          e_p\circ \rho_p \seq \mathsf{unglue}.
        \]
\end{enumerate}
These properties still do not identify \(G^{\mathrm{s}}\) judgmentally with
\(T(p)\) under \(p:\varphi\). Under strict proof-irrelevance for \(\varphi\),
\cref{prop:strict-glue-boundary} gives a strict isomorphism on the boundary;
a judgmental identification is the kind of extra strictness axiom isolated by
Orton--Pitts.
\footnote{Related to the elementary properties of the Orton--Pitts gluing object \citep[Defs.~6.1, 6.11]{OrtonPitts2018}; we are not aware of this exact pointwise strict-fiber statement with strict equivalences as boundary maps, an elementary variant tailored to the 2LTT setting.}
\end{proposition}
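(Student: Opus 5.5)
The plan is to build the required maps from the strict inverses $r_p$ and then check each property directly, using only the strict $\Sigma$- and $\Pi$-rules and outer function extensionality and UIP. No univalence and no cofibrancy enter. Recall that $G^{\mathrm{s}}=\sum_{a:A}\prod_{p:\varphi}\sum_{t:T(p)}(e_p(t)\seq a)$.

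For (1), I define the candidate inverse of $\mathsf{unglue}$ as
\[
  s(a)\colonequiv\bigl(a,\ \lambda p.\,(r_p(a),\,\epsilon_p(a))\bigr),
\]
where $\epsilon_p(a):e_p(r_p(a))\seq a$ is the assumed retraction law.
\begin{itemize}
  \item The round trip $\mathsf{unglue}(s(a))\seq a$ holds by the $\beta$-rule for pairs.
  \item For the other round trip, take $(a,h)$. For each $p$, write $h(p)=(t_p,q_p)$ with $q_p:e_p(t_p)\seq a$. We must show $(r_p(a),\epsilon_p(a))\seq(t_p,q_p)$. Strict induction on $q_p$ reduces this to the case $a\equiv e_p(t_p)$. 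There the first components agree by the section law $r_p(e_p(t_p))\seq t_p$, and the second components are strict equalities, so they agree by UIP.
  \item Outer funext then assembles these pointwise equalities into $s(\mathsf{unglue}(a,h))\seq(a,h)$.
\end{itemize}
So $\mathsf{unglue}$ is a strict isomorphism. If $A$ is fibrant, $G^{\mathrm{s}}$ gets a fibrant presentation with match $A^\circ$, since fibrant presentations are closed under strict isomorphism.

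For (2), fix $p:\varphi$ and consider the composites $\rho_p\circ s=r_p$ and $\mathsf{unglue}\circ s=\id$. Since $s$ is a strict inverse of $\mathsf{unglue}$, we get $\rho_p\seq r_p\circ\mathsf{unglue}$, with both sides evaluated after rewriting $(a,h)$ as $s(a)$ via (1). This makes $\rho_p$ a composite of the strict isomorphism $\mathsf{unglue}$ with $r_p$. The map $r_p$ is itself a strict isomorphism, with inverse $e_p$ by the two assumed laws. A composite of strict isomorphisms is a strict isomorphism, so $\rho_p$ is one.

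For (3), this is immediate from the definition. For $x=(a,h)$, the second component of $h(p)$ is exactly a witness of $e_p(\pi_1(h(p)))\seq a=\mathsf{unglue}(x)$, and outer funext gives the equality of functions. This is the same computation as in the strict-fiber remark preceding \cref{prop:strict-glue-boundary}.

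The closing comment in the statement, that the map to $T(p)$ is not a judgmental identification, is explanatory and needs no proof beyond pointing to \cref{prop:strict-glue-boundary}. The only real obstacle is the round trip $s\circ\mathsf{unglue}$ in (1). There one must handle the dependency of the fiber witness on $a$ correctly, via $J$ on $q_p$ generalizing over $a$, and invoke UIP for the second component. Everything else is bookkeeping with strict $\beta/\eta$.
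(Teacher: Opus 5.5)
Your proposal is correct and follows the paper's proof. You use the same inverse $s$, establish the other round trip via strict contractibility of each strict fiber $\sfib{e_p}(a)$ (which you unfold into $J$ on $q_p$ plus UIP), read off (3) from the second component of $h(p)$, and transfer fibrancy along the strict isomorphism. The one cosmetic difference is in (2): you obtain $\rho_p\seq r_p\circ\mathsf{unglue}$ by precomposing with $s$, whereas the paper gets it from (3) together with the strict invertibility of $e_p$ and $\mathsf{unglue}$; both are the same two-out-of-three step.
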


\begin{proof}
The inverse to \(\mathsf{unglue}\) sends \(a:A\) to
\((a,\lambda p.(r_p(a), e_p(r_p(a))\seq a))\). The strict retraction law gives
one composite. The other composite follows because each strict fiber
\(\sfib{e_p}(a)\) is strictly contractible with center
\((r_p(a), e_p(r_p(a))\seq a)\). Hence \(\mathsf{unglue}\) is a strict
equivalence, so \(G^{\mathrm{s}}\) is strictly equivalent to \(A\) and is fibrant
whenever \(A\) is fibrant. For a fixed \(p:\varphi\), the displayed strict equality
\(e_p\circ\rho_p\seq\mathsf{unglue}\) is the second component of \(h(p)\).
Since \(e_p\) and \(\mathsf{unglue}\) are strict equivalences, so is
\(\rho_p\).
\end{proof}

\begin{remark}[Pointwise versus restriction-map strict fibers]\label{rem:pointwise-vs-restriction-strict-fibres}
The strict fibers of this subsubsection are taken \emph{pointwise}, inside
the glued type itself. They are the right tool when the boundary
equivalences are already strict
(\cref{prop:strict-glue-from-strict-equivalences}), but for ordinary
univalent equivalences they are too small, as discussed above. For
ordinary equivalences, one instead takes strict fibers of
\emph{restriction maps}: the strict fiber of the universe
restriction \((\Gamma\to\UU)\to(\Phi\to\UU)\) makes the type boundary
\(G\circ i\seq T\) strict, and the strict fiber of the restriction map on
the family of equivalences \(G(\gamma)\simeq A(\gamma)\) makes
\(\mathsf{unglue}\) agree strictly with \(e\) on the boundary. In bundled
and contractibly unique form, this is how univalence produces
contractible Glue data in \cref{thm:ua-implies-strong-glue}: the strict fiber of
the restriction of the universal family
\(W(\gamma)\colonequiv\Sigma(X:\UU).(X\simeq A(\gamma))\)
(\cref{lem:glue-data-fibrant}) performs both strictification steps at
once.
\end{remark}
\end{longversion}

\paragraph{Pointwise formulas.}
For $\varphi:\Prop$, $A:\UU$, $T:\varphi\to\UU$, and
$e:\prod_{p:\varphi}T(p)\simeq A$, assuming the displayed
$\Sigma$- and $\Pi$-types are available, the pointwise formula
\[
G_{\mathrm{pt}}
\colonequiv
\sum_{a:A}\prod_{p:\varphi}\hfib{e_p}(a)
\]
is the pointwise form of the non-strict gluing construction of
\citet[Def.~6.1]{OrtonPitts2018}. The maps
$\mathsf{unglue}(a,h)\colonequiv a$ and
$\rho_p(a,h)\colonequiv\pi_1(h(p)):T(p)$ are equivalences,
entirely in HoTT; univalence is used only to turn each $\rho_p$
into an inner equality $G_{\mathrm{pt}}=_\UU T(p)$.
Replacing $\hfib{e_p}(a)$ by $\sfib{e_p}(a)$ makes the equations
$e_p(\pi_1(h(p)))\seq a$ strict, but the resulting outer type need
not be fibrant and is not supplied by univalence alone. The
construction of \cref{lem:glue-data-fibrant,thm:ua-implies-strong-glue}
instead takes a strict fiber of a restriction map, strictly
realigning the type and unglue boundaries at once.

\section{The equivalence of gluing and univalence}\label{sec:glue-ua}

Having introduced the different formulations of gluing, we now prove that
gluing is equivalent to univalence, in the strongest form that we can. The
strategy is a cycle of implications: we isolate the strongest condition,
\emph{contractible Glue data} (contractibility of the type of Glue data,
along an \emph{arbitrary} cofibration), and prove that univalence implies
it; and
we isolate a \emph{weak} Glue structure (boundary agreement only by
inner equalities in the universe, an unglue consisting of mere maps, constant
background family, special boundary data, and \emph{no} cofibrancy
assumption) and prove that it implies univalence. Every notion of
gluing that lies between the two is then equivalent to univalence. The
forward direction works along an arbitrary cofibration and makes no
mention of an interval. The converse direction needs a
single cofibration rich enough to \emph{detect} equivalences; the
structure required for this, the \emph{path interval}, is set up in
\cref{sec:path-interval}, between the two directions.

\subsection{From univalence to gluing: the strong direction}

\begin{longversion}
As a baseline, we first record what univalence yields with no 2LTT
assumptions at all: it makes the pointwise formula of
\cref{def:pointwise-glue} a Glue object up to inner equality in the universe.

\begin{remark}[Pointwise Glue from univalence]\label{rem:pointwise-glue-from-UA}
	Let \((\varphi,T,e,A)\) be input data as in \cref{def:pointwise-glue},
	assume \(\UU\) is closed under the \(\Sigma\)- and \(\Pi\)-types
	displayed there, and write
	\(G \colonequiv \mathsf{Glue}_{\mathrm{pt}}(\varphi,T,e,A)\).
	Entirely within HoTT: the fiber of \(\mathsf{unglue}\) over \(a:A\) is
	\(\prod_{p:\varphi}\hfib{e_p}(a)\), a product of contractible types
	(each \(e_p\) is an equivalence), so \(\mathsf{unglue}\) is an equivalence;
	and for \(p:\varphi\), the map \(\rho_p(a,h)\colonequiv\pi_1(h(p))\)
	satisfies \(e_p\circ\rho_p\sim\mathsf{unglue}\), hence is a boundary
	equivalence \(\rho_p:G\simeq T(p)\) by two-out-of-three. If \(\UU\) is
	moreover univalent, each \(\rho_p\) yields an inner equality
	\(\mathsf{ua}(\rho_p):G=_\UU T(p)\), so \(G\) restricts to \(T\) along
	\(\varphi\) up to equality in the universe. Note where univalence enters:
	only to upgrade the boundary equivalence to an inner equality.

	None of this gives a judgmental or strict boundary \(G\equiv T(p)\), a
	\(\mathsf{glue}\) constructor with computation rules, or Kan composition.
	Like the weak structure of \cref{lem:ua-gives-weak-glue}, the pointwise
	Glue is a \emph{baseline} (what univalence yields with no cofibrancy
	assumptions), there along the boundary shape \((I,0,1)\), here along an
	arbitrary proposition \(\varphi\). When \(\varphi\) is a cofibrant strict
	proposition (both \(\szero\to[\varphi]\) and \([\varphi]\to\one\)
	cofibrations),
	\cref{thm:ua-implies-strong-glue} below, applied to the cofibration
	\([\varphi]\hookrightarrow\one\), strictly subsumes this baseline,
	producing a glued type with \emph{strict} boundary, an unglue
	equivalence with strict boundary agreement, and contractible uniqueness
	of all the data.
\end{remark}
\end{longversion}

Throughout, for $\beta:G\circ i\seq T$ we write
$e^\beta_\phi:G(i\phi)\simeq A(i\phi)$ for $e_\phi$ with its source identified
with $G(i\phi)$ along $\beta_\phi$.

\begin{lemma}[\flink{Lemma-5-1} Glue data forms a fibrant type]\label{lem:glue-data-fibrant}
	Let $i:\Phi\hookrightarrow\Gamma$ be a cofibration, $\UU$ a fibrant
	universe, and $(A,T,e)$ an input triple, with
	$\mathsf{GlueData}_{i,\UU}(A,T,e)$ as in \cref{def:glue-structure}. Let $W$
	be the fibrant family
	over $\Gamma$ given by $W(\gamma)\colonequiv\Sigma(X:\UU).\,(X\simeq A(\gamma))$,
	and let $w_0\colonequiv\lambda\phi.(T\phi,\,e_\phi):\prod_\Phi W\circ i$.
	Then the restriction map $i^*_W:\prod_\Gamma W\to\prod_\Phi W\circ i$ is a
	fibration, and
	\[
		\mathsf{GlueData}_{i,\UU}(A,T,e)\;\siso\; \sfib{(i^*_W)}(w_0),
	\]
	its strict fiber over $w_0$; in the notation of \cref{def:ext-formal},
	this is the extension type $\Ext{i}{\Pi_\Gamma W}{w_0}$: the type of Glue data \emph{is}
	an extension type, namely that of the universal family $W$. In particular,
	$\mathsf{GlueData}_{i,\UU}(A,T,e)$ is fibrant. Univalence is not used.
\end{lemma}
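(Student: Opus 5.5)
The plan is to identify both sides as iterated $\Sigma$-types and match components, using only strict $\Sigma/\Pi$ manipulations and outer UIP. First, $W$ is a fibrant family over $\Gamma$: $\UU$ is fibrant, each $A(\gamma)$ lies in $\UU$, and $X\simeq A(\gamma)$ is an inner type, so $W(\gamma)$ is an inner $\Sigma$-type. Since $i$ is a cofibration, \citet[Lem.~3.18]{2LTT} makes the restriction map $i^*_W$ a fibration. This is exactly the situation of \cref{def:ext-formal}, so $\sfib{(i^*_W)}(w_0)$ is literally $\Ext{i}{\Pi_\Gamma W}{w_0}$ and carries a fibrant match. Fibrancy of $\mathsf{GlueData}_{i,\UU}(A,T,e)$ will then follow once we have the strict isomorphism, because fibrant presentations transfer along strict isomorphisms (\cref{rem:no-T3}).

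For the isomorphism, start from the left. By the strict $\Pi$–$\Sigma$ distributivity of \cref{lem:rs43} (or directly from its proof), a section $w:\prod_\Gamma W$ is the same as a pair $(G,u)$ with $G:\Gamma\to\UU$ and $u:\prod_\gamma G(\gamma)\simeq A(\gamma)$. The boundary condition $w\circ i\seq w_0$ is a strict equality of functions into a $\Sigma$-type. By outer funext and the characterization of strict equality in $\Sigma$-types, it is equivalent to a pair consisting of:
\begin{enumerate}[noitemsep]
	\item $p: G\circ i\seq T$, and
	\item for every $\phi$, the strict equality $u_{i\phi}\seq e_\phi$ after transporting along $p$, that is, $u_{i\phi}\seq e^{p}_\phi$.
\end{enumerate}
The second item is precisely the coherence of \cref{def:glue-structure}. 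Reassociating the $\Sigma$-types gives maps in both directions, $(G,p,u,c)\mapsto((G,u),\text{witness})$ and back. Both round-trips are strict identities, by the $\beta/\eta$ laws of strict $\Sigma$ and $\Pi$ together with UIP for the witness components.

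The only delicate step is the transport bookkeeping. One must check that the transport of $e_\phi$ along $p_\phi$, as it appears in the $\Sigma$-characterization of $(G(i\phi),u_{i\phi})\seq(T\phi,e_\phi)$, agrees with the coerced $e^{p}_\phi$ used in the coherence. The plan is to generalize $T$ and apply strict-equality induction on $p$. This reduces the check to $p\equiv\mathsf{refl}^s$, where both readings are $e_\phi$ itself. Univalence is never invoked, since all identifications are strict equalities in the outer level and no inner path in $\UU$ is produced.
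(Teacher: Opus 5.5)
Your proposal is correct and follows essentially the same route as the paper: strict $\Sigma/\Pi$ distributivity identifies $\prod_\Gamma W$ with pairs $(G,u)$, the condition $w\circ i\seq w_0$ splits via outer function extensionality and the characterization of strict equality in $\Sigma$-types (with the transport producing $e^{p}$), and fibrancy comes from \citet[Lem.~3.18, 3.9]{2LTT}, transferred along the strict isomorphism. Your explicit strict-equality induction for the transport bookkeeping is a detail the paper leaves implicit; note that $e$, whose type depends on $T$, has to be generalized together with $T$.
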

\begin{proof}
	Unfolding the notation of \cref{def:glue-structure},
	\[
		\mathsf{GlueData}_{i,\UU}(A,T,e)
		\;=\;
		\Sigma(G:\Gamma\to\UU).\,
		\Sigma(\beta:G\circ i\seq T).\,
		\Sigma\bigl(u:\textstyle\prod_{\gamma:\Gamma}G(\gamma)\simeq A(\gamma)\bigr).\,
		\bigl(u\circ i\seq e^\beta\bigr).
	\]
	By strict $\Sigma$/$\Pi$ distributivity,
	$\Sigma(G:\Gamma\to\UU).\prod_\gamma(G\gamma\simeq A\gamma)\siso\prod_\Gamma W$. Under
	this strict isomorphism, decomposing the strict equality of pairs (the
	second component transported along the first, which is where $e^\beta$
	arises) and using strict function extensionality, the pair of boundary
	conditions $(\beta,c)$ corresponds exactly to the single condition
	$w\circ i\seq w_0$. This yields the displayed strict isomorphism. Since $i$
	is a cofibration and $W$ is fibrant, $i^*_W$ is a fibration
	\citep[Lem.~3.18]{2LTT}; strict fibers of fibrations are fibrant
	\citep[Lem.~3.9]{2LTT}, and fibrancy transfers along strict isomorphisms.
\end{proof}

\begin{definition}[\flink{Definition-5-2} Contractible Glue data]\label{def:strong-glue}
	We say that $(i,\UU)$ has \emph{contractible Glue data} if
	$\mathsf{GlueData}_{i,\UU}(A,T,e)$ is contractible for every input triple
	$(A,T,e)$, a meaningful demand by \cref{lem:glue-data-fibrant}.
\end{definition}

\begin{remark}[Reading \cref{def:strong-glue}]\label{rem:strong-glue-eep}
	The center of contraction is exactly a Glue structure in
	the sense of \cref{def:glue-structure}, a glued family with strict type
	boundary \emph{and} strict unglue boundary
	(\cref{lem:glue-strength-chain}(1) below). Beyond this existence content,
	contractibility asserts \emph{uniqueness}: the glued type, its strict
	boundary, the unglue equivalence and its strict boundary agreement are all
	contractibly unique.
	Equivalently, writing
	$R_A:\Sigma(G:\Gamma\to\UU).\prod_\gamma(G\gamma\simeq A\gamma)\;\longrightarrow\;\Sigma(T:\Phi\to\UU).\prod_\phi(T\phi\simeq A(i\phi))$
	for the restriction map $(G,u)\mapsto(G\circ i,u\circ i)$, having
	contractible Glue data says that every strict fiber of $R_A$ is
	contractible.
	This is a 2LTT-internal, strict-boundary analogue of Sattler's equivalence extension property~\citep[Sec.~5]{Sattler2017}, strengthened by contractibility.
	(Under the strict
	$\Sigma$/$\Pi$-distributivity isomorphisms of \cref{lem:glue-data-fibrant},
	$R_A$ is identified with the restriction fibration $i^*_W$.)
\end{remark}

\begin{theorem}[\flink{Theorem-5-4} Univalence implies contractible Glue data]\label{thm:ua-implies-strong-glue}
	Let $i:\Phi\hookrightarrow\Gamma$ be a cofibration and let $\UU$ be a
	fibrant univalent universe. Then $(i,\UU)$ has contractible Glue data. No
	cofibrancy of $\Phi$ or $\Gamma$ is needed.
\end{theorem}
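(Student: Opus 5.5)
The plan is to reduce the statement to relative function extensionality. By \cref{lem:glue-data-fibrant}, $\mathsf{GlueData}_{i,\UU}(A,T,e)$ is strictly isomorphic to the extension type $\Ext{i}{\Pi_\Gamma W}{w_0}$ of the universal family $W(\gamma)\colonequiv\Sigma(X:\UU).\,(X\simeq A(\gamma))$ over $w_0\colonequiv\lambda\phi.(T\phi,e_\phi)$. This lemma needs neither univalence nor cofibrancy of $\Phi$ or $\Gamma$. Contractibility transfers along strict isomorphisms between fibrant types, because a strict isomorphism is in particular an equivalence of the fibrant matches. So it suffices to show that $\Ext{i}{\Pi_\Gamma W}{w_0}$ is contractible.

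By \cref{lem:rel-funext-holds}, which holds for every cofibration and again uses no cofibrancy of $\Phi$ or $\Gamma$, it is enough to show that each fiber $W(\gamma)$ is contractible. This is where univalence enters. For a fixed $B\colonequiv A(\gamma):\UU$, univalence says that $\mathsf{idtoeqv}:(X=B)\to(X\simeq B)$ is an equivalence for every $X$. Hence
\[
\Sigma(X:\UU).\,(X\simeq B)\;\simeq\;\Sigma(X:\UU).\,(X=B),
\]
and the right-hand side is a based path space, so it is contractible, with center $(B,\mathsf{id}_B)$. This is the standard HoTT argument and lives entirely on the inner level. The family $W$ is fibrant because it is built from $\UU$, $A$ and $\simeq$ by inner type formers. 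Then \cref{lem:rel-funext-holds} applied to $W$ and the partial section $w_0$ gives that $\Ext{i}{\Pi_\Gamma W}{w_0}$ is contractible, and the theorem follows for every input triple $(A,T,e)$.

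I expect the main obstacle to be bookkeeping rather than substance. One has to check that the strict isomorphism of \cref{lem:glue-data-fibrant} is compatible with the fibrant matches, so that contractibility of the match of the extension type really yields contractibility of the match of $\mathsf{GlueData}$. Following \cref{rem:no-T3}, I would handle this by defining the match of $\mathsf{GlueData}$ as that of the strict fiber. Then contractibility holds by construction, and any other fibrancy witness is connected to this one by an equivalence induced from a strict isomorphism.

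A second point to verify is that the trivial-fibration half of the cofibration property is available for families valued in a universe one level up. $W$ lands in $\UU_{+1}$ rather than in $\UU$, so \cref{lem:rel-funext-holds} must be applied at the larger universe. The cofibration notion is stated for all fibrant families, so this should pose no problem.
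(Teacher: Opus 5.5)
Your proposal is correct and follows the paper's proof: you use the strict isomorphism of \cref{lem:glue-data-fibrant} with $\Ext{i}{\Pi_\Gamma W}{w_0}$, show $W$ is pointwise contractible by totalizing $\mathsf{idtoeqv}$, and conclude with the trivial-fibration half of the cofibration property (\cref{lem:rel-funext-holds}) in a universe large enough to contain $W$, with no cofibrancy of $\Phi$ or $\Gamma$ used. One small slip: the center of the singleton $\Sigma(X:\UU).\,(X=B)$ is $(B,\mathsf{refl}_B)$, not $(B,\id_B)$; it corresponds to $(B,\id_B)$ in $\Sigma(X:\UU).\,(X\simeq B)$.
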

\begin{proof}
	Fix $(A,T,e)$, and let $W$, $w_0$ and the restriction fibration
	$i^*_W:\prod_\Gamma W\to\prod_\Phi W\circ i$ be as in
	\cref{lem:glue-data-fibrant}, so that
	$\mathsf{GlueData}_{i,\UU}(A,T,e)\siso \sfib{(i^*_W)}(w_0)$.

	\smallskip\noindent\textbf{Step 1: univalence makes $W$ pointwise contractible.}
	By univalence, $\mathsf{idtoeqv}_{X,A\gamma}:(X=A\gamma)\to(X\simeq A\gamma)$ is an
	equivalence for every $X$; totalizing over $X$ \citep[\S4.7]{HoTTBook} gives
	$W(\gamma)\simeq\Sigma(X:\UU).(X=A\gamma)$, a singleton, so each $W(\gamma)$ is
	contractible.

	\smallskip\noindent\textbf{Step 2: conclusion, by relative funext.}
	By Step~1, $W$ is a family of trivially fibrant types, so restriction along
	the cofibration $i$ is a \emph{trivial} fibration \citep[Lem.~3.18]{2LTT}
	in a universe large enough to contain $W$, and its strict fiber
	over $w_0$ is trivially fibrant \citep[Def.~3.7]{2LTT}: the extension type
	$\Ext{i}{\Pi_\Gamma W}{w_0}\siso\mathsf{GlueData}_{i,\UU}(A,T,e)$
	(\cref{lem:glue-data-fibrant}) is contractible. As in
	\cref{lem:rel-funext-holds}, this uses the trivial-fibration half of the
	cofibration property, and no cofibrancy of $\Phi$ or $\Gamma$.
	(When $\Phi$ and $\Gamma$ \emph{are} cofibrant, one can argue instead via
	\cref{lem:weak=strict-dep}: both $\prod_\Gamma W$ and $\prod_\Phi W\circ i$
	are contractible by \cref{lem:rel-funext-holds}, and
	$\Ext{i}{\Pi_\Gamma W}{w_0}$ is equivalent to the homotopy fiber of $i^*_W$ over
	$w_0$, a homotopy fiber of a map between contractible types. In the Agda
	formalization this is used for cofibrant shapes.)
\end{proof}

In terms of the Glue \emph{rules} in homotopy form
(\cref{lem:homotopy-vs-semantic}),
the contractibility of the Glue data yields the following.

\begin{corollary}[\flink{Corollary-5-5} Univalence gives the homotopy form]\label{cor:ua-gives-homotopy-form}
	Let $\UU$ be a fibrant univalent universe and $i:\Phi\hookrightarrow\Gamma$
	a cofibration with $\Phi$ (hence $\Gamma$) cofibrant. Then $(i,\UU)$
	satisfies the Glue rules in homotopy form
	(\cref{def:homotopy-glue-rules}), with the underlying Glue data
	contractibly unique
	(\cref{thm:ua-implies-strong-glue}), and hence, by
	\cref{lem:constructor-package}(2), the full constructor package up to
	homotopy, with \rulename{term-bdry} strict. Only the strict $\eta$- and
	$\beta$-laws are missing, and by \cref{lem:constructor-package}(1) they
	are exactly the demand that the equivalences $\Theta$ be strictified to
	strict isomorphisms, a realignment demand.
\end{corollary}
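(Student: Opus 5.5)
The plan is to assemble the corollary from three earlier results. Fix a cofibration $i:\Phi\hookrightarrow\Gamma$ with $\Phi$ cofibrant; $\Gamma$ is then cofibrant as well, by closure of cofibrant types under cofibrations, as in the proof of \cref{lem:correct-hom-fillers}. Fix an input triple $(A,T,e)$.

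First, \cref{thm:ua-implies-strong-glue} makes $\mathsf{GlueData}_{i,\UU}(A,T,e)$ contractible. I would take its centre of contraction $(G,p,u)$ together with its coherence witness $u\circ i\seq e^{p}$. This is precisely a Glue structure at this input (\cref{def:glue-structure}). Doing this for every input gives a Glue structure for $(i,\UU)$, and any two choices are connected in the contractible type. This is the ``contractibly unique'' clause of the statement.

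Second, I would pass to the homotopy form via \cref{lem:homotopy-vs-semantic}. Set $\Glue^{T,e,A}\colonequiv G$, take the type boundary to be $p$, and let $\unglue\colonequiv\pi_1\circ u$. The strict boundary of item (3) of \cref{def:literally-translated-glue-rules} requires $\unglue_{i\phi}\seq e^p_\phi$ on underlying maps. This follows by applying $\pi_1$ to the coherence. Pointwise invertibility of $\unglue_\gamma$ is witnessed by $\pi_2(u_\gamma)$.

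Third, I would invoke \cref{lem:constructor-package}(2). Its hypotheses are that $\Phi$ is cofibrant and that each $\unglue_\gamma$ is an equivalence, and both have just been established. It yields that $\Theta$ is an equivalence of fibrant types. It also yields $\glue(t,a)$ with strict \rulename{term-bdry}, together with inner-equality versions of \rulename{beta} and \rulename{eta}. The final sentence then follows from \cref{lem:constructor-package}(1): strict $\eta$ and $\beta$, together with \rulename{term-bdry}, are equivalent to $\Theta$ having a strict two-sided inverse. What the data provides is only an equivalence, so upgrading it is a realignment question.

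The only point needing care is the hypothesis bookkeeping in the third step. \cref{lem:constructor-package}(2) needs $\Phi$ cofibrant so that the fiberwise equivalence $u$ induces an equivalence on $\prod_\Gamma$, and this is why the corollary assumes cofibrancy even though \cref{thm:ua-implies-strong-glue} does not. A secondary check is that the coercions along $p$ in $e^p$ agree with those implicit in $\Theta$. I would discharge this by strict-equality induction on $p$, using outer UIP.
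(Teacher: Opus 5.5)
Your proposal is correct and follows the paper's proof. You take the centre of contraction supplied by \cref{thm:ua-implies-strong-glue} as a Glue structure, pass to the homotopy form via \cref{lem:homotopy-vs-semantic}, and read off uniqueness from contractibility; your explicit appeals to \cref{lem:constructor-package}(2) and (1) just spell out what the paper leaves to the wording of the statement.
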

\begin{proof}
	\Cref{thm:ua-implies-strong-glue} makes
	$\mathsf{GlueData}_{i,\UU}(A,T,e)$ contractible for every input triple;
	its center of contraction is a Glue structure, which
	yields the homotopy form by \cref{lem:homotopy-vs-semantic}, and the
	contractibility gives the uniqueness.
\end{proof}

\subsection{The path interval}\label{sec:path-interval}

\Cref{thm:ua-implies-strong-glue} produces a Glue structure along
\emph{every} cofibration. A converse at that generality is hopeless: if the
only cofibrations available are trivial (for instance when no interesting shape
inclusions are postulated), a Glue structure carries no information about
$\UU$. To recover univalence we need at least one cofibration that is rich enough to
\emph{detect} equivalences (the inclusion of the two endpoints of an
interval), together with the structure that turns a line in the universe into
an inner equality and permits comparison with transport. We package this as a \emph{path
interval}.

\begin{definition}[\flink{Definition-5-6} Path interval]\label{def:path-interval}
	A \emph{path interval} consists of the following data.
	\begin{enumerate}
		\item An interval: a type $I$, not assumed fibrant, with two points $0, 1: I$, such that the induced map
		\[
		i_\partial:\partial I\longrightarrow I,
		\qquad \partial I \defeq \one +^{\mathrm{s}} \one
		\]
		is a cofibration.
		\item A line-to-identity structure: For every fibrant type \(Y\) (of arbitrary size), an operation
		\[
		\mathsf{lineToId}_Y:
		\prod_{g:I\to Y}\bigl(g(0)=_Y g(1)\bigr),
		\]
		natural in \(Y\). Naturality in \(Y\) means that, for all fibrant types \(Y,Z\), every function
		\(h:Y\to Z\), and every \(g:I\to Y\), there is an inner equality
		\[
		\mathsf{lineToId}_Z(h\circ g)
		=
		\mathsf{ap}_h\bigl(\mathsf{lineToId}_Y(g)\bigr).
		\]
	\end{enumerate}
\end{definition}

\noindent
\begin{minipage}[t]{0.755\textwidth}
\begin{remark}\label{rem:path-interval-diagram}
Writing $\mathsf{Id}_Y$ for $\Sigma (y_0, y_1 : Y). (y_0 = y_1)$ and $Y^I$ for $I \to Y$ to match the usual categorical presentation, the second condition of \cref{def:path-interval} means that, given the two solid fibrations in the diagram on the right, we have the dotted horizontal map.
	Some cubical models use $Y^I$ as their identity type, and one may expect that this horizontal map is an isomorphism.
	We do not require such an assumption. In CCHM with Swan's identity types~\citep[\S9.1]{CCHM2018}, the dotted map equips a line with the empty face marker, while forgetting the marker gives a retraction over $Y\times Y$. Neither the forgetful map nor the retraction law is part of \cref{def:path-interval}; they are not used below.
	Looking at cubical models, one might further expect the naturality condition to hold strictly; we do not ask for this as no argument in this paper requires it.
\end{remark}
\end{minipage}\hfill
\begin{minipage}[t]{0.215\textwidth}
	\vspace{0.6\baselineskip}
	\centering
	$\begin{tikzcd}[column sep=0.1em, row sep=4em]
		Y^I  \ar[rr,dotted] \ar[dr,->>] && \mathsf{Id}_Y \ar[dl,->>]\\
		& Y \times Y
	\end{tikzcd}$
\end{minipage}

\begin{lemma}[\flink{Lemma-5-8-coe} Coercion laws]\label{lem:coercion-laws}
	Let $I$ be a path interval and $\UU$ a fibrant universe. For a family
	$G:I\to\UU$ define
	\[
		\mathsf{coe}^G\;\colonequiv\;\pi_1\bigl(\mathsf{idtoeqv}(\mathsf{lineToId}(G))\bigr)\;:\;G(0)\to G(1).
	\]
	Then:
	\begin{enumerate}[noitemsep]
		\item[(c1)] by definition, $\mathsf{coe}^G$ is the underlying function of
		$\mathsf{idtoeqv}(\mathsf{lineToId}(G)):G(0)\simeq G(1)$;
		\item[(c2)] \emph{(naturality)} for families $G,H:I\to\UU$ and a
		fiberwise map $\theta:\prod_{t:I}\bigl(G(t)\to H(t)\bigr)$, one has
		$\theta_1\circ\mathsf{coe}^G\sim\mathsf{coe}^H\circ\theta_0$;
		\item[(c3)] \emph{(degeneracy)} $\mathsf{coe}^{\lambda\_.\,Y}\sim\id_Y$
		for every $Y:\UU$.
	\end{enumerate}
\end{lemma}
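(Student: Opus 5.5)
(c1) holds by definition, so the work is in (c2) and (c3). Both follow from the naturality clause of $\mathsf{lineToId}$ in \cref{def:path-interval}, applied to well-chosen fibrant types and maps, together with path induction in the inner level. Throughout, $\UU$ is fibrant, so $\mathsf{lineToId}_\UU$ and the naturality law at $\UU$ are available.

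For (c3) I apply naturality with $Y\colonequiv\one$, $Z\colonequiv\UU$, $h\colonequiv\lambda\_.\,Y$ and $g\colonequiv\lambda\_.\,\star$. Then $h\circ g\seq\lambda\_.\,Y$, so
\[
\mathsf{lineToId}_\UU(\lambda\_.\,Y)=\mathsf{ap}_h(\mathsf{lineToId}_\one(g)).
\]
The right-hand side is $\mathsf{ap}_h$ of a loop in the contractible type $\one$, hence equal to $\mathsf{refl}_Y$, because $\one$ is a set. Applying $\mathsf{idtoeqv}$ and taking underlying functions gives $\mathsf{coe}^{\lambda\_.Y}=\pi_1(\mathsf{idtoeqv}(\mathsf{refl}))\seq\id_Y$, and in particular a pointwise homotopy.

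For (c2) the idea is to encode the pair $(G,H,\theta)$ as a single line in a fibrant type of ``maps between types''. Let
\[
M\colonequiv\Sigma(X:\UU).\,\Sigma(X':\UU).\,(X\to X'),
\]
which is fibrant as an iterated $\Sigma$ of fibrant types, and let $m\colonequiv\lambda t.\,(G\,t,\,H\,t,\,\theta_t):I\to M$. I then use three projections out of $M$.
\begin{itemize}[noitemsep]
\item The first two projections give $\pi_1\circ m\seq G$ and $\pi_2\circ m\seq H$. Naturality therefore identifies $\mathsf{lineToId}_\UU(G)$ and $\mathsf{lineToId}_\UU(H)$ with $\mathsf{ap}_{\pi_1}(q)$ and $\mathsf{ap}_{\pi_2}(q)$, where $q\colonequiv\mathsf{lineToId}_M(m):m(0)=m(1)$.
\item The target statement can be phrased as a function $P:(m_0=m_1)\to\mathsf{Type}$ on paths in $M$, namely
\[
P(q)\colonequiv\bigl(\theta_1\circ\mathsf{coe}'(\mathsf{ap}_{\pi_1}q)\sim\mathsf{coe}'(\mathsf{ap}_{\pi_2}q)\circ\theta_0\bigr),
\]
with $\mathsf{coe}'\colonequiv\pi_1\circ\mathsf{idtoeqv}$ and the third components transported appropriately. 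I will generalize the endpoints and prove $P(q)$ for every path $q:x=_M y$ between arbitrary $x,y:M$.
\item Based path induction on $q$ then reduces everything to $q\equiv\mathsf{refl}$. There both coercions are identities and the homotopy is reflexivity.
\end{itemize}

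The main obstacle is this generalization step. The endpoints $m(0),m(1)$ involve $\theta_0,\theta_1$ only through the strict terms $m(0)$ and $m(1)$, so I must state the inner claim as: for all $x,y:M$ and $q:x=y$, $\pi_3(y)\circ\mathsf{coe}'(\mathsf{ap}_{\pi_1}q)\sim\mathsf{coe}'(\mathsf{ap}_{\pi_2}q)\circ\pi_3(x)$. This is a dependent type over $x$, $y$ and $q$ that is fibrant, so path induction applies, and the reflexivity case is immediate. Instantiating at $x\colonequiv m(0)$, $y\colonequiv m(1)$, $q\colonequiv\mathsf{lineToId}_M(m)$ gives the inner claim. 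Finally I rewrite along the two naturality equalities above, which are inner equalities between paths in $\UU$, to replace $\mathsf{ap}_{\pi_1}q$ and $\mathsf{ap}_{\pi_2}q$ by $\mathsf{lineToId}(G)$ and $\mathsf{lineToId}(H)$, yielding (c2). Size is not an issue, since $\mathsf{lineToId}$ is available for fibrant types of arbitrary size.
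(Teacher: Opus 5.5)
Your proposal is correct and follows the paper's proof: (c3) uses naturality along $\lambda\_.\,\star : I \to \one$ and $\lambda\_.\,Y : \one \to \UU$ together with triviality of loops in $\one$, and (c2) packages $(G,H,\theta)$ as a single line in $\Sigma(X,X':\UU).(X\to X')$ and identifies the projected paths via naturality. The only difference is cosmetic. Where the paper cites the characterization of paths in $\Sigma$-types and the transport formulas for function types and the universe, you prove the needed instance directly by path induction on the endpoint-generalized statement, which is the same argument unfolded. Your strict claim $\pi_1(\mathsf{idtoeqv}(\mathsf{refl}))\seq\id_Y$ relies on judgmental $J$-$\beta$, but only a homotopy is needed.
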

\begin{proof}
	(c3): The constant family satisfies $\lambda\_.\,Y\seq k\circ\,!$ for
	$!\colonequiv\lambda\_.\,\star:I\to\one$ and $k\colonequiv\lambda\_.\,Y:\one\to\UU$.
	The inner equality $\mathsf{lineToId}(!)$ has type $\star=_\one\star$, which is
	contractible, so $\mathsf{lineToId}(!)=\mathsf{refl}$; by naturality,
	$\mathsf{lineToId}(\lambda\_.\,Y)=\mathrm{ap}_k(\mathsf{lineToId}(!))$, which
	is therefore equal to $\mathrm{ap}_k(\mathsf{refl})=\mathsf{refl}$.
	Hence $\mathsf{idtoeqv}(\mathsf{lineToId}(\lambda\_.\,Y))=\id$ and
	$\mathsf{coe}^{\lambda\_.Y}\sim\id_Y$.

	(c2): Let $\UU_{\to}\colonequiv\Sigma(X,Y:\UU).(X\to Y)$, a fibrant type,
	with projections $\mathsf{dom},\mathsf{cod}:\UU_\to\to\UU$. The fiberwise map
	$\theta$ packages as a single line
	$\xi\colonequiv\lambda t.(G\,t,\,H\,t,\,\theta_t):I\to\UU_\to$, with
	$\mathsf{dom}\circ\xi\seq G$ and $\mathsf{cod}\circ\xi\seq H$. Set
	$r\colonequiv\mathsf{lineToId}(\xi):\xi(0)=\xi(1)$; by naturality,
	$\mathrm{ap}_{\mathsf{dom}}(r)=\mathsf{lineToId}(G)$ and
	$\mathrm{ap}_{\mathsf{cod}}(r)=\mathsf{lineToId}(H)$. The characterization of
	identity types in the iterated $\Sigma$-type $\UU_\to$ and the transport formulas for
	function-type families and for the universe
	\citep[\S2.7, \S2.9, \S2.10]{HoTTBook} turn $r$ into a homotopy
	\[
		\mathsf{idtoeqv}(\mathrm{ap}_{\mathsf{cod}}\,r)\circ\theta_0
		\;\sim\;
		\theta_1\circ\mathsf{idtoeqv}(\mathrm{ap}_{\mathsf{dom}}\,r).
	\]
	Substituting $\mathsf{lineToId}(G)$ and $\mathsf{lineToId}(H)$ and taking underlying
	maps gives $\mathsf{coe}^H\circ\theta_0\sim\theta_1\circ\mathsf{coe}^G$.
\end{proof}

\noindent
The strong direction (\cref{thm:ua-implies-strong-glue}) applies to
$(i_\partial,\UU)$ whenever $\UU$ is a fibrant univalent universe. (The domain
$\partial I = \one+^{\mathrm{s}}\one$ is finite, hence cofibrant \citep[Lem.~3.25]{2LTT};
\cref{thm:ua-implies-strong-glue} no longer needs this, but it is used when
instantiating $(\mathsf{glue})$ at $i_\partial$ in
\cref{thm:conclusions}.) The simplification available when $I$ is
itself fibrant is discussed in \cref{rem:fibrant-interval}.

\begin{remark}[\flink{Remark-5-9} Simplification for a fibrant interval]\label{rem:fibrant-interval}
	If $I$ is itself fibrant, then, in addition to the endpoint cofibration, a single \emph{segment} $\mathsf{seg}:0=_I 1$ supplies the line-to-identity structure (clause~2) of \cref{def:path-interval}. The inner identity type of $I$ is then available, and one may take
	\[
		\mathsf{lineToId}(G)\colonequiv\mathsf{ap}_G(\mathsf{seg}),
		\qquad
		\mathsf{coe}^G\colonequiv\mathsf{transport}^G(\mathsf{seg}).
	\]
	The coercion laws of \cref{lem:coercion-laws} then hold for these definitions: they are the standard facts that $\mathsf{idtoeqv}(\mathsf{ap}_G \mathsf{seg})=\mathsf{transport}^G(\mathsf{seg})$, that fiberwise maps commute with transport, and that transport in a constant family is the identity \citep[\S2.3, \S2.10]{HoTTBook}; no separately supplied line-to-identity operation is needed.

	If we take \(I\) to be the HoTT interval, i.e.\ the higher inductive type that is generated by points
	\(0,1:I\) and a path \(\mathsf{seg}:0=_I1\), all conditions from \cref{def:path-interval} are satisfied apart from the condition that \(i_\partial:\one+^{\mathrm{s}}\one\rightarrow I\) is a cofibration.
	This does not appear to follow from the HIT rules and would have to be assumed separately.
\end{remark}

\subsection{From gluing to univalence: the weak direction}

The proof that gluing implies univalence uses only that the comparison map
$\mathsf{idtoeqv}$ admits a section.
This observation is due to Licata~\citep{Licata2016}, based on an earlier observation by Escard\'o~\citep{Escardo2014} that any retraction of an identity type is an equivalence:

\begin{lemma}[\flink{Lemma-5-10} Univalence from a section of $\mathsf{idtoeqv}$]\label{lem:ua-from-section}
	Let $\UU$ be a fibrant universe. Suppose that for all $A,B:\UU$ there is a
	map
	\[
		s_{A,B} : (A \simeq B) \longrightarrow (A =_\UU B)
	\]
	that is a section of $\mathsf{idtoeqv}_{A,B}$, i.e.\
	$\mathsf{idtoeqv}_{A,B}(s_{A,B}(g)) = g$ for every $g : A \simeq B$. Then
	$\UU$ is univalent.
\end{lemma}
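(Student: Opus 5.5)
The plan is to show that for each fixed $A:\UU$, the total map
\[
	\textstyle\Sigma(B:\UU).\,(A=_\UU B)\;\longrightarrow\;\Sigma(B:\UU).\,(A\simeq B),
	\qquad (B,q)\mapsto(B,\mathsf{idtoeqv}_{A,B}(q)),
\]
is an equivalence. By \citet[Thm.~4.7.7]{HoTTBook}, this already implies that each fiberwise map $\mathsf{idtoeqv}_{A,B}$ is an equivalence, i.e.\ that $\UU$ is univalent. The domain is a based path space and hence contractible. It therefore suffices to show that the codomain $\Sigma(B:\UU).\,(A\simeq B)$ is contractible, because a map between contractible types is an equivalence.

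To prove contractibility of the codomain, I would use the section. The fiberwise maps $s_{A,B}$ totalize to a map $S:\Sigma_B(A\simeq B)\to\Sigma_B(A=B)$, and the hypothesis $\mathsf{idtoeqv}\circ s=\id$ (pointwise inner equality) totalizes to a homotopy $\mathsf{tot}(\mathsf{idtoeqv})\circ S\sim\id$. This is the step where some care is needed: the fiberwise homotopy must be lifted to the total space. Since both maps fix the first component, I can take the path in the $\Sigma$-type to be $\mathsf{refl}$ on the first component and the given homotopy $\mathsf{idtoeqv}(s(g))=g$ on the second, via \citet[Thm.~2.7.2]{HoTTBook}. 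Hence $\Sigma_B(A\simeq B)$ is a retract of the contractible type $\Sigma_B(A=B)$, and retracts of contractible types are contractible. This is Escardó's observation in the form used by Licata.

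No 2LTT-specific structure is involved: everything happens in the inner level, using only that $\UU$ is fibrant so that these $\Sigma$-types and identity types make sense. I do not expect a genuine obstacle here. The only points needing care are the universe levels, since $\Sigma(B:\UU).(A\simeq B)$ lives one level up, which causes no problem because the argument uses no resizing, and the routine construction of the total homotopy above. Note that the argument does not require $s$ to be natural in $B$ or compatible with $\mathsf{refl}$: being a mere section pointwise already suffices.
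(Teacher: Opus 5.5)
Your proposal is correct and follows the paper's proof essentially step for step. You fix $A$, totalize $\mathsf{idtoeqv}$ and $s$ over $B$, exhibit $\Sigma_B(A\simeq B)$ as a retract of the contractible based path space, and then conclude via the total-space characterization of fiberwise equivalences \citep[\S4.7]{HoTTBook}. Your extra remarks on lifting the fiberwise homotopy with \citet[Thm.~2.7.2]{HoTTBook} and on universe levels are accurate details that the paper leaves implicit.
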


\begin{proof}
	Fix $A:\UU$. Totalizing over $B$ turns the fiberwise maps
	$\mathsf{idtoeqv}_{A,-}$ and $s_{A,-}$ into
	\[
		\Theta : \sum_{B:\UU}(A =_\UU B) \longrightarrow \sum_{B:\UU}(A \simeq B),
		\qquad
		S : \sum_{B:\UU}(A \simeq B) \longrightarrow \sum_{B:\UU}(A =_\UU B),
	\]
	with $\Theta(B,p) \colonequiv (B,\mathsf{idtoeqv}(p))$ and
	$S(B,g) \colonequiv (B,s_{A,B}(g))$. The section hypothesis gives
	$\Theta(S(B,g)) = (B,\mathsf{idtoeqv}(s_{A,B}(g))) = (B,g)$, so
	$\Theta \circ S \sim \id$; that is, $\sum_{B}(A\simeq B)$ is a retract of
	$\sum_{B}(A =_\UU B)$. The latter is a singleton, hence contractible, and a
	retract of a contractible type is contractible \citep[\S3.11]{HoTTBook}.
	Thus $\sum_{B:\UU}(A\simeq B)$ is contractible for every $A:\UU$. This
	implies univalence: the map $\Theta$ now has contractible domain (a
	singleton) and contractible codomain, hence is an equivalence, and by the
	total-space characterization of fiberwise equivalences \citep[\S4.7]{HoTTBook}
	the fiberwise map $\mathsf{idtoeqv}_{A,B}$ is an equivalence for all $B$.
	As $A$ was arbitrary, $\UU$ is univalent.
\end{proof}

\begin{definition}[\flink{Definition-5-11} Weak Glue structure]\label{def:weak-glue}
	Let $I$ be a type with two points $0,1:I$ and let $\UU$ be a fibrant
	universe. A \emph{weak Glue structure} for $(I,0,1,\UU)$ is an operation
	assigning to all $A,B:\UU$ and every equivalence $f:A\simeq B$ the
	following data:
	\begin{enumerate}[noitemsep]
		\item a family $G:I\to\UU$;
		\item \emph{inner equalities in the universe} $b_0:G(0)=_\UU A$ and $b_1:G(1)=_\UU B$;
		\item a family of \emph{mere maps} $u:\prod_{t:I}\bigl(G(t)\to B\bigr)$;
		\item homotopies $u_0\sim f\circ\mathsf{idtoeqv}(b_0)$ and
		      $u_1\sim \mathsf{idtoeqv}(b_1)$.
	\end{enumerate}
	No cofibrancy or fibrancy assumption on $I$ or on $0,1$ is made.
\end{definition}

Note that, in the above definitions, the boundary must consist of inner equalities.
If $b_0,b_1$ are weakened to equivalences, the structure is trivially inhabited
without univalence: take $G\colonequiv\lambda\_.\,B$,
$b_0\colonequiv f^{-1}$, $b_1\colonequiv\id_B$, and
$u\colonequiv\lambda t.\id_B$.
The requirement that the glued type
agree with the boundary data by (at least) an inner equality in the universe is the
irreducible core of gluing.

\begin{longversion}
\begin{remark}[The weakenings are sharp]\label{rem:weak-glue-sharp}
	Each ingredient of \cref{def:weak-glue} is as weak as possible.
	\begin{enumerate}[noitemsep]
		\item \emph{The unglue maps need not be equivalences}: invertibility at
		the endpoints is forced by the homotopies in (4), and the proof of
		\cref{thm:weak-glue-implies-ua} never inverts $u$.
		\item Only the \emph{constant} background family $\lambda\_.\,B$ and the
		special boundary data $(A,B,f,\id_B)$ are required, rather than
		arbitrary triples $(A,T,e)$.
	\end{enumerate}
	Conversely, the weak structure is easily inhabited \emph{from}
	univalence, for any type $I$ with two points and with no structure on
	$I$ at all: given $A,B:\UU$ and $f:A\simeq B$, take the constant family
	$G\colonequiv\lambda\_.\,B$, the boundary equalities
	$b_0\colonequiv\mathsf{ua}(f)^{-1}:B=_\UU A$ and
	$b_1\colonequiv\mathsf{refl}_B$, and the mere maps
	$u\colonequiv\lambda t.\,\id_B$; since
	$\mathsf{idtoeqv}(\mathsf{ua}(f))=f$ and $\mathsf{idtoeqv}$ carries
	inverses of inner equalities to inverse equivalences, the endpoint homotopies required
	by \cref{def:weak-glue}(4) follow.
\end{remark}
\end{longversion}

\begin{remark}[\flink{Remark-5-12} Univalence gives weak glue]\label{lem:ua-gives-weak-glue}
The weak structure is easily inhabited from
univalence, for any type $I$ with two points and with no structure on
$I$ at all: given $A,B:\UU$ and $f:A\simeq B$, take the constant family
$G\colonequiv\lambda\_.\,B$, the boundary equalities
$b_0\colonequiv\mathsf{ua}(f)^{-1}:B=_\UU A$ and
$b_1\colonequiv\mathsf{refl}_B$, and the mere maps
$u\colonequiv\lambda t.\,\id_B$; since
$\mathsf{idtoeqv}(\mathsf{ua}(f))=f$ and $\mathsf{idtoeqv}$ carries
inverses of inner equalities to inverse equivalences, the endpoint homotopies required
by \cref{def:weak-glue}(4) follow.
\end{remark}

By \cref{rem:fibrant-interval}, if \(I\) is fibrant and carries a segment
\(\sigma:0=_I1\), then \(I\) carries a line-to-identity structure. Hence
\cref{thm:weak-glue-implies-ua,lem:ua-gives-weak-glue} show that univalence
and the existence of a weak Glue structure for \(I\) imply one another.
Neither direction requires \(i_\partial:\partial I\to I\) to be a
cofibration. Thus this conclusion does not assert that \(I\) is a path
interval in the sense of \cref{def:path-interval}.

\begin{lemma}[\flink{Lemma-5-13-i} Chain of strength]\label{lem:glue-strength-chain}
	Let $\UU$ be a fibrant universe.
	\begin{enumerate}[noitemsep]
		\item For every cofibration $i$, contractible Glue data for
		$(i,\UU)$ yield a Glue structure for $(i,\UU)$.
		\item Let $I$ be a type with $0,1:I$ such that
		$i_\partial:\partial I\hookrightarrow I$ is a cofibration. Then
		a Glue structure for $(i_\partial,\UU)$ yields a weak Glue structure
		for $(I,0,1,\UU)$. The same conclusion holds for the variant in which
		the strict coherence of \cref{def:glue-structure} is replaced by a
		homotopy between the underlying functions.
	\end{enumerate}
\end{lemma}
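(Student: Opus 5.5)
For part (1), I will simply extract the center of contraction. By \cref{lem:glue-data-fibrant}, $\mathsf{GlueData}_{i,\UU}(A,T,e)$ is fibrant, so ``contractible'' is meaningful and supplies a center $(G,p,u,c)$ for each input triple. Unfolding \cref{def:glue-structure}, this center is exactly a triple $(G,p,u)$ together with its coherence witness. Choosing the center for every input gives an element of the product over inputs, i.e.\ a Glue structure. No argument beyond this projection is needed.

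For part (2), fix $A,B:\UU$ and $f:A\simeq B$. I apply the Glue structure for $(i_\partial,\UU)$ to the following input:
\begin{itemize}
  \item the constant background family $\lambda\_.\,B:I\to\UU$;
  \item the boundary family $T:\partial I\to\UU$ given by $T(\mathsf{inl}\,\star)\colonequiv A$ and $T(\mathsf{inr}\,\star)\colonequiv B$, defined by strict-sum elimination;
  \item the equivalences $e(\mathsf{inl}\,\star)\colonequiv f$ and $e(\mathsf{inr}\,\star)\colonequiv\id_B$.
\end{itemize}
The output is $(G,p,u)$, and I take
\begin{itemize}
  \item $G$ as the family in item (1) of \cref{def:weak-glue};
  \item $b_0\colonequiv\iota(p_{\mathsf{inl}\star}):G(0)=_\UU A$ and $b_1\colonequiv\iota(p_{\mathsf{inr}\star}):G(1)=_\UU B$, coercing the strict boundary equalities to inner ones;
  \item the mere maps $\lambda t.\,\pi_1(u_t)$.
\end{itemize}

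It remains to produce the homotopies of \cref{def:weak-glue}(4). The coherence gives strict equalities $u_0\seq e^{p}_{\mathsf{inl}\star}$ and $u_1\seq e^{p}_{\mathsf{inr}\star}$, where $e^p$ is $e$ with source transported along $p$. The key observation is that strict transport along $p_\phi:G(i\phi)\seq T(\phi)$ agrees, up to homotopy, with precomposition by $\mathsf{idtoeqv}(\iota\,p_\phi)$. I prove this by strict-equality induction on $p_\phi$: this is allowed because $\seq$ eliminates into arbitrary (here fibrant) types. In the case $\mathsf{refl}^s$ both sides reduce to $e_\phi$, since $\iota(\mathsf{refl}^s)=\mathsf{refl}$ and $\mathsf{idtoeqv}(\mathsf{refl})=\id$. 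This yields
\[
u_0\sim f\circ\mathsf{idtoeqv}(b_0)
\quad\text{and}\quad
u_1\sim\id_B\circ\mathsf{idtoeqv}(b_1)=\mathsf{idtoeqv}(b_1).
\]

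I expect the main obstacle to be exactly this transport step. Generalizing correctly for the induction is subtle: $G(i\phi)$ is not a variable, so I would first abstract to an arbitrary $X:\UU$ with $q:X\seq T(\phi)$ and then induct on $q$. For the homotopy-coherent variant, the argument is the same, except that the strict coherence is replaced by a given homotopy between underlying functions. I compose that homotopy with the same transport lemma; since only underlying maps enter \cref{def:weak-glue}(4), nothing else changes. Cofibrancy of $i_\partial$ is used only to make the Glue structure hypothesis meaningful.
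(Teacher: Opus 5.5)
Your proposal is correct and follows the paper's proof essentially step for step. For (1) you take the center of contraction. For (2) you use the input $(\lambda\_.\,B,\,[A,B],\,[f,\id_B])$, set $b_0\colonequiv\iota(p_{\mathsf{inl}})$ and $b_1\colonequiv\iota(p_{\mathsf{inr}})$, and prove by strict-equality induction, generalized over the source type, that strict transport along $q$ is homotopic to the underlying map of $\mathsf{idtoeqv}(\iota\,q)$; this also covers the homotopy-coherent variant.

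The one refinement the paper makes concerns your claim that both sides reduce to each other at $\mathsf{refl}^{\mathrm{s}}$. The argument only needs $\mathsf{idtoeqv}(\mathsf{refl})\sim\id$, so it also works when the $J$-$\beta$ rule holds merely propositionally.
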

\begin{proof}
	(1) Take the center of contraction. (2) The proof uses the strict
	coherence only after projecting to the underlying functions and converting
	it into a homotopy. We may therefore assume these boundary homotopies
	directly.
	Given $A,B:\UU$ and $f:A\simeq B$, apply the given structure to the
	constant family $\bar A\colonequiv\lambda\_.\,B$, the
	copairing $T\colonequiv[A,B]:\partial I\to\UU$, and
	$e\colonequiv[f,\id_B]$. We obtain $G:I\to\UU$ with
	$\beta:G\circ i_\partial\seq T$, an unglue family
	$u:\prod_t\bigl(G(t)\simeq B\bigr)$. Write $u'_t\colonequiv\pi_1(u_t)$
	for its underlying functions. The assumed boundary homotopies are
	$u'_{i_\partial\phi}\sim\pi_1(e_\phi)\circ c_{\beta_\phi}$, where
	$c_{\beta_\phi}:G(i_\partial\phi)\to T(\phi)$ is the strict transport along
	$\beta_\phi$. Set $b_0\colonequiv\iota(\beta_{\mathsf{inl}}):G(0)=_\UU A$
	and $b_1\colonequiv\iota(\beta_{\mathsf{inr}}):G(1)=_\UU B$. By strict equality elimination, for
	any strict equality $\beta$ the underlying function of
	$\mathsf{idtoeqv}(\iota(\beta))$ is homotopic to the strict transport $c_\beta$ (at
	$\mathsf{refl}^{\mathrm{s}}$, the strict transport is the identity and
	$\mathsf{idtoeqv}(\mathsf{refl})$ is homotopic to it by the propositional
	$J$-$\beta$ rule; with weak $J$-$\beta$, only this homotopy is available, and
	only it is used). Hence
	$u'_0 \sim f\circ c_{\beta_{\mathsf{inl}}} \sim f\circ\mathsf{idtoeqv}(b_0)$
	and
	$u'_1 \sim \id_B\circ c_{\beta_{\mathsf{inr}}} \sim \mathsf{idtoeqv}(b_1)$,
	which is the weak structure.
\end{proof}

\begin{theorem}[\flink{Theorem-5-14} A weak Glue structure implies
	univalence]\label{thm:weak-glue-implies-ua}
	Let $\UU$ be a fibrant universe and let $(I,0,1)$ have a line-to-identity structure (i.e., satisfy clause~2 of
	\cref{def:path-interval}). If there is a weak Glue structure for
	$(I,0,1,\UU)$, then $\UU$ is univalent.
\end{theorem}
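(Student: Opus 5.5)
Via \cref{lem:ua-from-section}, it suffices to construct, for all $A,B:\UU$, a map $s_{A,B}:(A\simeq B)\to(A=_\UU B)$ with $\mathsf{idtoeqv}(s_{A,B}(f))=f$. Given $f:A\simeq B$, apply the weak Glue structure to obtain $G:I\to\UU$, boundary equalities $b_0:G(0)=_\UU A$ and $b_1:G(1)=_\UU B$, mere maps $u_t:G(t)\to B$, and the two endpoint homotopies. Define
\[
	s_{A,B}(f)\;\colonequiv\; b_0^{-1}\cdot\mathsf{lineToId}(G)\cdot b_1 \;:\; A=_\UU B .
\]
Since $\mathsf{idtoeqv}$ sends concatenation to composition and inverses to inverses, the underlying function of $\mathsf{idtoeqv}(s_{A,B}(f))$ is homotopic to
\[
	\mathsf{idtoeqv}(b_1)\circ\mathsf{coe}^G\circ\mathsf{idtoeqv}(b_0)^{-1},
\]
with $\mathsf{coe}^G$ as in \cref{lem:coercion-laws}. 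Equivalences are determined by their underlying functions because $\mathsf{isEquiv}$ is a proposition, and we assume inner funext throughout. So it suffices to show that this composite is homotopic to $f$.

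For this, apply naturality (c2) of \cref{lem:coercion-laws} to the fiberwise map $\theta\colonequiv u:\prod_{t:I}(G(t)\to B)$, from $G$ to the constant family $\lambda\_.\,B$. This gives $u_1\circ\mathsf{coe}^G\sim\mathsf{coe}^{\lambda\_.B}\circ u_0$, and by degeneracy (c3) the right-hand side is homotopic to $u_0$. Substituting the homotopies of \cref{def:weak-glue}(4), namely $u_1\sim\mathsf{idtoeqv}(b_1)$ and $u_0\sim f\circ\mathsf{idtoeqv}(b_0)$, yields
\[
	\mathsf{idtoeqv}(b_1)\circ\mathsf{coe}^G\;\sim\; f\circ\mathsf{idtoeqv}(b_0).
\]
Precomposing with $\mathsf{idtoeqv}(b_0)^{-1}$ gives exactly the required homotopy. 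This shows that $s$ is a section of $\mathsf{idtoeqv}$, and \cref{lem:ua-from-section} concludes that $\UU$ is univalent. Only homotopies are used and $u$ is never inverted, consistent with the remark that the unglue maps need not be equivalences.

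The substantive step is the use of (c2) with target the constant family and its reduction via (c3). This is where the line-to-identity structure, and in particular its naturality in $Y$, is used. Everything else is bookkeeping about $\mathsf{idtoeqv}$ respecting path composition and inversion, which follows by path induction. One point to check is that $\mathsf{lineToId}$ is applied with $Y\colonequiv\UU$ and, inside (c2), with $Y\colonequiv\UU_\to$. Both are fibrant types of larger size, and \cref{def:path-interval} allows these ("of arbitrary size"), so the appeal is legitimate.
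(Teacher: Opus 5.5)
Your proposal is correct and matches the paper's proof. Both define the same section $s(f)\colonequiv b_0^{-1}\cdot\mathsf{lineToId}(G)\cdot b_1$, reduce $\mathsf{idtoeqv}(s(f))$ to $\mathsf{idtoeqv}(b_1)\circ\mathsf{coe}^G\circ\mathsf{idtoeqv}(b_0)^{-1}$, use the same chain of homotopies through (c2) at the constant family $\lambda\_.\,B$ and (c3), and conclude with \cref{lem:ua-from-section} using that $\mathsf{isEquiv}$ is a proposition.
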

\begin{proof}
	By \cref{lem:ua-from-section} it suffices to construct, for all $A,B:\UU$, a
	section of $\mathsf{idtoeqv}_{A,B}$. Given $f:A\simeq B$, let
	$(G,b_0,b_1,u)$ be the data provided by the weak Glue structure and set
	\[
		s(f)\;\colonequiv\;b_0^{-1}\cdot\mathsf{lineToId}(G)\cdot b_1
		\;:\;A=_\UU B ,
	\]
	using $\mathsf{lineToId}(G):G(0)=_\UU G(1)$ from the line-to-identity structure at the
	fibrant type $\UU$. By the groupoid laws for $\mathsf{idtoeqv}$ and
	\cref{lem:coercion-laws}(c1), the underlying map of $\mathsf{idtoeqv}(s(f))$
	is $\mathsf{idtoeqv}(b_1)\circ\mathsf{coe}^G\circ\mathsf{idtoeqv}(b_0)^{-1}$.
	Now compute, using the homotopies of \cref{def:weak-glue}(4), naturality
	\cref{lem:coercion-laws}(c2) applied to the fiberwise map $u$ from $G$ to
	the constant family $\lambda\_.\,B$, and degeneracy (c3):
	\[
		\mathsf{idtoeqv}(b_1)\circ\mathsf{coe}^G
		\;\sim\;u_1\circ\mathsf{coe}^G
		\;\sim\;\mathsf{coe}^{\lambda\_.B}\circ u_0
		\;\sim\;u_0
		\;\sim\;f\circ\mathsf{idtoeqv}(b_0).
	\]
	Hence $\mathsf{idtoeqv}(s(f))\sim f$ pointwise. Being an equivalence is a
	proposition, so $\mathsf{idtoeqv}(s(f))=f$ as equivalences, $s$ is a section of
	$\mathsf{idtoeqv}_{A,B}$, and \cref{lem:ua-from-section} yields univalence.
\end{proof}

\subsection{The equivalence cycle}

We put the above results together:

\begin{theorem}[\flink{Theorem-5-15-fibrant} Univalence is equivalent to gluing, strong form]\label{thm:glue-sandwich}
	Let $I$ be a path interval (\cref{def:path-interval}) and let
	$\UU$ be a fibrant universe. The following are equivalent:
	\begin{enumerate}[noitemsep]
		\item $\UU$ is univalent;
		\item for \emph{every} cofibration $i:\Phi\hookrightarrow\Gamma$,
		$(i,\UU)$ has contractible Glue data;
		\item $(i_\partial:\partial I\hookrightarrow I,\ \UU)$ has contractible
		Glue data;
		\item there is a Glue structure for $(i_\partial,\UU)$;
		\item there is a weak Glue structure for $(I,0,1,\UU)$.
	\end{enumerate}
\end{theorem}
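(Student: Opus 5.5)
The proof runs around the cycle (1) $\Rightarrow$ (2) $\Rightarrow$ (3) $\Rightarrow$ (4) $\Rightarrow$ (5) $\Rightarrow$ (1), where each arrow is one of the results already established.

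\textbf{(1) $\Rightarrow$ (2).} This is exactly \cref{thm:ua-implies-strong-glue}. It applies to an arbitrary cofibration and needs no cofibrancy of the shapes.

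\textbf{(2) $\Rightarrow$ (3).} This is the special case $i \colonequiv i_\partial$. We only need that $i_\partial:\partial I\hookrightarrow I$ is a cofibration, which is clause~1 of \cref{def:path-interval}.

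\textbf{(3) $\Rightarrow$ (4).} This is \cref{lem:glue-strength-chain}(1): for each input triple we take the center of contraction of $\mathsf{GlueData}_{i_\partial,\UU}(A,T,e)$. These centers assemble into a Glue structure, since a Glue structure is by definition an element of the product of the types $\mathsf{GlueData}$ over all inputs. The step is a plain $\Pi$ over inputs, and no choice principle is needed, because contractibility supplies the centers explicitly.

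\textbf{(4) $\Rightarrow$ (5).} This is \cref{lem:glue-strength-chain}(2), again using that $i_\partial$ is a cofibration. The construction instantiates the Glue structure at the constant background family $\lambda\_.\,B$ with boundary data $[A,B]$ and $[f,\id_B]$. It then coerces the two components of the strict type boundary into inner equalities $b_0,b_1$ via $\iota$.

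\textbf{(5) $\Rightarrow$ (1).} This is \cref{thm:weak-glue-implies-ua}. It requires only clause~2 of \cref{def:path-interval}, the natural $\mathsf{lineToId}$ structure, which a path interval provides. Internally it goes through \cref{lem:coercion-laws} and \cref{lem:ua-from-section}.

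The only step needing genuine care is bookkeeping about universe levels and fibrancy, not mathematics. The argument for (1) $\Rightarrow$ (2) uses the restriction fibration for the family $W(\gamma)=\Sigma(X:\UU).(X\simeq A\gamma)$, which lives in a larger universe. Likewise, (5) $\Rightarrow$ (1) applies $\mathsf{lineToId}$ at the fibrant type $\UU$ itself. I would check that \cref{def:path-interval} indeed provides $\mathsf{lineToId}_Y$ for fibrant $Y$ of arbitrary size, which it explicitly does. I would also check that the cofibration property for $i$ is used at a universe large enough to contain $W$, as noted in the proof of \cref{thm:ua-implies-strong-glue}. With these checks the cycle closes, and all five statements are equivalent.
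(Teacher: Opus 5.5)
Your proposal is correct and matches the paper's proof exactly: the same cycle $(1)\Rightarrow(2)\Rightarrow(3)\Rightarrow(4)\Rightarrow(5)\Rightarrow(1)$, using \cref{thm:ua-implies-strong-glue}, specialization to $i_\partial$, \cref{lem:glue-strength-chain}(1) and (2), and \cref{thm:weak-glue-implies-ua}. The universe-size and fibrancy points you flag are already handled inside those cited results.
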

\begin{proof}
	$(1)\Rightarrow(2)$ is \cref{thm:ua-implies-strong-glue}.
	$(2)\Rightarrow(3)$ is a special case.
	$(3)\Rightarrow(4)$ is \cref{lem:glue-strength-chain}(1) and
	$(4)\Rightarrow(5)$ is \cref{lem:glue-strength-chain}(2).
	$(5)\Rightarrow(1)$ is \cref{thm:weak-glue-implies-ua}.
\end{proof}

\begin{longversion}
  \begin{remark}\label{rem:glue-sandwich-consequences}
	In \cref{thm:glue-sandwich}, the \emph{contractible Glue data} of
	\cref{def:strong-glue}, the \emph{Glue structure} of
	\cref{def:glue-structure}, and the \emph{weak Glue structure} of
	\cref{def:weak-glue} are semantic packages. Thus, the equivalence cycle
	relates univalence to these semantic gluing conditions. It does not assert
	that univalence is equivalent to the availability of the primitive CCHM
	\texttt{Glue} type former together with its \texttt{glue} constructor,
	judgmental computation rules, and Kan composition
	\citep[Fig.~4, \S\S6--7]{CCHM2018}.
	
	The fragment $(1)\Leftrightarrow(4)$ of \cref{thm:glue-sandwich} is the
	basic equivalence of univalence with gluing along $i_\partial$.
	On the forward side, univalence provides contractible Glue data along
	every cofibration, rather than merely a selected Glue object. For each
	input triple, the type of glued families equipped with a strict type
	boundary and a strictly coherent unglue equivalence is contractible.
	(Existence alone, with boundary coherence only up to homotopy, can also
	be obtained in a single strictification step from \cref{lem:rs410}.)
	
	On the converse side, univalence is already forced by the weak structure
	of \cref{def:weak-glue}. This structure requires only boundary agreement
	by inner equalities in the universe, mere unglue maps, a constant
	background, and the special boundary $(A,B,f,\id_B)$. It does not require
	$i_\partial$ to be a cofibration. Of the path-interval data in
	\cref{def:path-interval}, the proof uses only the line-to-identity clause,
	via \cref{lem:coercion-laws}, at the three types $\UU$,
	$\Sigma(X,Y{:}\UU).(X\to Y)$, and $\one$.
	Conversely, univalence directly supplies a weak Glue structure for every
	bipointed type $(I,0,1)$, with no further hypothesis on $I$
	(\cref{lem:ua-gives-weak-glue}).
	
	Under the hypotheses of \cref{thm:glue-sandwich}, any gluing condition
	that follows from contractible Glue data and implies weak Glue along
	$(I,0,1)$ is equivalent to univalence. This applies, for example, to the
	variant of \cref{def:glue-structure} in which strict coherence is replaced
	by a homotopy between the underlying functions:
	\cref{lem:glue-strength-chain}(2) gives its implication to weak Glue, while
	the implication from contractible Glue data factors through an ordinary
	Glue structure. The formalization proves the cycle for this variant as
	well. The same conclusion applies to a realignment-style variant once
	these two comparison implications have been established.
\end{remark}
\end{longversion}

\section{Conclusions: Toward an equivalence of type theories}\label{sec:conservativity}

\paragraph{The question.} The question about the connection between cubical type theories and ``book HoTT'' is as old as cubical type theory, and arguably among the most important open questions of homotopy type theory.
What researchers studying the field hope is a statement of the form:
\begin{conjecture}[folklore conjecture of homotopy type theory]\label{conj:is-this-possible}
	Internal theorems proved in a cubical type theory such as cubical Agda~\citep{CubicalAgda} can also be proved in the version of HoTT developed in the HoTT book~\citep{HoTTBook}, as approximated e.g.\ by standard Agda with postulates as in the \texttt{HoTT-Agda} library~\citep{swan:hott-in:agda}.
	Concretely, various formulations of homotopy type theory and cubical type theories are Morita equivalent or conservative extensions of each other.
\end{conjecture}

The Morita equivalence formulation of \cref{conj:is-this-possible} is due to Isaev
\citep{Isaev2018Morita}, whose motivating examples include replacing
judgmental computation rules by propositional ones (proved only in simple
cases). Uemura poses exactly our question: interpreting book HoTT in
cubical type theory is expected to be an equivalence up to homotopy, never
one of syntactic theories \citep[\S8]{Uemura2023}; cf.\ the $\infty$-type
theories of \citet{NguyenUemura2025}.
Hofmann's conservativity of extensional over intensional type theory
\citep{Hofmann1995}, made constructive by Oury \citep{Oury2005}, completed and
formalized by Winterhalter et al.~\citep{WST2019,Winterhalter2020}, and by
now a Morita equivalence \citep{KapulkinLi2025Revisited}, is the principle that strictness can be inert; its UIP-free analogues are the target of Bocquet's coherence
program \citep{Bocquet2020,Bocquet2022,Bocquet2023,BKS2023}, highly relevant for the questions at hand.
Semantically, BCH, De Morgan,
and minimal cartesian cubical sets present homotopy theories
\emph{different} from spaces \citep{Sattler2018Talk,ACCRS2026}, while the
equivariant cartesian model, where the connection is most plausible,
presents spaces \citep{ACCRS2026,CavalloSattler2025,CavalloSattler2026}; the
cleanest positive evidence is homotopy canonicity
\citep{CHS2022,KapulkinSattler2019,BocquetRezk2023}. In every one of these
senses, the full question is (at the time of writing) open, to the knowledge of the current author.
Our developments in the previous sections are not sufficient to settle any non-artificial version of this conjecture, but they may suggest one route together with partial results.


\paragraph{Representing structural extensions as axiomatic extensions.}
The strategy suggested by the results of this paper is an instance of a general approach that we presented at TYPES 2025~\citep{KdJ_representing}:
use 2LTT to represent \emph{structural} extensions of type theories as
\emph{axiomatic} ones. An axiomatic extension adds constants and axioms
to a theory; for example, univalence is added to MLTT to obtain HoTT as in the book \citep{HoTTBook}. It
is comparatively harmless: intuition, results, and proof assistants
transfer directly. A structural extension instead adds new judgments,
context layers, or judgmental equalities. For example,
cubical type theories extend MLTT with an
interval, a face lattice, and judgmental laws of Glue and
composition, and simplicial type theory \citep{RS2017} extends HoTT with
shape context layers.
Structural extensions may require new implementations, for example cubical Agda~\citep{CubicalAgda} or
Rzk~\citep{rzk}, and new meta-theory. The extension of HoTT to
2LTT is itself structural, but harmless in the sense discussed in the introduction;
most importantly, thanks to the known
conservativity results \citep{2LTT,andras-staging,BocquetThesis}, we know that the fibrant fragment
proves the same internal theorems.
The point of the mentioned general approach is that a
structural extension of HoTT may become an axiomatic
extension within 2LTT, where it can be studied and compared with existing
tools, for example using Agda's \texttt{--two-level} flag
\citep{agdareadthedocs}, implemented without a custom proof assistant.

\paragraph{Representing a cubical type theory and book HoTT in 2LTT\@.}
A first subtlety is that, a priori, most cubical type theories are not actual extensions of book HoTT (i.e., do not model book HoTT) because the latter comes with a strict $J_\beta$ rule for the eliminator of the identity type.
In contrast, in cubical type theories, the Path-type
eliminator generally satisfies $J_\beta$ only weakly.
Swan's identity types \citep{Swan2016}, adopted by CCHM~\citep[\S9.1]{CCHM2018}, fix this mismatch, by additionally ``marking'' where a path is constant.
The effect is that the total space of Swan's identity type at type $Y$, i.e.\ $\Sigma(y_0, y_1: Y), y_0 =_\mathsf{swan} y_1$, is generally not isomorphic to function types $I \to Y$.
This motivates our seemingly weak formulation of path intervals (\cref{def:path-interval}); Swan's identity types model this definition by sending a cubical line to the same line with the empty face marker.%
\footnote{The endpoints are immediate. Naturality is not, because the operation just described is the \emph{direct} action of $h$ on marked lines, whereas $\mathrm{ap}_h$ is defined by identity elimination; the two agree by identity elimination on the marked line, both sending $\mathsf{refl}$ (the degenerate line carrying the total face) to $\mathsf{refl}$, using that Swan's $J$ computes strictly on $\mathsf{refl}$. Applying the resulting inner equality at the line with the empty marker is clause~2 of \cref{def:path-interval}.}
This allows us to view a version of cubical type theory as an extension of book HoTT.%
\footnote{Note that the type theory presented in \cite{HoTTBook} (and also Rijke's book \cite{Rijke_2025}) lacks the judgmental $\eta$-rule for the unit type and $\Sigma$-types, which is assumed here. Proof assistants generally support $\eta$ for unit and record types, and $\Sigma$ is often implemented as a record type.
What we call ``book HoTT'' here is thus closer to what Agda and other proof assistants implement than to the \emph{actual} theories of the book(s). Conservativity of these judgmental $\eta$-rules should be studied and established independently of the comparison of book HoTT and cubical type theories. Some results along these lines can be found in \cite{Isaev2018Morita} and \cite{Bocquet2020,Bocquet2023}.}

\newcommand{\axUA}{\mathsf{(ua)}}
\newcommand{\axInt}{\mathsf{(int)}}
\newcommand{\axGlue}{\mathsf{(glue)}}

\begin{definition}[type theories represented within 2LTT]\label{def:repres-tts}
	Let $\bbMLTT$ be the two-level type theory of
	\citet{2LTT}, instantiated as follows.
	\begin{enumerate}[noitemsep]
		\item \emph{Inner level:} MLTT with $\Pi$, $\Sigma$, binary sums,
		empty and unit type, natural numbers (and, more generally, $W$-types), identity types, and a
		cumulative hierarchy of universes $\UU_0:\UU_1:\cdots$.
		The $\beta$-rules hold judgmentally and, in addition, $\Pi$, $\Sigma$, and unit type have judgmental $\eta$.
		Function extensionality is assumed as a postulate.
		\item \emph{Outer level:} The outer level is extensional MLTT (i.e.\ with the equality reflection rule that identifies the equality type and judgmental equality) with the analogous components. Note that UIP and function extensionality are derivable in extensional MLTT.
		\item For the conversion function between inner and outer level, we assume axioms (T1) and (T2) of \cite{2LTT} (i.e., the conversion preserves $\Sigma$, $\Pi$, and unit).
	\end{enumerate}
	Within this setting, we consider the following additional assumptions for all inner universes $\UU$, each phrased as a postulate on either the inner or the outer level:
	\begin{itemize}[noitemsep]
		\item[$\axUA$] the universe $\UU$ is univalent, phrased as a postulate on the inner level;
		\item[$\axGlue$] for every cofibration $i:\Phi\cofib\Gamma$ with $\Phi$ cofibrant, there is a Glue structure for $(i,\UU)$ (\cref{def:glue-structure}), phrased as a postulate on the outer level;
		\item[$\axInt$] there is a path interval $(I,i_\partial)$ (\cref{def:path-interval}), again phrased as a postulate on the outer level.
	\end{itemize}
	Finally, we denote by $\bbHoTT$ the two-level type theory $\bbMLTT$ extended with the univalence axiom, written as $\bbHoTT \defeq \bbMLTT + \axUA$.
	Similarly, we write $\bbCub \defeq \bbMLTT + \axGlue + \axInt$.
\end{definition}
Note that, while in all earlier sections we only assumed the outer level to satisfy UIP and function extensionality, \cref{def:repres-tts} assumes it to have equality reflection.
This makes little mathematical difference on paper due to the conservativity of extensional MLTT over intensional MLTT with UIP and function extensionality~\cite{Hofmann1995, Oury2005,Winterhalter2020}; the conservativity result is made precise in this setting in~\cite[Prop.~2.19]{2LTT}.
This makes the fibrant fragment of $\bbHoTT$ a good approximation of book HoTT, while $\bbCub$ is still merely a ``toy version'' of a cubical type theory.
The assumption of equality reflection cannot (easily) be replicated in Agda, but all results stay valid in the stronger setting, and it leads to a cleaner formulation as the distinction between strict and judgmental equality disappears.
In this situation \cref{thm:glue-sandwich} shows that, over $\bbMLTT + \axInt$, the axioms $\axUA$ and $\axGlue$ are inter-derivable; in other words:

\begin{theorem}\label{thm:conclusions}
	The internal theories $\bbHoTT + \axInt$ and $\bbCub$ are mutually interpretable by translations that are the identity on the common signature, and they prove the same judgments in the language of $\bbMLTT + \axInt$.
	\qed
\end{theorem}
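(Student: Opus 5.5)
The plan is to show that, over the common base $\bbMLTT+\axInt$, each of the axioms $\axUA$ and $\axGlue$ is derivable from the other. Once that is established, the translations can be taken to be the identity on the common signature $\bbMLTT+\axInt$. Each extra postulate is then interpreted by a closed term of the other theory. Mutual interpretability and equality of the judgments in the common language follow from a standard logical-framework fact. Adding a postulate to a theory that already derives an inhabitant of its type is a conservative, definitional extension: every derivation using the postulate can be transported by substituting the derived term for the postulated constant. So each theory proves everything the other proves in the shared language.

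The first direction interprets $\axGlue$ in $\bbHoTT+\axInt$. Let $i:\Phi\cofib\Gamma$ be a cofibration with $\Phi$ cofibrant, and let $\UU$ be an inner universe. By \cref{thm:ua-implies-strong-glue}, univalence of $\UU$ makes $\mathsf{GlueData}_{i,\UU}(A,T,e)$ contractible for every input triple. By \cref{lem:glue-strength-chain}(1), taking centres of contraction gives a Glue structure for $(i,\UU)$. This proof is uniform in $i$ and in the input, so it yields an outer-level term of the type of $\axGlue$. The cofibrancy hypothesis on $\Phi$ is not even needed here.

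The second direction interprets $\axUA$ in $\bbCub$. From $\axInt$ we obtain a path interval $(I,i_\partial)$. Since $\partial I=\one+^{\mathrm{s}}\one$ is finite, it is cofibrant, so $\axGlue$ applies to $i_\partial$ and gives a Glue structure for $(i_\partial,\UU)$. \Cref{lem:glue-strength-chain}(2) turns this into a weak Glue structure for $(I,0,1,\UU)$. \Cref{thm:weak-glue-implies-ua}, using the line-to-identity clause of the path interval, then yields univalence of $\UU$. This is exactly $(4)\Rightarrow(1)$ of \cref{thm:glue-sandwich}, run for each universe level.

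I expect the main obstacle to be the metatheoretic bookkeeping rather than the mathematics. Two points need care.

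\emph{Change of outer setting.} The earlier results were proved with an outer level satisfying only UIP and funext, but \cref{def:repres-tts} uses extensional MLTT. I would note that every such argument remains valid in the stronger setting, since UIP and funext are derivable there.

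\emph{Internal derivations versus schematic axioms.} The derived terms must be genuine closed terms inside the theory, not schematic meta-arguments. For $\axGlue$, which quantifies over cofibrations as an outer-level predicate, I would check that the proofs are uniform, so that they internalize as a single $\Pi$-term over cofibrations, cofibrancy witnesses and input triples. Universe levels are handled by deriving the postulate separately at each level.

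Given these derived terms, each translation sends the postulated constants to them and is the identity elsewhere, which proves the claim.
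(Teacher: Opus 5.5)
Your proposal is correct and follows the paper's route. The paper obtains the theorem directly from the cycle of \cref{thm:glue-sandwich}: $\axUA\Rightarrow\axGlue$ goes through \cref{thm:ua-implies-strong-glue} and \cref{lem:glue-strength-chain}(1), and $\axGlue\Rightarrow\axUA$ goes through instantiation at the cofibrant $\partial I$, \cref{lem:glue-strength-chain}(2) and \cref{thm:weak-glue-implies-ua}. The paper leaves implicit the metatheoretic bookkeeping that you spell out, namely substituting derived terms for postulates and passing to the extensional outer level.
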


\cref{thm:conclusions} is at best a ``toy comparison''
and merely a first step toward the open problem of connecting book HoTT
and cubical type theories. We nevertheless see it as a proof of concept
for the method of this paper: 
Thanks to the known conservativity results of 2LTT, it is reasonable to hope that the representations of type theories in 2LTT are faithful. Once this is done, the core of the comparison, namely the
interchangeability of gluing and univalence, can be phrased internally, proved by internal reasoning and machine-checked in Agda.
Of course, there are more structural differences between the involved type theories than the discussed ones, and the question whether the unwanted assumption $\axInt$ on the $\bbHoTT$ side can be discharged remains open for now.
These questions connect the problem to the active work on conservativity by Bocquet~\citep{rafael_framework}.
Extension types, which the setting of the current paper provides for free, supply a calculus that may make such comparisons manageable, and our hope is that they prove useful for other structural extensions of type theories.

\section*{Acknowledgments}
I thank Joshua Chen, Sti\'ephen Pradal, and Tom de Jong for listening to some of the ideas of this paper and offering feedback during a series of seminar meetings in 2023/24.
Further, I thank Christian Sattler and Jonas H\"ofer for a discussion on cofibrations, Glue, and especially the pointwise formulas (cf.\ the end of \cref{sec:gluing}) at the Institut Mittag-Leffler in 2026.

\bibliographystyle{plainnat}
\bibliography{references}
\end{document}